\newcommand{\col}{\operatorname{col}}
\newcommand{\diag}{\operatorname{diag}}
\newtheorem{assumption}{Assumption}
\newtheorem{lemma}{Lemma}
\newtheorem{theorem}{Theorem}
\newtheorem{definition}{Definition}
\newtheorem{remark}{Remark}
\newtheorem{corollary}{Corollary}
\def\BibTeX{{\rm B\kern-.05em{\sc i\kern-.025em b}\kern-.08em
		T\kern-.1667em\lower.7ex\hbox{E}\kern-.125emX}}
\def\mathbi#1{\textbf{\em #1}}
\begin{document}
	\title{Distributed Kalman Filter with Ultimately Accurate Fused Measurement Covariance}
	\author{Tuo Yang, Jiachen Qian, Zhisheng Duan*, \textit{Senior Member, IEEE}, Zhiyong Sun, \textit{Member, IEEE}
		\thanks{This work was supported by the National Natural Science Foundation of China under grants T2121002, U24A20266, 62173006. ($*Corresponding\;author: Zhisheng\;Duan)$.}
		
		\thanks{Tuo Yang, Jiachen Qian, Zhisheng Duan and Zhiyong Sun are with State Key Laboratory for Turbulence and Complex Systems, Department of Mechanics and Engineering Science, College of Engineering, Peking University, Beijing 100871, China. E-mails: duanzs@pku.edu.cn (Z. Duan), yangtuo@pku.edu.cn (T. Yang), jcq@pku.edu.cn (J. Qian), zhiyong.sun@pku.edu.cn (Z. Sun). } 
		}
	
	\maketitle
	\begin{abstract}
		This paper investigates the distributed Kalman filter (DKF) for linear systems, with specific attention on measurement fusion, which is a typical way of information sharing and is vital for enhancing stability and improving estimation accuracy. We show that it is the mismatch between the fused measurement and the fused covariance that leads to performance degradation or inconsistency in previous consensus-based DKF algorithms. To address this issue, we introduce two fully distributed approaches for calculating the exact covariance of the fused measurements, building upon which the modified DKF algorithms are proposed. 
		Moreover, the performance analysis of the modified algorithms is also provided under rather mild conditions, including the steady-state value of the estimation error covariance. We also show that due to the guaranteed consistency in the modified DKF algorithms, the steady-state estimation accuracy is significantly improved compared to classical DKF algorithms.
	 Numerical experiments are carried out to validate the theoretical analysis and show the advantages of the proposed methods.
	\end{abstract}
	\begin{IEEEkeywords}
		 Distributed estimation,  Consensus filters, Information fusion, Random approximation
	\end{IEEEkeywords}

	\section{Introduction}
	
	Due to notable advancements in low-cost wireless sensors in recent years, the deployment of wireless sensor networks (WSNs) has become highly cost-effective. This brings immense potential into applications of WSNs, such as cooperative target tracking\cite{jia2016cooperative} and large-scale environmental monitoring\cite{xie2012fully}. Consequently, there is a growing concern regarding the state estimation problem within such networks.
	
	To provide a fused estimate,  a straightforward architecture is the centralized way, where the filtering process requires an information fusion center to collect all the measurements of the whole sensor network. Through stacking all the measurements  into one augmented term and performing the classic Kalman filter, the optimal estimate of the target system state can be obtained. This framework is called centralized Kalman filter (CKF){\cite{anderson1979optimal}}. However, the centralized framework relies heavily on the capability of the fusion center, which may suffer from channel congestion and node failure, especially when the number of nodes in the network is large. In contrast, the distributed filtering framework, where each sensor node is designed as a local fusion center to process local measurement and neighbouring information,  shows superior flexibility and robustness to the centralized framework. The design of a distributed state estimator primarily consists of two aspects, local signal processing and neighbouring information fusion. The local signal processing applies classical estimating techniques, such as moving-horizon estimation (MHE)\cite{farina2010distributed,battistelli2018distributed}, Kalman  filter\cite{olfati2005distributed,olfati2007distributed,olfati2009kalman,ryu2023consensus,battistelli2014consensus,battistelli2014kullback,battistelli2016stability,battilotti2021stability,sun2004multi,zhang2018sequential,kar2010gossip,qin2020randomized,cattivelli2010diffusion,hu2011diffusion,he2018consistent,he2019distributed,wang2017convergence,qian2021fully,qian2022consensus,qian2022harmonic,he2020distributed,kamgarpour2008convergence,kamal2011generalized,kamalTAC2013,thia2013distributed,wu2018distributed,battistelli2018distributedtriggered}, etc. The information fusion protocol considers what kind of information to exchange within the network, such as estimates\cite{olfati2009kalman}, measurements\cite{olfati2005distributed}, information vector\cite{battistelli2014kullback} and the covariance matrices. Besides, the methods of node interaction are also important to investigate, with 
 	typical ways including sequential fusion\cite{sun2004multi,zhang2018sequential}, gossip\cite{kar2010gossip,qin2020randomized}, diffusion\cite{cattivelli2010diffusion,hu2011diffusion}, and average consensus\cite{olfati2005distributed,olfati2007distributed,olfati2009kalman,ryu2023consensus,battistelli2014consensus,battistelli2014kullback,battistelli2016stability
 		,battilotti2021stability,he2018consistent,he2019distributed,wang2017convergence,qian2021fully,qian2022consensus,qian2022harmonic,duanpeihu2022distributed,dong2022consensus,he2020distributed,kamgarpour2008convergence,kamal2011generalized,kamalTAC2013,thia2013distributed,wu2018distributed,battistelli2018distributedtriggered}. Other topics on the distributed estimation involve correlated noise\cite{duanpeihu2022distributed}, unknown covariance\cite{dong2022consensus}, state constraints\cite{he2019distributed},  time-varying communication topology\cite{wang2017convergence}, event-triggered mechanism\cite{battistelli2018distributedtriggered,qian2021fully}, distributed EKF\cite{battistelli2016stability}, quantized communications\cite{he2020distributed}, etc.
	
	In this paper, we mainly focus on the topic of consensus-based distributed Kalman filter, which has recently generated great research interest due to the excellent performance of the Kalman filter in linear system state estimation and the effectiveness of the average consensus technique in gathering global information in a fully distributed way.  
	
	Early work in this field can be tracked back to consensus-on-measurement algorithms (CM)\cite{olfati2005distributed, olfati2007distributed}, where average consensus technique on  local normalized measurements and  covariance is performed in order to approximate the CKF. However, CM suffers the disadvantage that it requires multiple communication steps between successive sampling instants to guarantee the stability of the filter. 	
	In \cite{olfati2007distributed}, the Kalman consensus filter (KCF) is introduced, focusing on reaching consensus on the estimates. It is shown in \cite{kamgarpour2008convergence} that KCF converges to the CKF when consensus is achieved. The generalized Kalman consensus filter (GKCF) \cite{kamal2011generalized} further refines the fusion by weighting each estimate according to its covariance. In the information consensus filter (ICF) \cite{kamalTAC2013}, the fusion involves the information vector which combines the prior estimate and the current measurement.
	The consensus-on-information method (CI) is proposed in \cite{battistelli2014kullback} by performing consensus on the local prior probability density functions, which shows the equivalency to traditional covariance intersection technique\cite{julier1997non} when the scenario is reduced to linear Gaussian system. The difference between ICF and CI lies in the design of the consensus protocol. The CI methods are proved to be stable with only single communication step in each iteration, and the consistency (the actual error covariance can be upper bounded by the locally estimated error covariance) of the estimation is also guaranteed. However, the process of information fusion induces correlations among the estimates from various sensor nodes, making it challenging to evaluate and optimize its performance in a fully distributed way.  
	
	With the above discussion, it is meaningful  to facilitate the application of distributed filtering technique with the  development of essential theories for analyzing and optimizing the performance. Regarding this point, Qian et al.\cite{qian2022consensus} provide a steady-state performance analysis of CM and demonstrate the exponential convergence of the performance gap between CM and CKF with the increase of fusion step in each iteration, by comparing the solution of discrete algebraic Riccati equation (DARE) with different measurement covariance matrices. Besides, Qian et al. \cite{qian2022harmonic} formulate the harmonic coupled Riccati equation (HCRE) from the CI algorithm and presented the closed-form steady-state error covariance for CI through solving the corresponding HCRE. From the perspective of performance optimization,  Battistelli et al.\cite{battistelli2014consensus} introduce a gain into the fused measurement in CI to compensate for the conservatism  induced by matrix intersection technique, and this variant is referred to as HCMCI (Hybrid Consensus on Measurements—Consensus on Information). He et al. \cite{he2018consistent} design adaptive consensus weights to optimize the bound of error covariance. Noack et al. \cite{noack2017decentralized} propose the inverse covariance intersection method to reduce the conservatism induced by covariance intersection. Battilotti el al. \cite{battilotti2021stability} deduce two distributed filters using the steady-state Kalman gain, which possess one-step stability and optimality guarantee with sufficient fusion. Thia et al.\cite{thia2013distributed} and Wu et al. \cite{wu2018distributed} exploit finite-time/minimum-time consensus technique to fuse the measurements, recovering the estimate of CKF when the number of consensus steps per iteration exceeds a certain threshold.

	Nonetheless, previous works \cite{qian2022consensus,qian2022harmonic,battistelli2014consensus,he2018consistent,noack2017decentralized,battilotti2021stability} mainly focus on directly analyzing the filtering performance and compensating the conservatism of the algorithm, while overlooking the structural defects that lead to the occurrence of conservatism and inconsistency. Thus the existing modifications, such as adding new weight\cite{battistelli2014consensus} or using steady-state gains\cite{battilotti2021stability}, can only slightly reduce the performance degradation to a limited extent, while the information fusion step remains to be non-optimal. What’s worse, such modifications need an offline computation phase to obtain additional global information compared with the original methods, which is undesirable as well. In general, we expect the proposed filter to exhibit the following properties: 1) it requires only the collective observability, 2) its performance converges to CKF as the number of fusion steps between successive sampling instants increases, 3) limited reliance on global information, and 4) each node can evaluate the accuracy of its own estimate.

	Motivated by the above discussions, this paper aims to investigate the fundamental reasons for conservatism and inconsistency in consensus-based DKFs, and aim to propose a novel mechanism to overcome these drawbacks. The contributions of this paper can be summarized as follows:

	 1) We  identify that the cause of performance degradation in the CM\cite{olfati2005distributed} and CI\cite{battistelli2014kullback} stems from the utilization of  approximated covariance in measurement fusion. To address this, we formulate a closed-form expression of the actual fused measurement covariance, which is in a quadratic weighted sum (QWS) form. Using the average consensus technique, we propose two novel methods for calculating QWS, and derive explicit convergence rate for both methods. These approaches can be readily applied to other scenarios where distributed computation of QWS is needed.
	 
	 2)The Modified CM/CI algorithms are also derived, based on an online calculation of the accurate fused measurement covariance. Compared with the original methods, the modified ones demonstrate superior performance in steady-state estimation accuracy and offer additional benefits such as smaller and consistent steady-state covariance bound, asymptotic optimality with respect to number of consensus steps per iteration, no need of network size, etc. These methods provide a paradigm for addressing flawed measurement fusion, offering valuable insights and useful tools for the structural optimization of DKF.
	
	 3) To guarantee the stability of the modified algorithms, we prove the convergence of the discrete algebraic Riccati recursion (DARR) and harmonic-coupled Riccati recursion (HCRR) where the parameter $R$ (measurement covariance matrix) is asymptotically convergent, under only mild conditions of connected graph and collective observability. Relevant analysis not only enriches the DARE/HCRE theory, but also provides a pattern when dealing with similar problems such as transferring a steady-state DKF from offline  parameter computation to online setup.
	
	The remainder of this paper is organized as follows. Section~II analyzes the performance degradation problem in the CM, CI and HCMCI, which is induced by mismatched fused covariance.
	Section~III proposes the direct and stochastic methods for computing the QWS and discusses the convergence rate. Section~IV proposes the Modified CM and Modified CI algorithms, along with stability and performance analysis. Section~V provides  numerical experiments that validate the theoretical analysis of the proposed methods and provide comparisons with other algorithms. Section~VI presents the conclusions and discusses potential directions for future work.
	
	Notation: The set of positive integers is denoted as \(\mathbb{Z}^+\), and  \(\mathbb{Z}_0^+ = \mathbb{Z}^+\cup\{0\} \). For $k\in \mathbb{Z}^+$, we use $\underline{k}$ to denote the set $\{1,2,\dots, k\}$.
	A positive semidefinite  matrix \(P\in \mathbb{R}^{n\times n}\) is denoted as \( P\ge 0\), and $P\ge Q$ means $P-Q\ge 0$. Correspondingly, \(P>0\) and $P > Q$ represent the strictly positive definite cases. Let $X_i \in \mathbb{R}^{m\times n} $ be arbitrary matrices, then
	$\col_{i=1}^N(X_i) \in \mathbb{R}^{Nm\times n}$ represents the matrix formed by stacking  all $X_i $ into a single column, {$\diag_{i=1}^N(X_i)\in \mathbb{R}^{Nm \times Nn}$} denotes the matrix created by placing $X_i$ along the diagonal, and $X_i^\dagger$ represents the Moore-Penrose inverse. 
	 $\mathbb{E}_{\{w_i\}}[x_i]$ stands for the expectation of $x_i$ with respect to $w_i$. We use $x \sim \mathbf{N}(\mu, P_0)$ to denote that $x$ follows a multivariate normal distribution with mean $\mu$ and covariance $P_0>0$. Let $P_1=WI_rW^T\ge 0$ and $\text{rank}(P_1)=r$, then $x\sim \mathbf{N}_d(\mu, P_1)$ denotes the degenerate multivariate normal distribution, i.e.,  $W^\dagger x\sim \mathbf{N}(W^\dagger\mu, I_r)$. $\delta_{ik}$ is the Kronecker delta, and $\otimes$ is the Kronecker product. $[x]_i$ denotes the $i$-th element of a vector $x$, and $[X]_{i,j}$ the $(i,j)$-th of a matrix $X$. For $\mathcal{L}\in\mathbb{R}^{n\times n}, l_{ij}^{(\gamma)}$ stands for the $(i,j)$-th element of $\mathcal{L}^\gamma$ ($\mathcal{L}$ to the power of $\gamma$).
	 
	\section{Problem Formulation}
	\subsection{Graph Theory}	
	The communication topology of the sensor network is defined by a triple \(\mathcal{G}= \left(\mathcal{N},\mathcal{E},\mathcal{L}\right)\), where \(\mathcal{N}=\{1,2,\dots,N\}\) is the vertices set,
	 \(\mathcal{E}\subseteq \mathcal{N}\times\mathcal{N}\) is the edge set, and \(\mathcal{L}=\left[l_{ij}\right]_{N\times N}\) is the adjacency matrix, assigning a weight to each edge in the network. Specifically, $l_{ii}>0$; \(l_{ij} > 0\) if \((i,j) \in \mathcal{E}\), in which case we call \(j\) an in-neighbour of \(i\) and denote it as $j \in \mathcal{N}_i$. Otherwise, \(l_{ij}=0\).
	 
	A graph is called undirected if $(i, j)\in \mathcal{E}$ implies $(j, i)\in \mathcal{E}$. For undirected graphs, we can design the weight matrix to be doubly stochastic, i.e., \(\boldsymbol{1}_N^T\mathcal{L}=\boldsymbol{1}_N^T\) and \(\mathcal{L}\boldsymbol{1}_N=\boldsymbol{1}_N\), with all the eigenvalues of $\mathcal{L}$ being real and having absolute values no larger than $1$\cite{xiao2005scheme}. An undirected graph is connected if, for every pair of nodes in $\mathcal{N}$, there exists a path that connects them. This is equivalent to the graph having a spanning tree. For a connected graph, $1$ is a simple eigenvalue of $\mathcal{L}$\cite{qian2022consensus}.
	
	\subsection{System Model}
	In this paper, we consider the state estimation problem of discrete linear time-invariant (LTI) system, which can be modeled as
	\begin{equation*}
		x_{k+1} = Ax_k + w_k.
	\end{equation*}
	Here  \(x_k\in \mathbb{R}^{n}\) represents the system state, \(A \in \mathbb{R}^{n\times n}\)  is the state-transition matrix, and \(w_k \in \mathbb{R}^n\) denotes the process disturbance.
	
	The observation model is formulated as
	\begin{equation}
		\label{eq:observation_model}
		y_{i,k}= C_i x_k +v_{i,k},
	\end{equation}
	where \(y_{i,k} \in \mathbb{R}^{m_i}, C_i \in \mathbb{R}^{m_i\times n}, v_{i,k} \in \mathbb{R}^{m_i}\)  are the measurement output, observation matrix and measurement noise of the $i$-th sensor node, respectively.
	
	The following conditions are assumed to hold throughout this paper.
	
	\begin{assumption}
		\label{assumption:sec2 noise model}
		$\forall i, j \in \mathcal{N}$, $\forall h, k \in \mathbb{Z}_0^+$, $\forall s,t \in\mathbb{Z}^+$,
		\begin{enumerate}[1)]
			\item  $x_0 \sim \mathbf{N}(\hat{x}_0, P_0)$, $w_k\sim \mathbf{N}(\boldsymbol{0}, Q)$, $v_{i,t} \sim \mathbf{N}(\boldsymbol{0}, R_i)$,  with $ P_0, Q, R_i>0$.
			\item $x_0, w_{k}, v_{i,t}$ are mutually independent, $w_k$, $v_{i,t}$ are white process, i.e.,  \\
			\begin{equation*}
				\mathbb{E}\left[
				\begin{bmatrix}
					w_k\\ v_{i,t} \\x- \hat{x}_0
				\end{bmatrix}
				\begin{bmatrix}
					w_h& v_{j, s}
				\end{bmatrix}
				\right] = \begin{bmatrix}
					\delta_{kh} Q & \boldsymbol{O}\\
					\boldsymbol{O} & \delta_{ij}\delta_{ts} R_i\\
					\boldsymbol{O} & \boldsymbol{O}
				\end{bmatrix}.
			\end{equation*}
		\end{enumerate}
	\end{assumption}
	
	\begin{assumption}
		\label{assumption: topology}
		The communication topology is a connected undirected graph.
	\end{assumption}
	
	\begin{assumption}
		\label{assumption: collective observability}
		The system pair $(A,\col_{i=1}^N(C_i))$ is  observable.
	\end{assumption}
	
	These assumptions are mild and essential in the context of consensus-based DKFs \cite{battistelli2014kullback}.
	Generally speaking, Assumption~\ref{assumption:sec2 noise model} states the process disturbance and observation noise are independent, which is reasonable in most practical scenarios. This assumption facilitates the  calculation of the posterior covariance. Assumption~\ref{assumption: topology} is to ensure that each node is able to acquire the global information through sufficient communication.
	Assumption~\ref{assumption: collective observability} refers to the collective observability of the sensor network, which is more relaxed than requiring observability at each individual node. This assumption is closely related to the stability of DKFs.

	\subsection{Review of the CKF Algorithm}
	With an information fusion center that receives the observations from all sensors, the standard Kalman filter provides the optimal state estimate with minimum error covariance. This kind of filtering structure is called CKF and can be summarized in  the following two steps, where $\hat{x}_{k|k-1},\hat{x}_{k|k}$ denote the prior and posterior estimate,  and $P_{k|k-1}, P_{k|k}$ represent the corresponding error covariance matrices, respectively.
	
	Prediction:
	\begin{equation}
		\label{eq: sec2 CKF predict}
		\begin{aligned}
			P_{k|k-1} &= AP_{k-1|k-1}A^T +Q,\\
			\hat{x}_{k|k-1} &= A \hat{x}_{k-1|k-1}.
		\end{aligned}
	\end{equation}
	
	Measurement update:
	\begin{equation}
		\label{eq:sec2 CKF measure update}
		\begin{aligned}
			P_{k|k} &  \textstyle= (P_{k|k-1}^{-1}+\sum_{j=1}^N C_j^T R_j^{-1} C_j),\\
			\hat{x}_{k|k} & \textstyle= P_{k|k} (P_{k|k-1}^{-1}\hat{x}_{k|k-1} + \sum_{j=1}^N C_j^T R_j^{-1} y_{j,k}).
		\end{aligned}
	\end{equation}	

		\subsection{The Optimal Measurement Update with Accurate Fused Measurement Covariance}
	In this section, the fused measurement and its exact covariance is introduced, with which the optimal measurement update is derived.
	
	To extend the CKF to a fully distributed manner, a natural way is to consider employing the average consensus technique to calculate the global information terms $\sum_{j=1}^N C_j^T R_j^{-1} C_j$ and $\sum_{j=1}^N C_j^T R_j^{-1} y_{j,k}$. Specifically, each node communicates with its neighbors and performs information fusion using the following formula.

	Initialization:
	\begin{equation}
		\label{Sec2C_eq3y0}
		\begin{aligned}
		&\tilde{y}_{i,k}^{(0)} :=   C_i^T R_i^{-1} y_{i,k}, &\tilde{C}_{i}^{(0)} :=   C_i^T R_i^{-1} C_i.
		\end{aligned}
	\end{equation}
	
	Information fusion:
	\begin{equation}
		\label{Sec2C_eq4yfusion}
		\begin{aligned}
			& \textstyle\tilde{y}_{i,k}^{(m)} = \sum_{j=1}^{N}l_{ij}\tilde{y}_{j,k}^{(m-1)}, 
			& \textstyle \tilde{C}_{i}^{(m)} = \sum_{j=1}^{N}l_{ij}\tilde{C}_{j}^{(m-1)},
		\end{aligned}
	\end{equation}
	where $m = \underline{\gamma}$  indicates the number of fusion iterations the corresponding quantity has undergone, and $\gamma$ represents the maximum fusion steps available between two sampling instants. We call $\tilde{y}_{i,k}^{(m)}$ the fused measurement and $\tilde{C}_{i}^{(m)}$ the fused observation matrix, which can be verified that
	\begin{equation*}
		\begin{aligned}
			&\textstyle\tilde{y}_{i,k}^{(\gamma)} = \sum_{j=1}^{N}l_{ij}^{(\gamma)} C_j^T R_j^{-1}y_{j,k}, &\textstyle\tilde{C}_i^{(\gamma)} =  \sum_{j=1}^{N}l_{ij}^{(\gamma)}C_j^T R_j^{-1}C_j, 
		\end{aligned}	
	\end{equation*}
	where  $l_{ij}^{(\gamma)}$ is the $(i,j)$-th element of $\mathcal{L}^\gamma$.
	
	The average consensus technique guarantees that as the number of consensus steps per iteration, $\gamma$, tends to infinity, the weights $l_{ij}^{(\gamma)}$ tends to $\frac{1}{N}$(\cite[Lemma~1]{qian2022consensus}). In this way, each node obtains \(\frac{1}{N}\sum_{j=1}^N C_j^T R_j^{-1} C_j\) and \(\frac{1}{N}\sum_{j=1}^N C_j^T R_j^{-1} y_{j,k}\). With the knowledge of the network size N, each node can locally perform \eqref{eq: sec2 CKF predict} and \eqref{eq:sec2 CKF measure update} to obtain estimate numerically identical to those of the optimal CKF algorithm.
	
	However, since the maximum number of consensus steps per iteration is typically limited in practice,  the case where \(\gamma\) remains finite and constant is of greater  practical relevance. 
	Once \(\gamma\) rounds of fusion have been completed, the fused information at each node is given by \( \tilde{y}_{i,k}^{(\gamma)}\) and \( \tilde{C}_{i,k}^{(\gamma)}\).
	 To simplify notation, we omit the superscript \((\gamma)\) here, writing \(\tilde{y}_{i,k} := \tilde{y}_{i,k}^{(\gamma)}\) and \(\tilde{C}_{i} := \tilde{C}_{i,k}^{(\gamma)}\). Exploiting the observation model \eqref{eq:observation_model}, we have
	 \begin{equation}
	 	\label{eq:fused measurement model}
	 	\tilde{y}_{i,k} = \tilde{C}_i x + \tilde{v}_{i,k},
	 \end{equation}
	 where \(\tilde{v}_{i,k} := \sum_{j=1}^{N} l_{ij}^{(\gamma)} C_j^T R_j^{-1} v_{j,k}\) represents the fused measurement noise. 
	 Since the weighted sum of multiple independent Gaussian random variables is still a Gaussian random variable\cite{johnson2002applied}, we can conclude that $\tilde{v}_{i,k}\sim \mathbf{N}_d(0,\tilde{R}_i)$, where 
	 \begin{equation}
	 	\textstyle\label{eq:covariance of  fuesed measurement}
	 	\tilde{R}_i := \mathbb{E}[\tilde{v}_{i,k}\tilde{v}_{i,k}^T] = \sum_{j=1}^{N}(l_{ij}^{(\gamma)})^2C_j^T R_j^{-1}C_j 
	 \end{equation}
	  is the exact covariance matrix of the fused measurement noise.
	
	In light of \eqref{eq:fused measurement model}, $\tilde{y}_{i,k} $ can be regarded as the measurement output obtained from one single virtual sensor, of which the observation matrix is  $\tilde{C}_i$, and measurement noise is $\tilde{v}_{i,k}$. By Assumption~\ref{assumption:sec2 noise model},  $w_k$ and $\tilde{v}_{i,k}$  are independent random variables. Therefore, given the prior estimate $\hat{x}_{i,k}^-$ satisfying $x_{k}\sim \mathbf{N}(\hat{x}_{i,k}^-, P_{i,k}^-)$ and $\mathbb{E}[(x_k-\hat{x}_{i,k}^-)\tilde{v}_{i,k}^T] = \boldsymbol{O}$, the optimal posterior estimate $\hat{x}_{i,k}^+$ and covariance $P_{i,k}^+$, in the sense of minimum of variance\cite{anderson1979optimal}, is 
	\begin{equation}
		\label{eq:minimum variance estimator}
		\begin{aligned}
			P_{i,k}^+ &= \left((P_{i,k}^{- })^{-1}+\tilde{C}_i^T \tilde{R}_i^{\dagger}\tilde{C}_i\right)^{-1},\\
			\hat{x}_{i,k}^+ &= P_{i,k}^+ \left((P_{i,k}^{- })^{-1}\hat{x}_{i,k}^- + \tilde{C}_i^T \tilde{R}_i^{\dagger}\tilde{y}_{i,k}\right),
		\end{aligned}
	\end{equation}
	where the pseudoinverse is used since $\tilde{R}_i$ might not be invertible.
	
	\subsection{Mismatched Covariance in Existing Algorithms}
		In this section, we will provide a brief review of the CM, CI and HCMCI, and discuss how the mismatched covariance matrices induce performance degradation. In all three algorithms, the prediction step is the same as \eqref{eq: sec2 CKF predict}, so we will only discuss the differences in the measurement update step.
		
		In CM\cite{qian2022consensus},  the measurement update step is
		\begin{equation*}
			\begin{aligned}
				P_{i,k|k} &\textstyle= \left(P_{i,k|k-1}^{-1}+ \sum_{j=1}^{N}l_{ij}^{(\gamma)}N C_j^T R_j^{-1}C_{j}\right)^{-1},\\
				\hat{x}_{i,k|k} &\textstyle= P_{i,k|k} \left(P_{i,k|k-1}^{-1}\hat{x}_{i,k|k-1} + 	\sum_{j=1}^{N}l_{ij}^{(\gamma)}N C_j^T R_j^{-1}y_{j,k} \right).
			\end{aligned}
		\end{equation*}
		By comparing with \eqref{eq:minimum variance estimator}, we can find that CM is numerically equivalent to using the approximation of
			$\tilde{R}_i \approx \frac{1}{N}\tilde{C}_i$.
		In most scenarios this approximation may be inaccurate, leading to a degradation in estimation accuracy. Moreover, since $\frac{1}{N}\tilde{C}_i$ may be smaller than $\tilde{R}_i$, this can cause \( P_{i,k|k} \) to be smaller than the actual error covariance,  which is known as inconsistency.
		
	Similar problem occurs in  CI\cite{battistelli2014kullback} as well.  The measurement update step in  CI is
	\begin{equation*}
		\label{eq:sec2 CI upadate}
		\begin{aligned}
			P_{i,k|k} &\textstyle= \left(\sum_{j=1}^{N}l_{ij}^{(\gamma)} (P_{j,k|k-1}^{-1}+ C_j^T R_j^{-1}C_{j})\right)^{-1},\\
			\hat{x}_{i,k|k} &\textstyle= P_{i,k|k} \left( \sum_{j=1}^{N}l_{ij}^{(\gamma)}(P_{j,k|k-1}^{-1}\hat{x}_{j,k|k-1} +  C_j^T R_j^{-1}y_{j,k} )\right).
		\end{aligned}
	\end{equation*}
	Focusing on the fused measurement, we can find that CI is numerically equivalent to using the approximation 
		 $\tilde{R}_i \approx \tilde{C}_i$.	
	Since $\tilde{C}_i \ge \tilde{R}_i$, this approximation is conservative, which helps guarantee the consistency. However, this conservatism also degenerates the performance of CI. Even if $\gamma$ tends to infinity, the gap between $\tilde{C}_i $ and $\tilde{R}_i$ may remain nonzero, and therefore the estimation accuracy of CI cannot approach that of CKF by increasing the number of consensus steps per iteration.

	To overcome this drawback, HCMCI\cite{battistelli2014consensus} brings into a coefficient before the fused measurement, and the measurement update step is modified as
	\begin{equation*}
		\small
		\begin{aligned}
			P_{i,k|k} &\textstyle= \left(\sum_{j=1}^{N}l_{ij}^{(\gamma)} (P_{j,k|k-1}^{-1}+ \omega_{i,k} C_j^T R_j^{-1}C_{j})\right)^{-1},\\
			\hat{x}_{i,k|k} &\textstyle= P_{i,k|k} \left( \sum_{j=1}^{N}l_{ij}^{(\gamma)}(P_{j,k|k-1}^{-1}\hat{x}_{j,k|k-1}+ \omega_{i,k} C_j^T R_j^{-1}y_{j,k})\right).
		\end{aligned}
	\end{equation*}
	If $\omega_{i,k}=N$, it can be derived that HCMCI will converge to CKF as $\gamma$ tends to infinity, but inconsistency will occur when $\gamma$ is finite, as discussed in\cite{battistelli2014consensus}. If  $\omega_{i,k}$ is set to be other static scalars, then the aforementioned asymptotic optimality will not hold for HCMCI.

	\subsection{Problem Formulation}
		From the above discussion, it can be concluded that in order to optimally exploiting the fused measurement $\tilde{y}_{i,k}$, the key task is to obtain the accurate covariance $\tilde{R}_i$.
		However, computing $\tilde{R}_i$ directly from \eqref{eq:covariance of  fuesed measurement} is challenging, as each node requires: 1) the global weight matrix $\mathcal{L}$ to compute the weights $l_{ij}^{(\gamma)}$, and 2)  \( C_j \) and \( R_j \) from all nodes within $\gamma$ steps. To address this, we present two iterative methods to calculate $\tilde{R}_i$ in the next section.
	
		The distributed filtering problem is formulated as follows:
		\begin{enumerate}[1)]
			\item Establish new methods to calculate the accurate covariance matrix $\tilde{R}_i$ in a fully distributed way.
			\item Modify the CM/CI algorithms using the proposed methods.
			\item Provide stability and performance analysis of the modified algorithms.
		\end{enumerate}

	\section{Calculation of the Quadratic Weighted Sum}
		In this section, we propose two methods for estimating \(\tilde{R}_i\): the direct method and the stochastic method. For each method, we present a convergence analysis of its approximation error, including the rate of convergence.

		For brevity, the following QWS is considered:
		$$\tilde{X}_i := \sum_{j=1}^{N}(l_{ij}^{(\gamma)})^2 X_j,$$
		where $X_j$ can be any positive semi-definite matrix. By setting $X_j = C_j^TR_j^{-1}C_j$, we obtain $\tilde{R}_i$. 
		
		\subsection{Direct Method}
			Since $X_i\ge 0$, we can always find $Y_i\in \mathbb{R}^{n\times n}$ satisfying $X_i = Y_i^T Y_i$. Using this decomposition, we can rewrite $\tilde{X}_i$ as 
			\begin{equation}
				\label{eq:sec3 direct decomposition}
			\tilde{X}_i = \bar{Y}_i ^T \bar{Y}_i,
			\end{equation}
			where $\bar{Y}_i:= \col_{i=1}^N(l_{ij}^{(\gamma)}Y_j)\in \mathbb{R}^{Nn\times n}$.
			
			Assume that each node possesses an individual row vector $q_i\in \mathbb{R}^{1\times N}$, satisfying that  $\mathbi{Q}:=\col_{i=1}^N(q_i)$ is of full rank.

			Then we have 
			\begin{equation*}
				(\mathbi{Q}\otimes I_{n}) \left((\mathbi{Q}^T \mathbi{Q})^{-1} \otimes I_{n}\right)(\mathbi{Q}^T\otimes I_{n}) = I_{Nn}.
			\end{equation*}
			
			By introducing this equation into \eqref{eq:sec3 direct decomposition} and denoting 
			$$\tilde{u}_i := \sum_{j=1}^N l_{ij}^{(\gamma)}Y_j^T (q_j\otimes I_n),$$ 
			it follows that
			\begin{equation}
				\label{eq:eq:sec3 direct decomposition2}
				\begin{aligned}
				\bar{Y}_i ^T \bar{Y}_i & = \bar{Y}_i ^T(\mathbi{Q}\otimes I_{n}) \left[(\mathbi{Q}^T \mathbi{Q})^{-1} \otimes I_{n}\right](\mathbi{Q}^T\otimes I_{n})  \bar{Y}_i\\
				&=\tilde{u}_i \left(\sum_{j=1}^{N}q_j^Tq_j\otimes I_n\right)^{-1} \tilde{u}_i^T.
				\end{aligned}
			\end{equation}
			
			Note that the item $\sum_{j=1}^{N}q_j^Tq_j$ can be calculated with well established average consensus technique \cite{xiao2005scheme}
			\begin{equation}
				\label{eq:direct method gamma infty}
				\lim\limits_{k \rightarrow \infty} \sum_{j=1}^{N} N l_{ij}^{(k)}q_j^Tq_j = \sum_{j=1}^{N} q_j^Tq_j,
			\end{equation}
			and $\tilde{u}_{i,k}$ can be directly obtained after $\gamma$ times of fusion, following rules similar to  \eqref{Sec2C_eq3y0}\eqref{Sec2C_eq4yfusion}.
			Therefore, $\tilde{X}_i$ can be obtained.
			
		As discussed above, the row vector $q_i$ plays an important part in this method. If each node can obtain its local $q_i$ and the global decoupled matrix $\sum_{j=1}^{N}q_j^Tq_j$ offline, it can use the decomposition \eqref{eq:eq:sec3 direct decomposition2} to calculate the desired QWS. When offline assignment is not available, a feasible choice of $q_i$ is to let each entry of it be sampled from independent identical normal distributions
			$[q_i]_j\sim \mathbf{N}(0,1), j\in \underline{N}$, 
		then the augmented matrix $\mathbi{Q}$ is full rank with probability 1 \cite{rudelson2010non}. 
		
		It is worth mentioning that when $k$ is small, the invertibility of $ \sum_{j=1}^{N} N l_{ij}^{(k)}q_j^Tq_j$ may not be guaranteed, thus we use the Moore-Penrose inverse instead of matrix inverse in \eqref{eq:eq:sec3 direct decomposition2}. Denote the estimate of $\tilde{R}_i$ as 
		$$\tilde{U}_{i,k} :=  \tilde{u}_i  \left[( \sum_{j=1}^{N}l_{ij}^{(k)}N q_j^Tq_j )\otimes I_n\right]^\dagger   (\tilde{u}_i )^T,$$ we have the following results for the estimation error.
		
		\begin{lemma}
			\label{lemma:direct method error}
			If \rm{$\mathbi{Q}$} is full rank, then for any $\gamma \in \mathbb{Z}^+$,
				\begin{enumerate}[1)]
				\item $\tilde{U}_{i,k} \rightarrow \tilde{X}_{i}, \tilde{U}_{i,k}^\dagger \rightarrow \tilde{X}_{i}^\dagger$, as $k\rightarrow \infty$;
				\item 	$	\| \tilde{U}_{i,k} - \tilde{X}_{i}\|_2 \le \alpha_i \|\tilde{X}_{i}\|_2$, 
			where  $\alpha_i = \max_{j \in \mathcal{J}_i^{(\gamma)} } |\frac{1}{Nl_{ij}^{(k)}}-1|$, $\mathcal{J}_i^{(\gamma)} = \{j\in \mathcal{N}| l_{ij}^{(\gamma)}>0\}$, and $k \ge \gamma$.
			
			98Moreover, let \(\lambda_2\) denote the eigenvalue of \(\mathcal{L}\) with the second largest absolute value, and $k_0\in\mathbb{Z}^+$ satisfy $k_0 > \log_{|\lambda_2|}(\frac{1}{N})$. Then for any $k \ge k_0$, $\alpha_i $ can be set as $\frac{N|\lambda_2|^k}{1-N|\lambda_2|^{k}}$.
			\end{enumerate}
		\end{lemma}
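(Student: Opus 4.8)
The idea is to reduce $\tilde U_{i,k}$ to an explicit nonnegative combination of the matrices $X_j$ and then read off both claims. Recall $\tilde u_i=\bar Y_i^{T}(\mathbi{Q}\otimes I_n)$ and abbreviate $M_{i,k}:=\sum_{j=1}^{N}Nl_{ij}^{(k)}q_j^{T}q_j=\mathbi{Q}^{T}D_{i,k}\mathbi{Q}$ with $D_{i,k}:=\diag_{j=1}^{N}(Nl_{ij}^{(k)})\ge 0$. Using $(M_{i,k}\otimes I_n)^{\dagger}=M_{i,k}^{\dagger}\otimes I_n$ we have
\begin{equation*}
\tilde U_{i,k}=\bar Y_i^{T}\bigl((\mathbi{Q}M_{i,k}^{\dagger}\mathbi{Q}^{T})\otimes I_n\bigr)\bar Y_i .
\end{equation*}
The main step is to evaluate the entries $[\mathbi{Q}M_{i,k}^{\dagger}\mathbi{Q}^{T}]_{j\ell}$ for the indices that survive the contraction with $\bar Y_i$, i.e.\ $j,\ell\in\mathcal J_i^{(\gamma)}$. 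This is delicate because $D_{i,k}$, hence $M_{i,k}$, need not be invertible when $\gamma\le k<\operatorname{diam}(\mathcal G)$, so one cannot simply use $\mathbi{Q}M_{i,k}^{-1}\mathbi{Q}^{T}=D_{i,k}^{-1}$; the pseudoinverse has to be handled on its support.

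I would first note that for $k\ge\gamma$ we have $l_{ij}^{(\gamma)}>0\Rightarrow l_{ij}^{(k)}>0$: from $\mathcal L^{k}=\mathcal L^{k-\gamma}\mathcal L^{\gamma}$ and $l_{ii}^{(k-\gamma)}\ge (l_{ii})^{k-\gamma}>0$ it follows that $l_{ij}^{(k)}\ge l_{ii}^{(k-\gamma)}l_{ij}^{(\gamma)}>0$; in particular $\alpha_i$ is finite. Put $S:=\{\,j:l_{ij}^{(k)}>0\,\}\supseteq\mathcal J_i^{(\gamma)}$, let $\mathbi{Q}_S$ be the rows of $\mathbi{Q}$ indexed by $S$ (of full row rank $|S|$ since $\mathbi{Q}$ is invertible) and $D_S>0$ the corresponding diagonal block, so that $M_{i,k}=\mathbi{Q}_S^{T}D_S\mathbi{Q}_S$. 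Applying the full-rank-factorization formula for the Moore--Penrose inverse to $M_{i,k}=BC$ with $B:=\mathbi{Q}_S^{T}D_S^{1/2}$, $C:=D_S^{1/2}\mathbi{Q}_S$ (both of rank $|S|$, with $B^{T}B=CC^{T}=D_S^{1/2}\mathbi{Q}_S\mathbi{Q}_S^{T}D_S^{1/2}$) gives $M_{i,k}^{\dagger}=\mathbi{Q}_S^{T}(\mathbi{Q}_S\mathbi{Q}_S^{T})^{-1}D_S^{-1}(\mathbi{Q}_S\mathbi{Q}_S^{T})^{-1}\mathbi{Q}_S$. Since for $j\in S$ the vector $q_j\mathbi{Q}_S^{T}$ is the $j$-th row of $\mathbi{Q}_S\mathbi{Q}_S^{T}$, we get $q_j\mathbi{Q}_S^{T}(\mathbi{Q}_S\mathbi{Q}_S^{T})^{-1}=e_j^{T}$ and hence $q_jM_{i,k}^{\dagger}q_\ell^{T}=\delta_{j\ell}/(Nl_{ij}^{(k)})$ for all $j,\ell\in S$. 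Contracting with $\bar Y_i$ and using $\mathcal J_i^{(\gamma)}\subseteq S$ yields the closed form
\begin{equation*}
\tilde U_{i,k}=\sum_{j\in\mathcal J_i^{(\gamma)}}\frac{(l_{ij}^{(\gamma)})^{2}}{Nl_{ij}^{(k)}}\,X_j ,\qquad k\ge\gamma ,
\end{equation*}
to be compared with $\tilde X_i=\sum_{j\in\mathcal J_i^{(\gamma)}}(l_{ij}^{(\gamma)})^{2}X_j$.

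From the closed form, part~2) is immediate: $\tilde U_{i,k}-\tilde X_i=\sum_{j\in\mathcal J_i^{(\gamma)}}\bigl(\tfrac{1}{Nl_{ij}^{(k)}}-1\bigr)(l_{ij}^{(\gamma)})^{2}X_j$; since $X_j\ge 0$ and $\bigl|\tfrac{1}{Nl_{ij}^{(k)}}-1\bigr|\le\alpha_i$ on $\mathcal J_i^{(\gamma)}$, this symmetric matrix lies between $-\alpha_i\tilde X_i$ and $\alpha_i\tilde X_i$, hence between $\pm\alpha_i\|\tilde X_i\|_2 I$, so $\|\tilde U_{i,k}-\tilde X_i\|_2\le\alpha_i\|\tilde X_i\|_2$. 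For the refined rate I would use the standard consensus bound $|l_{ij}^{(k)}-\tfrac1N|\le|\lambda_2|^{k}$ (cf.\ \cite[Lemma~1]{qian2022consensus}, \cite{xiao2005scheme}), giving $|Nl_{ij}^{(k)}-1|\le N|\lambda_2|^{k}$; the choice $k_0>\log_{|\lambda_2|}(1/N)$ makes $N|\lambda_2|^{k}<1$, hence $Nl_{ij}^{(k)}\ge 1-N|\lambda_2|^{k}>0$ and $\bigl|\tfrac{1}{Nl_{ij}^{(k)}}-1\bigr|\le\tfrac{N|\lambda_2|^{k}}{1-N|\lambda_2|^{k}}$ for $k\ge\max(\gamma,k_0)$. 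For part~1), $\tilde U_{i,k}\to\tilde X_i$ follows from the closed form and $l_{ij}^{(k)}\to 1/N$; and because every coefficient $(l_{ij}^{(\gamma)})^{2}/(Nl_{ij}^{(k)})$ is strictly positive on $\mathcal J_i^{(\gamma)}$ while each $X_j\ge 0$, the range of $\tilde U_{i,k}$ equals $\sum_{j\in\mathcal J_i^{(\gamma)}}\operatorname{range}(X_j)=\operatorname{range}(\tilde X_i)$ for all $k\ge\gamma$, so on this fixed subspace $\tilde U_{i,k}$ and $\tilde X_i$ are positive definite with $\tilde U_{i,k}\to\tilde X_i$ while both vanish on its orthogonal complement; continuity of inversion then gives $\tilde U_{i,k}^{\dagger}\to\tilde X_i^{\dagger}$.

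The main obstacle is the derivation of the closed form in the second paragraph, precisely because $M_{i,k}$ may be singular for $\gamma\le k<\operatorname{diam}(\mathcal G)$: one must both evaluate $M_{i,k}^{\dagger}$ on its support via the full-rank factorization and verify that the indices outside the support contribute nothing to $\tilde U_{i,k}$, which is exactly where the hypothesis $k\ge\gamma$ (equivalently $\mathcal J_i^{(\gamma)}\subseteq S$) enters. Once the closed form is in hand, both parts reduce to elementary estimates for positive semidefinite matrices.
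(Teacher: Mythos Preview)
Your proposal is correct and follows essentially the same route as the paper: derive the closed form $\tilde U_{i,k}=\sum_{j\in\mathcal J_i^{(\gamma)}}\frac{(l_{ij}^{(\gamma)})^{2}}{Nl_{ij}^{(k)}}X_j$ via the identity $q_jM_{i,k}^{\dagger}q_\ell^{T}=\delta_{j\ell}/(Nl_{ij}^{(k)})$ for $j,\ell$ in the support, then read off the norm bound, the spectral rate, and the pseudoinverse convergence from constancy of the range. The only cosmetic difference is that you obtain the identity via the full-rank factorization formula for $M_{i,k}^{\dagger}$, whereas the paper isolates it as a separate lemma proved directly from the defining property $\bar Q^{T}\bar Q(\bar Q^{T}\bar Q)^{\dagger}\bar Q^{T}\bar Q=\bar Q^{T}\bar Q$.
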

		
		\begin{proof}
			See Appendix A.
		\end{proof}
		
		\begin{remark}
			\label{remark:direct method dimension}
			Note that the dimension of $q_i$ is dependent on the size of the network, which may lead to heavy communication burden with large network. In order to solve this problem, we provide a stochastic method in Section III.C, which significantly reduces the communication cost.
			
			Meanwhile, in the case that some of the matrices $X_i$ are zero matrices,  we can modify the direct method as
			\begin{equation*}
				\left\{
				\begin{aligned}
					&[q_i]_j \sim \mathbf{N}(0,1) &  i \in \mathcal{S}, &\\
					&[q_i]_j= 0 &  i \notin \mathcal{S}, &\quad j\in \underline{S},
				\end{aligned}\right.
			\end{equation*}
			where $\mathcal{S}$ denotes the set of nodes whose $X_i\neq \boldsymbol{O}$, and $S$ denotes the number of nodes in $\mathcal{S}$. This case corresponds to the scenario that some of the nodes in the network are naive, meaning they can only communicate and process data, and thus have $C_j^TR_j^{-1}C_j=\boldsymbol{O}$. In this situation, \eqref{eq:eq:sec3 direct decomposition2} can be modified as
			\begin{equation*}
				\begin{aligned}
					\bar{Y}_i ^T \bar{Y}_i 
					&=\tilde{u}_i^\mathcal{S}(\sum_{j\in \mathcal{S}}q_j^Tq_j\otimes I_n)^{-1} (\tilde{u}_i^\mathcal{S})^T,
				\end{aligned}		
				\end{equation*}
			where $\tilde{u}_i^\mathcal{S} =  \sum_{j\in \mathcal{S}} l_{ij}^{(\gamma)}Y_j^T (q_j\otimes I_n)$.
			Through this way, the direct method only requires $\col_{i\in \mathcal{S}}(q_i)$ to be invertible, and the minimum dimension of $q_i$ is reduced from $N$ to $S$.  
			
		\end{remark}

		\subsection{Stochastic Method}
		As discussed in Remark~\ref{remark:direct method dimension}, the problem of computing and communicating inefficiency of direct method will become prominent as the scale of the network increases. In direct method, the core of decoupling relies on the following relation
		\begin{equation*}
			q_i \left(\sum_{j=1}^{N}q_j q_j^T\right)^{-1}q_k^T = \delta_{ik},
		\end{equation*}
		as described in the proof of Lemma~\ref{lemma:direct method error}. Our idea here is to substitute it with stochastic relation. Let $\theta_i$ be sampled from identical independent  distributions $\mathbf{N}(\boldsymbol{0}, I_n)$, then it follows that
		\begin{equation}
			\label{eq:stochastic method core}
			\mathbb{E}[\theta_i\theta_j^T] = \delta_{ij}I_n.
		\end{equation}
		Exploiting this, it can be verified that
		\begin{equation}
			\label{eq:stochastic method eq2}
			\begin{aligned}
				&\phantom{=}\mathbb{E}\left[(\sum_{j=1}^N l_{ij}^{(\gamma)}Y_j^T \theta_j)(\sum_{j=1}^N l_{ij}^{(\gamma)}Y_j^T \theta_j)^T\right]
				= \tilde{X}_i.
			\end{aligned}
		\end{equation}

		Through the above decomposition, we only need to take a sufficiently large number of samples  of 
			$\sum_{j=1}^N l_{ij}^{(\gamma)}Y_j^T \theta_j $ and estimate $\tilde{X}_i$ by
		\begin{equation}
			\label{eq:stochastic method kappa infty}
			\begin{aligned} \frac{1}{k}\sum_{t=1}^{k}\left[(\sum_{j=1}^{N}l_{ij}^{(\gamma)}Y_j^T \theta_{j,t}) (\sum_{j=1}^{N}l_{ij}^{(\gamma)}Y_j^T \theta_{j,t})^T \right]
			\end{aligned}
		\end{equation}
		where $\theta_{i,k}$ denotes the $k$-th sample taken from $\mathbf{N}(0,I_n)$.
		
		\begin{remark}
			\label{remark:stochastic method}
		Let $X_j=C_j^TR_j^{-1}C_j$. Note that $Y_j^T\theta_j$ has the same distribution with $C_j^T R_j^{-1} v_j$. From this perspective, the stochastic method can be regarded as that each node provides a sample of the measurement noise and then fuse it. Subsequently, the covariance of the fused measurement noise is estimated statistically, exploiting the fused  samples of $Y_j^T\theta_j$.
		\end{remark}

		\subsection{Properties of the Stochastic Method}		
		For brevity, we define the following variables
			\begin{equation}
				\label{eq:stochastic method eq1}
			 \tilde{\upsilon}_{i,k}:=\sum_{j=1}^{N}l_{ij}^{(\gamma)}Y_j^T\theta_{j,k}, \;\,
			\tilde{\Upsilon}_{i,k} := \frac{1}{k}\sum_{s=1}^{k}  \tilde{\upsilon}_{i,s}  \tilde{\upsilon}_{i,s}^T.
		\end{equation}
		
		Since the auxiliary variable $\tilde{\Upsilon}_{i,k}$ is a random matrix, we can only determine its convergent behavior from the statistical perspective. In this paper, we use the following definition of almost surely convergence of random matrices.
		
		\begin{definition}
			\label{definition:AlmostSureConvergenceMatrix}
			A sequence of random matrices $\{E_k\}_{k =1}^{\infty}$ is said to converge almost surely to $E$ if $\mathbb{P}(\lim_{n\rightarrow \infty} \|E_n - E\|_F = 0) = 1$, and denoted as $E_k \stackrel{a.s.}{\longrightarrow} E$.
		\end{definition}
		
		For almost surely convergent random matrices, the operations of matrix multiplication and inversion  preserve the almost surely convergent property, as stated in the following lemma.
			
		\begin{lemma}
			\label{lemma: a.s. mat prod mat inv}
			\begin{enumerate}
				\item If $D_k \stackrel{a.s.}{\longrightarrow} D$ and $E_k\stackrel{a.s.}{\longrightarrow} E$, then $D_kE_k \stackrel{a.s.}{\longrightarrow} DE$.
				\item If $D_k \stackrel{a.s.}{\longrightarrow} D$ and $D$ is invertible, then $D_k^\dagger  \stackrel{a.s.}{\longrightarrow} D^{-1}$.
			\end{enumerate}
		\end{lemma}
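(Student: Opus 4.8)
The plan is to prove the two claims of Lemma~\ref{lemma: a.s. mat prod mat inv} separately, reducing each to a statement about almost sure convergence of scalar sequences and then invoking continuity of the relevant matrix operations on a common event of probability one. For part~1, I would start by noting that by Definition~\ref{definition:AlmostSureConvergenceMatrix} there exist events $\Omega_D, \Omega_E$ with $\mathbb{P}(\Omega_D) = \mathbb{P}(\Omega_E) = 1$ on which $\|D_k - D\|_F \to 0$ and $\|E_k - E\|_F \to 0$ respectively. On the intersection $\Omega_0 := \Omega_D \cap \Omega_E$, which still has probability one, both convergences hold simultaneously. Then for each fixed sample point in $\Omega_0$ I would use the submultiplicativity of the Frobenius norm together with the triangle inequality in the standard form
\begin{equation*}
\|D_k E_k - DE\|_F \le \|D_k - D\|_F \|E_k\|_F + \|D\|_F \|E_k - E\|_F,
\end{equation*}
observing that $\|E_k\|_F$ is bounded (since $E_k \to E$ along this sample point), so the right-hand side tends to $0$. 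Hence $D_k E_k \to DE$ on $\Omega_0$, which is exactly $D_k E_k \stackrel{a.s.}{\longrightarrow} DE$.

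For part~2, the extra ingredient is that matrix inversion is continuous at invertible matrices, but the subtlety is that the lemma uses the Moore--Penrose inverse $D_k^\dagger$ rather than the ordinary inverse, and the $D_k$ need not be invertible for small $k$. The key fact I would use is that if $D$ is invertible and $D_k \to D$ entrywise (equivalently in Frobenius norm), then for all sufficiently large $k$ the matrix $D_k$ is invertible and $D_k^\dagger = D_k^{-1}$; this follows because $\det$ is continuous, so $\det(D_k) \to \det(D) \neq 0$, hence $\det(D_k) \neq 0$ eventually. On the probability-one event where $D_k \to D$, let $K$ be the (sample-dependent) index beyond which $D_k$ is invertible; for $k \ge K$ we have $D_k^\dagger = D_k^{-1}$, and then the standard resolvent-type identity $D_k^{-1} - D^{-1} = D_k^{-1}(D - D_k)D^{-1}$ together with the uniform boundedness of $\|D_k^{-1}\|$ (which holds because $D_k^{-1} \to D^{-1}$, or can be obtained directly from $\|D_k^{-1}\| \le \|D^{-1}\|/(1 - \|D^{-1}\|\,\|D_k - D\|)$ once $\|D^{-1}\|\,\|D_k - D\| < 1$) yields $\|D_k^{-1} - D^{-1}\|_F \to 0$. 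Since the behaviour of finitely many initial terms $D_k^\dagger$ for $k < K$ does not affect the limit, we conclude $D_k^\dagger \stackrel{a.s.}{\longrightarrow} D^{-1}$.

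The only mild obstacle is the bookkeeping around the Moore--Penrose inverse in part~2: one must be careful that the threshold index $K$ depends on the sample path, so the argument is genuinely pointwise on the probability-one event rather than uniform, and one should explicitly note that $\{D_k^\dagger\}$ agrees with $\{D_k^{-1}\}$ eventually so that the two notions of limit coincide. Everything else is a routine application of continuity of matrix multiplication and inversion combined with the fact that a finite (here, two-fold) intersection of probability-one events again has probability one. I do not anticipate needing any probabilistic machinery beyond this elementary intersection argument, since almost sure convergence of matrices has been defined entrywise-equivalently through the Frobenius norm and all the work is deterministic once we fix a good sample point.
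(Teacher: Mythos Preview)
Your proposal is correct and follows essentially the same approach as the paper: both arguments take the intersection of the two probability-one events for part~1 and then use deterministic continuity of matrix multiplication, and for part~2 both observe that $D_k$ is eventually invertible on the good event (so $D_k^\dagger = D_k^{-1}$ from some sample-dependent index on) and then invoke continuity of matrix inversion at $D$. The only cosmetic difference is that the paper packages the continuity step via the Continuous Mapping lemma and a citation to \cite{stewart1969continuity}, whereas you spell it out through the resolvent identity and an explicit norm bound.
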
 
		
		\begin{proof}
			See Appendix B.
		\end{proof}
		
		Based on the above tools, we can now describe the properties of $\tilde{\Upsilon}_{i,k}$ and $\tilde{\Upsilon}_{i,k}^{\dagger}$.  
		
		\begin{lemma}
			\label{lemma:stochastic method error}
			For any $\gamma \in \mathbb{Z}^+$ and $k>n+3$,
			\begin{enumerate}[1)]
				\item 	$ \tilde{\Upsilon}_{i,k} \stackrel{a.s.}{\longrightarrow}\tilde{X}_i $,
				$\tilde{\Upsilon}_{i,k}^{\dagger} \stackrel{a.s.}{\longrightarrow}\tilde{X}_i^{\dagger}$, as $k\rightarrow\infty$;
				
				\item  if $\tilde{X}_i>0$, then $\tilde{\Upsilon}_{i,k} \sim W_n(\frac{1}{k}\tilde{X}_i, k)$, $\tilde{\Upsilon}_{i,k}^{-1} \sim W_n^{-1}(\frac{1}{k}\tilde{X}_i, k)$;
				\item 	$\textstyle \mathbb{E}[\tilde{\Upsilon}_{i,k}] = \tilde{X}_i$, $\mathbb{E}[\tilde{\Upsilon}_{i,k}^{\dagger}] = \frac{k}{k -r_i -1}\tilde{X}_i^{\dagger} $;
				\item  $ \mathbb{E}[\|\tilde{\Upsilon}_{i,k}- \tilde{X}_i\|_F^2 ]=  \textstyle \frac{1}{k} \operatorname{Tr}( \tilde{X}_i^2) + \frac{1}{k} [\operatorname{Tr}( \tilde{X}_i)]^2 $, 
				
				$  \mathbb{E}[\|  \tilde{\Upsilon}_{i,k}^{\dagger} -\tilde{X}_i^{\dagger} \|_F^2 ]= \alpha_{1,i}  \operatorname{Tr}[(\tilde{X}_i^{\dagger})^2] + \alpha_{2,i}  [\operatorname{Tr}(\tilde{X}_i^{\dagger})]^2$,
				
				where
				
				$ \alpha_{1,i} = \frac{k^2 + k (r_i^2 + 2 r_i + 3 ) - (r_i^3 + 4 r_i^2 + 3 r_i)}{(k - r_i - 3) (k - r_i - 1) (k - r_i) }$, 
				
				$\alpha_{2,i} = \frac{k^2}{(k - r_i - 3) (k - r_i - 1) (k - r_i) } $;
			\end{enumerate}
			where $r_i := \text{rank}(\tilde{X}_i)$, $ W_n(\Sigma, p)$ and $ W_n^{-1}(\Sigma, p)$ denotes the $n$-dimension Wishart distribution and Inverse Wishart distribution, respectively, with degree of freedom $p$ and scale matrix $\Sigma$.
		\end{lemma}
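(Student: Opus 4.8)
The plan is to reduce everything to classical distribution theory for sample covariance matrices built from i.i.d. Gaussian vectors, and then transfer the statements from the non-degenerate case to the general (possibly rank-deficient) case via the square-root representation and Lemma~\ref{lemma: a.s. mat prod mat inv}. Write $\tilde{\upsilon}_{i,k} = \sum_{j=1}^N l_{ij}^{(\gamma)} Y_j^T \theta_{j,k} = \bar{Y}_i^T \Theta_k$, where $\Theta_k := \col_{j=1}^N(\theta_{j,k}) \sim \mathbf{N}(\boldsymbol 0, I_{Nn})$ by Assumption~\ref{assumption:sec2 noise model}-style independence of the samples; hence $\tilde{\upsilon}_{i,k}$ is an i.i.d.\ sequence with $\tilde{\upsilon}_{i,k} \sim \mathbf{N}_d(\boldsymbol 0, \tilde{X}_i)$ (degenerate when $r_i < n$), exactly by the decomposition \eqref{eq:sec3 direct decomposition}--\eqref{eq:eq:sec3 direct decomposition2} already established. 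Thus $k\tilde{\Upsilon}_{i,k} = \sum_{s=1}^k \tilde{\upsilon}_{i,s}\tilde{\upsilon}_{i,s}^T$ is a sum of $k$ i.i.d.\ rank-one outer products of such Gaussians.

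For part~1), almost sure convergence $\tilde{\Upsilon}_{i,k} \stackrel{a.s.}{\longrightarrow} \tilde{X}_i$ is just the strong law of large numbers applied entrywise (each entry $[\tilde{\upsilon}_{i,s}\tilde{\upsilon}_{i,s}^T]_{ab}$ has finite mean $[\tilde{X}_i]_{ab}$ and finite variance, since Gaussians have all moments), together with the fact that a.s.\ convergence of all $n^2$ entries gives a.s.\ convergence in Frobenius norm; the pseudoinverse convergence $\tilde{\Upsilon}_{i,k}^\dagger \stackrel{a.s.}{\longrightarrow} \tilde{X}_i^\dagger$ then would \emph{not} follow directly from Lemma~\ref{lemma: a.s. mat prod mat inv}-2) since $\tilde{X}_i$ need not be invertible, so here I would pass through the square root: write $\tilde{X}_i = W_i W_i^T$ with $W_i \in \mathbb{R}^{n\times r_i}$ of full column rank, set $\zeta_{i,s} := W_i^\dagger \tilde{\upsilon}_{i,s} \sim \mathbf{N}(\boldsymbol 0, I_{r_i})$, so that $k W_i^\dagger \tilde{\Upsilon}_{i,k} (W_i^\dagger)^T = \sum_s \zeta_{i,s}\zeta_{i,s}^T$ is a genuine $r_i$-dimensional sample covariance; apply the SLLN and Lemma~\ref{lemma: a.s. mat prod mat inv}-2) in dimension $r_i$ to get $\bigl(\tfrac1k\sum_s \zeta_{i,s}\zeta_{i,s}^T\bigr)^{-1} \stackrel{a.s.}{\longrightarrow} I_{r_i}$ (note $k > n+3 \ge r_i$ guarantees invertibility eventually, in fact a.s.\ for all such $k$), and finally reconstitute $\tilde{\Upsilon}_{i,k}^\dagger = (W_i^\dagger)^T\bigl(W_i^\dagger\tilde{\Upsilon}_{i,k}(W_i^\dagger)^T\bigr)^{-1}W_i^\dagger$ (valid once $\operatorname{rank}\tilde{\Upsilon}_{i,k} = r_i$) and invoke Lemma~\ref{lemma: a.s. mat prod mat inv}-1) on the product.

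Part~2) is immediate from the definition of the Wishart distribution: when $\tilde{X}_i > 0$ (so $r_i = n$), $k\tilde{\Upsilon}_{i,k} = \sum_{s=1}^k \tilde{\upsilon}_{i,s}\tilde{\upsilon}_{i,s}^T$ with $\tilde{\upsilon}_{i,s}$ i.i.d.\ $\mathbf{N}(\boldsymbol 0, \tilde{X}_i)$ is by definition $W_n(\tilde{X}_i, k)$, hence $\tilde{\Upsilon}_{i,k} \sim W_n(\tfrac1k\tilde{X}_i, k)$ by the scaling property, and $\tilde{\Upsilon}_{i,k}^{-1} \sim W_n^{-1}(\tfrac1k\tilde{X}_i, k)$ by the definition of the inverse Wishart (well-defined since $k > n+1$). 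Parts~3) and~4) are then citations of the standard moment formulas for the Wishart and inverse Wishart laws: $\mathbb{E}[\tilde{\Upsilon}_{i,k}] = \tilde{X}_i$ is the Wishart mean; $\mathbb{E}[\tilde{\Upsilon}_{i,k}^{\dagger}]$ requires the inverse-Wishart mean $\mathbb{E}[W^{-1}] = \Sigma^{-1}/(p - n - 1)$ applied in the effective dimension $r_i$ (which is where the factor $k/(k - r_i - 1)$ comes from, after the scaling by $k$); and the Frobenius-norm error identities in~4) follow from the known second-moment (variance/covariance) structure of Wishart and inverse-Wishart entries — for the Wishart part, $\mathbb{E}\|\tilde{\Upsilon}_{i,k} - \tilde{X}_i\|_F^2 = \sum_{a,b}\operatorname{Var}([\tilde{\Upsilon}_{i,k}]_{ab})$ and $\operatorname{Var}([\tilde{\Upsilon}_{i,k}]_{ab}) = \tfrac1k([\tilde{X}_i]_{aa}[\tilde{X}_i]_{bb} + [\tilde{X}_i]_{ab}^2)$, which sums to $\tfrac1k([\operatorname{Tr}\tilde{X}_i]^2 + \operatorname{Tr}(\tilde{X}_i^2))$; the inverse-Wishart computation is the same bookkeeping using the fourth-moment formulas for $W^{-1}$, again carried out in dimension $r_i$ and then pulled back.

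The main obstacle — and the place where care is genuinely needed rather than just citation — is the rank-deficient bookkeeping in parts~1), 3) and~4): one must argue cleanly that all the relevant quantities ($\tilde{\Upsilon}_{i,k}^\dagger$, its expectation, its mean-square error) only ever "see" the $r_i$-dimensional nondegenerate Wishart living on $\operatorname{col}(W_i)$, so that the standard formulas apply verbatim with $n$ replaced by $r_i$, and that the pseudoinverse of the sample matrix coincides with the pullback of the genuine inverse once the rank has stabilized at $r_i$ (which happens a.s.\ for every $k \ge r_i$, and certainly for $k > n+3$). Once that identification is set up carefully, the rest is routine Gaussian/Wishart moment algebra and an application of Lemma~\ref{lemma: a.s. mat prod mat inv}.
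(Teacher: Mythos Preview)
Your proposal is correct and follows essentially the same route as the paper: reduce to a nondegenerate $r_i$-dimensional Wishart by projecting onto the range of $\tilde{X}_i$, then apply SLLN, Lemma~\ref{lemma: a.s. mat prod mat inv}, and standard Wishart/inverse-Wishart moment formulas. The only cosmetic difference is that the paper carries out the projection via the orthogonal spectral decomposition $\tilde{X}_i = V_i^T\operatorname{diag}(\Sigma_i,\boldsymbol{O})V_i$ (and shows the key structural fact that $V_i\tilde{\Upsilon}_{i,k}V_i^T$ has the same block-zero pattern, which is exactly your observation that $\tilde{\upsilon}_{i,s}\in\operatorname{col}(W_i)$ deterministically), whereas you use an arbitrary full-column-rank square root $W_i$; either choice works and the remaining bookkeeping is identical.
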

		
		\begin{proof}
			See Appendix B.
		\end{proof}
		
		\begin{remark}
			The stochastic method has a lower communication cost than the direct method.
			However, it only performs well when dealing with  static QWS. If the required QWS is time-varying, i.e, $\tilde{X}_{i,k} = \sum_{j=1}^{N}(l_{ij}^{(\gamma)})^2 Y_{j,k}^TY_{j,k}$, it will be  inefficient  to still use \eqref{eq:stochastic method kappa infty}, as it requires to take large number of samples at each time instant. In contrast, whether addressing static  or time-varying QWS, $q_i$ in the direct method remains static, and thus $(\sum_{j=1}^{N}q_j^Tq_j)^{-1}$ remains static as well. The only adjustment  when shifting from static QWS to dynamic QWS, is to replace $Y_{j}$ by   $Y_{j,k}$ in \eqref{eq:eq:sec3 direct decomposition2}. It is worth mentioning that such adjustment does not increase  the cost of  computation and communication.
		\end{remark}
						
		\section{Distributed Kalman Filter with Ultimately Accurate Covariance}
		\subsection{The Modified CM/CI}
		Under the guidance of the aforementioned idea, the Modified CM is depicted in Algorithms~1-2, and the Modified CI is shown in Algorithms~3-4. Note that in the fusion steps of Algorithms~1-4, \( Z \) represents a general term, indicating that the subsequent quantities are fused according to the same rule.
		
		\subsection{Essential Discussions}
		
	\begin{remark}
		Compared with CM and CI, the modifications in Algorithms~1-4 lie in the use of the ultimately accurate measurement and covariance pair $(\tilde{y}_{i,k},\tilde{U}_{i,k})$ or $(\tilde{y}_{i,k},\tilde{\Upsilon}_{i,k})$ in the measurement update step, where $\lim_{k\rightarrow \infty} \tilde{U}_{i,k}=\lim_{k\rightarrow \infty} \tilde{\Upsilon}_{i,k} = \tilde{R}_i$, while CM and CI use inherently inaccurate pairs, $(\tilde{y}_{i,k}^{(\gamma)}, \frac{1}{N}\tilde{C}_i)$ and $(\tilde{y}_{i,k}^{(\gamma)}, \tilde{C}_i)$, respectively. 
	\end{remark}
	
	\begin{remark}
		Algorithms~1-4 have different requirements for global information during initialization. Specifically, each node needs to know the system transition matrix $A$, the process noise covariance $Q$, and the distribution of the initial state estimate $\mathcal{N}(\hat{x}_0,P_0)$. Additionally, in Algorithm~1 and Algorithm~3, each node should also be aware of the network size $N$. Apart from the global information, the $i$-th node should also know its own observation matrix $C_i$, the measurement noise covariance $R_i$, and the consensus weights $\{l_{ij}:j\in\mathcal{N}_i\}$. 
	\end{remark}
	
	\begin{remark}
			To compare communication costs, let \(\tau\) denote the number of scalars sent by one single node at each fusion step. Then, one has $\tau_{CM} = \tau_{CI} = n^2+n$, $\tau_{HCMCI} = 2n^2+2n$, $\tau_{Alg1} = N^2+Nn^2+n^2+n$, $\tau_{Alg2} = n^2+2n$, $\tau_{Alg3} = N^2+Nn^2+2n^2+2n$, $\tau_{Alg4} = 2n^2+3n$. 
			
			Since the system is LTI, there's no need to update $\tilde{U}_{i,k}$ or $\tilde{\Upsilon}_{i,k}$ when it has converged. If this rule is applied, one has $\tau_{CM} = \tau_{CI} =\tau_{Alg1}  = \tau_{Alg2}  = n$, $ \tau_{HCMCI} = \tau_{Alg3}  = \tau_{Alg4} = 2n$ in steady state.
	\end{remark}

		\begin{table}[t]
			\centering
			\begin{tabular}{c}
				\hline
				Algorithm~1: Modified CM with direct method\\
				\hline			
			\begin{minipage}{0.95\linewidth}
				\begin{tabular}{ll}
				1. Initialization: & \(\hat{x}_{i,0|0} = x_0, P_{i,0|0} = P_0, \tilde{U}_{i,0} =  \boldsymbol{O}_n\), \\
				&$q_i \in \mathbb{R}^{1\times N}: [q_i]_j\sim \mathbf{N}(0,1),j \in \underline{N}$,\\
				 & $Y_i  \in \mathbb{R}^{n\times n}: C_i^T R_i^{-1}C_i = Y_i^TY_i$,  \\
				 &$\tilde{W}_{i,0}^{(\gamma)} = N q_i^Tq_i$. \\
				2. Prediction: & $\hat{x}_{i,k|k-1} = A \hat{x}_{i,k-1|k-1}, $\\
				&  $ P_{i,k|k-1} = AP_{i,k-1|k-1} A^T + Q $.\\
				3. Fusion: & $	\tilde{y}_{i,k}^{(0)} =  C_i^T R_i^{-1} y_{i,k},\quad	\tilde{u}_{i,k}^{(0)} =  Y_i^T(q_i\otimes I_n),$\\
				&$ S_{i,k}^{(0)} =  C_i^T R_i^{-1} C_i,\quad   \tilde{W}_{i,k}^{(0)} =\tilde{W}_{i,k-1}^{(\gamma)}$. \\
				& For  \(m = 1,2,\dots,\gamma\) \\
				& 	$	Z_{i,k}^{(m)} = \sum_{j=1}^{N}l_{ij}Z_{j,k}^{(m-1)} $,  $Z = \tilde{y}, \tilde{u}, S, \tilde{W}$.\\
				4. Covariance:& $\tilde{U}_{i,k} = \tilde{u}_{i,k}^{(\gamma)}  (\tilde{W}_{i,k}^{(\gamma)}\otimes I_n)^{\dagger} (\tilde{u}_{i,k}^{(\gamma)})^T$,  $\tilde{C}_i = S_{i,k}^{(\gamma)}$.\\
				5. Correction: 
				& $P_{i,k|k} = (P_{i,k|k-1}^{-1} + {\tilde{C}_i^T  \tilde{U}_{i,k}^{\dagger} \tilde{C}_i} )^{-1}$,\\
				& $ \hat{x}_{i,k|k} = P_{i,k|k} 
				(P_{i,k|k-1}^{-1}\hat{x}_{i,k|k-1}  + {\tilde{C}_i^T\tilde{U}_{i,k}^{\dagger} \tilde{y}_{i,k}} ).$
				\end{tabular}
			\end{minipage}\\
			~\\
				\hline

				Algorithm~2: Modified CM with stochastic method\\
				\hline			
				\begin{minipage}{0.95\linewidth}
					\begin{tabular}{ll}
						1. Initialization: & \(\hat{x}_{i,0|0} = x_0, P_{i,0|0} = P_0, \tilde{\Upsilon}_{i,0} =  \boldsymbol{O}_n\). \\& $Y_i  \in \mathbb{R}^{n\times n}: C_i^T R_i^{-1}C_i = Y_i^TY_i$,  \\
						2. Prediction: & Same as Algorithm~1.\\
						3. Fusion: &Take a sample $\theta_{i,k} $ from $\mathbf{N}(0,I_n)$.\\
						& $	\tilde{y}_{i,k}^{(0)} =  C_i^T R_i^{-1} y_{i,k},\quad	\tilde{\upsilon}_{i,k}^{(0)} =  Y_i^T\theta_{i,k},$\\ 
						&$ S_{i,k}^{(0)} =  C_i^T R_i^{-1} C_i$. \\
						& For  \(m = 1,2,\dots,\gamma\) \\
						& 	$	Z_{i,k}^{(m)} = \sum_{j=1}^{N}l_{ij}Z_{j,k}^{(m-1)} $, $Z = \tilde{y}, \tilde{\upsilon}, S$.\\
						4. Covariance: &  $\tilde{\Upsilon}_{i,k} = \frac{k-1}{k} \tilde{\Upsilon}_{i,k-1}  + \frac{1}{k}\tilde{\upsilon}_{i,k}^{(\gamma)} (\tilde{\upsilon}_{i,k}^{(\gamma)})^T$,\;$\tilde{C}_i = S_{i,k}^{(\gamma)}$.\\
						5. Correction: & Same as Algorithm~1.
					\end{tabular}
				\end{minipage}\\
				~\\
				\hline
				Algorithm~3: Modified CI with direct method\\
				\hline			
				\begin{minipage}{0.95\linewidth}
					\begin{tabular}{ll}
						1. Initialization: & Same as Algorithm~1. \\
						2. Prediction: & Same as Algorithm~1. \\
						3. Fusion: 
						& $	\tilde{y}_{i,k}^{(0)} =  C_i^T R_i^{-1} y_{i,k},\quad	\tilde{u}_{i,k}^{(0)} =  Y_i^T(q_i\otimes I_n),$\\
						&$ S_{i,k}^{(0)} =  C_i^T R_i^{-1} C_i,\quad   \tilde{W}_{i,k}^{(0)} =\tilde{W}_{i,k-1}^{(\gamma)}$, \\
						&$J_{i,k}^{(0)} = P_{i,k|k-1}^{-1}\hat{x}_{i,k|k-1}, \quad \tilde{V}_{i,k}^{(0)} =  P_{i,k|k-1}^{-1}.$ \\
						& For  \(m = 1,2,\dots,\gamma\) \\
						& 	$	Z_{i,k}^{(m)} = \sum_{j=1}^{N}l_{ij}Z_{j,k}^{(m-1)} $,  $Z = \tilde{y}, \tilde{u}, S, \tilde{W}, J, \tilde{V}$.\\
						4. Covariance: &Same as Algorithm~1. \\
						5. Correction: 
						& $ P_{i,k|k} = (\tilde{V}_{i,k}^{(\gamma)} + {\tilde{C}_i^T  \tilde{U}_{i,k}^{\dagger} \tilde{C}_i} )^{-1}$,\\
						& $ \hat{x}_{i,k|k} = P_{i,k|k} 
						( J_{i,k}^{(\gamma)} + {\tilde{C}_i^T\tilde{U}_{i,k}^{\dagger} \tilde{y}_{i,k}} ).$
						
					\end{tabular}
				\end{minipage}\\
				~\\
				\hline
				Algorithm~4: Modified CI with stochastic method\\
				\hline			
				\begin{minipage}{0.95\linewidth}
					\begin{tabular}{ll}
						1. Initialization:  & Same as Algorithm~2,\\
						2. Prediction: & Same as Algorithm~1,\\
						3. Fusion: &Take a sample $\theta_{i,k} $ from $\mathbf{N}(0,I_n)$.\\
						& $	\tilde{y}_{i,k}^{(0)} = C_i^T R_i^{-1} y_{i,k},\quad	\tilde{\upsilon}_{i,k}^{(0)} = Y_i^T\theta_{i,k}$,\\
						&$ S_{i,k}^{(0)} = C_i^T R_i^{-1} C_i$ \\
						&$J_{i,k}^{(0)} = P_{i,k|k-1}^{-1}\hat{x}_{i,k|k-1}, \quad \tilde{V}_{i,k}^{(0)} =  P_{i,k|k-1}^{-1}$. \\
						& For  \(m = 1,2,\dots,\gamma\) \\
						& 	$	Z_{i,k}^{(m)} = \sum_{j=1}^{N}l_{ij}Z_{j,k}^{(m-1)} $,  $Z = \tilde{y}, \tilde{\upsilon}, S, J,\tilde{V} $.\\
						4. Covariance: &Same as Algorithm~2.\\
						5. Correction: & Same as Algorithm~3.
					\end{tabular}
				\end{minipage}\\
				\hline
			\end{tabular}
		\end{table}

	\subsection{Stability Analysis}
	Let $\mathsf{P}_0,\mathsf{P}_{i,0}, \mathsf{R} ,\mathsf{R}_i, \mathsf{Q}>0$, $\mathsf{A}, \mathsf{C}, \mathsf{C}_i$ be matrices with appropriate dimension. Consider the following DARE
	\begin{equation*}
		\textstyle \mathsf{P} = \mathsf{A}\left(\mathsf{P}^{-1}+ \mathsf{C}^T \mathsf{R}^{-1}\mathsf{C}\right)^{-1}\mathsf{A}^T+\mathsf{Q},
	\end{equation*}
	the modified HCRE
	\begin{equation*}
	\textstyle	\mathsf{P}_i = \mathsf{A}\left(\sum_{j=1}^{N} l_{ij}\mathsf{P}_j^{-1}+ \mathsf{C}_i^T \mathsf{R}_i^{-1}\mathsf{C}_i\right)^{-1}\mathsf{A}^T+\mathsf{Q},
	\end{equation*}
	and the corresponding DARR
	\begin{equation*}
	\textstyle	\mathsf{P}_{k+1} = \mathsf{A}\left(\mathsf{P}_{k}^{-1}+ \mathsf{C}^T \mathsf{R}^{-1}\mathsf{C}\right)^{-1}\mathsf{A}^T+\mathsf{Q}, 
	\end{equation*}
	and HCRR
	\begin{equation*}
	\textstyle	\mathsf{P}_{i,k+1} = \mathsf{A}\left(\sum_{j=1}^{N} l_{ij}\mathsf{P}_{j,k}^{-1}+ \mathsf{C}_i^T \mathsf{R}_i^{-1}\mathsf{C}_i\right)^{-1}\mathsf{A}^T+\mathsf{Q}.
	\end{equation*}

	It is shown in the literature that these equations are important for depicting the steady-state performance of distributed filtering algorithms. We 
	 have the following results for these equations and recursions.
	\begin{lemma}
	\label{lemma:DARE-HCRE-Existence-Uniqueness-Convergence}
		\begin{enumerate}[1)]
		\item If $(\mathsf{A},\mathsf{C})$ is observable, then the DARE has a unique solution, and the DARR converges to the solution of DARE, regardless of the initial value $\mathsf{P}_0$.
		\item If $\mathsf{A}$ is invertible and $(\mathsf{A},\col_{i=1}^N(\mathsf{C}_i))$ is observable, then the HCRE has a unique solution, and the HCRR converges to the solution of HCRE, regardless of the initial values $\mathsf{P}_{i,0}$.
		\end{enumerate}
	\end{lemma}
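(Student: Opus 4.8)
The plan is to treat the two parts separately, with part 1) essentially standard and the effort concentrated on part 2). \emph{Part 1).} Since $\mathsf{Q}>0$, the pair $(\mathsf{A},\mathsf{Q}^{1/2})$ is controllable, and together with observability of $(\mathsf{A},\mathsf{C})$ this is exactly the classical hypothesis under which the DARE has a unique positive definite (stabilizing) solution $\mathsf{P}^\star$ and the Riccati recursion converges to it from every $\mathsf{P}_0\ge 0$. I would recall the short argument: the Riccati map $f(\mathsf{P})=\mathsf{A}(\mathsf{P}^{-1}+\mathsf{C}^{T}\mathsf{R}^{-1}\mathsf{C})^{-1}\mathsf{A}^{T}+\mathsf{Q}$, extended to the positive semidefinite cone by continuity, is monotone and concave; the DARR initialized at $\mathsf{P}_0=0$ is nondecreasing and, by observability, uniformly bounded above (e.g.\ by the error covariance of an $n$-step deadbeat observer built from the observability matrix of $(\mathsf{A},\mathsf{C})$), hence converges to a fixed point; controllability forces this to be the unique positive semidefinite solution; and for a general $\mathsf{P}_0$ one sandwiches it between the zero-initialized trajectory and, after $n$ steps, a super-solution produced from the same observability bound, so monotonicity yields convergence to $\mathsf{P}^\star$. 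I would simply cite the textbook statement here.

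\emph{Part 2), setup and structural facts.} Write the stacked operator $\mathcal{F}(\mathsf{P}_1,\dots,\mathsf{P}_N)=(\Phi_1,\dots,\Phi_N)$ with $\Phi_i=\mathsf{A}\bigl(\sum_{j=1}^{N}l_{ij}\mathsf{P}_j^{-1}+\mathsf{C}_i^{T}\mathsf{R}_i^{-1}\mathsf{C}_i\bigr)^{-1}\mathsf{A}^{T}+\mathsf{Q}$ acting on $N$ copies of the positive definite cone, so that the HCRR reads $\mathsf{P}_{k+1}=\mathcal{F}(\mathsf{P}_k)$ and HCRE solutions are the fixed points of $\mathcal{F}$. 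I would first record four properties: (i) $\mathcal{F}$ is monotone in the componentwise Loewner order — two order-reversing inversions sandwich an order-preserving affine step, followed by the order-preserving map $X\mapsto\mathsf{A}X\mathsf{A}^{T}+\mathsf{Q}$; (ii) each $\Phi_i$ is operator concave on the product cone, because $\mathsf{P}\mapsto\bigl(\sum_j l_{ij}\mathsf{P}_j^{-1}+\mathsf{C}_i^{T}\mathsf{R}_i^{-1}\mathsf{C}_i\bigr)^{-1}$ is a weighted harmonic mean (absorb $\mathsf{C}_i^{T}\mathsf{R}_i^{-1}\mathsf{C}_i$ as a frozen dummy coordinate and rescale the weights to sum to one) and is therefore jointly operator concave and monotone, a property preserved by conjugation by $\mathsf{A}$ and by adding $\mathsf{Q}$; (iii) the uniform lower bound $\Phi_i\ge\mathsf{Q}>0$; and (iv) a uniform a priori upper bound: since the graph is connected with $l_{ii}>0$, the matrix $\mathcal{L}^{d}$ is entrywise positive, so after enough steps every coordinate of the composed recursion accumulates all of $\mathsf{C}_j^{T}\mathsf{R}_j^{-1}\mathsf{C}_j$, $j\in\mathcal{N}$, with strictly positive weight; using invertibility of $\mathsf{A}$ one rewrites a sufficiently long composition in the form of a single Kalman measurement update whose effective observability Gramian $\sum_{s}(\mathsf{A}^{-s})^{T}\bigl(\sum_j\mathsf{C}_j^{T}\mathsf{R}_j^{-1}\mathsf{C}_j\bigr)\mathsf{A}^{-s}$ is positive definite by collective observability, which bounds $\mathcal{F}^{m}(\mathsf{P})_i$ by a constant $B$ independent of $\mathsf{P}$ for $m$ large.

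\emph{Part 2), convergence and uniqueness.} With (i)--(iv) in hand, existence of an HCRE solution and global convergence of the HCRR follow. Initialized at a small sub-solution $\varepsilon I$ (with $\varepsilon I\le\mathsf{Q}$), the HCRR is nondecreasing by (i), bounded above by $B$ by (iv), hence converges to a fixed point $\underline{\mathsf{P}}$ — necessarily the minimal one, since $\varepsilon I\le\mathsf{P}^{\dagger}$ for any fixed point $\mathsf{P}^{\dagger}\ge\mathsf{Q}$. It then remains to show that $\underline{\mathsf{P}}$ is the only fixed point and attracts every trajectory, which I expect to be the crux. The plan here is a contraction argument: by (iv), $\mathcal{F}^{m}$ maps the whole product cone into the order interval $[\mathsf{Q},B]$, and on a bounded forward-invariant order interval the map $\mathcal{F}$ is a strict contraction in the componentwise Thompson metric — inversion and congruence by the invertible $\mathsf{A}$ are Thompson isometries, the weighted-harmonic-mean step is Thompson nonexpansive (it is monotone and positively homogeneous of degree one), and the additive step $X\mapsto X+\mathsf{Q}$ contracts the Thompson metric on $\{X\le MI\}$ by the factor $M/(M+\lambda_{\min}(\mathsf{Q}))<1$. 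Banach's fixed point theorem then gives a unique fixed point in that interval, which $\mathcal{F}$ maps to itself and which therefore equals $\underline{\mathsf{P}}$, and every HCRR trajectory, entering the interval after $m$ steps, converges to it. An equivalent route is to linearize $\mathcal{F}$ at $\underline{\mathsf{P}}$ and show, using connectivity (irreducibility of $[l_{ij}]$) together with $\mathsf{A}$ invertibility and collective observability, that the resulting positive linear operator $\delta\mathsf{P}\mapsto\bigl(\sum_j l_{ij}\mathsf{A}\Pi_i^{\star}(\mathsf{P}_j^{\star})^{-1}\,\delta\mathsf{P}_j\,(\mathsf{P}_j^{\star})^{-1}\Pi_i^{\star}\mathsf{A}^{T}\bigr)_i$ — the coupled analogue of the Schur closed-loop matrix — has spectral radius below one, after which the concave tangent inequality $\mathcal{F}(\mathsf{P})-\underline{\mathsf{P}}\le D\mathcal{F}|_{\underline{\mathsf{P}}}(\mathsf{P}-\underline{\mathsf{P}})$ upgrades this to global convergence and uniqueness. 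Establishing this strict-contraction / spectral-radius estimate — i.e.\ that no persistent gap between two HCRE solutions can survive — together with the uniform bound (iv), is the hardest part; everything else is monotone-convergence bookkeeping and standard Riccati/observability estimates.
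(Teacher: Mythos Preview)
Your proposal is correct, and for part~2) it follows a genuinely different route from the paper. The paper's proof is almost entirely by citation: part~1) is referred to \cite{anderson1979optimal}, and for part~2) the paper first establishes the uniform upper bound on $\mathsf{P}_{i,k}$ --- essentially your item~(iv), via the same Gramian-type inequality taken from \cite{battistelli2014kullback}, using invertibility of $\mathsf{A}$ to iterate $\mathsf{P}_{i,k+1}^{-1}\ge\check\beta\,\mathsf{A}^{-T}(\cdot)\mathsf{A}^{-1}$ until connectivity makes all $l_{ij}^{(s)}$ positive and collective observability makes the accumulated term $\tilde\Omega_i$ positive definite --- and then delegates existence, uniqueness, and convergence of the HCRR wholesale to Lemmas~2--3 and Theorems~1--2 of \cite{qian2022harmonic}. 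Your argument is instead self-contained: after the same absorbing order interval $[\mathsf{Q},B]$, you show that the coupled map $\mathcal{F}$ is a strict Thompson-metric contraction there (the harmonic-mean step is nonexpansive by monotonicity and sub-homogeneity once $\mathsf{C}_i^{T}\mathsf{R}_i^{-1}\mathsf{C}_i$ is carried as a frozen coordinate, congruence by the invertible $\mathsf{A}$ is an isometry, and $X\mapsto X+\mathsf{Q}$ is strictly contractive on bounded sets), so Banach's theorem yields the unique fixed point and global convergence in one stroke. The paper's route buys brevity in this paper at the cost of importing a chain of external HCRE results; your route buys a transparent, reusable fixed-point argument in which the role of each hypothesis --- $\mathsf{A}$ invertible for the isometry step, $\mathsf{Q}>0$ for the strict contraction, connectivity plus collective observability for the absorbing interval --- is explicit. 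Your alternative linearization/spectral-radius sketch is in fact closer in spirit to what \cite{qian2022harmonic} actually does. One small caveat: the factor $M/(M+\lambda_{\min}(\mathsf{Q}))$ you quote for the translation step is the infinitesimal contraction rate, not a global Lipschitz constant; on the bounded interval you still get a uniform factor strictly below~$1$, which is all the argument needs.
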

	\begin{proof}
		{1)} can be found in \cite{anderson1979optimal}.
		
		{2)}
		Firstly, we prove that $\mathsf{P}_{i,k}$ is uniformly bounded. This proof is  modified from the proof of Theorem~4 in \cite{battistelli2014kullback}.
		
		It is obvious that $\mathsf{P}_{i,k} >\mathsf{Q}$, thus we have 
		\begin{equation*}
			\sum_{j=1}^{N}l_{ij}\mathsf{P}_{j,k}^{-1} + \mathsf{C}_i^T \mathsf{R}_i^{-1}\mathsf{C}_i \le \mathsf{Q}^{-1}+ \sum_{j=1}^{N}\mathsf{C}_j^{T} \mathsf{R}_j^{-1}\mathsf{C}_j.
		\end{equation*}
		By fact (ii) of Lemma~1 in \cite{battistelli2014kullback}, for some positive scalar  $\check{\beta}\le 1$, the following inequality holds
		\begin{equation*}
			\begin{aligned}
			\mathsf{P}_{i,k+1}^{-1} 
			&\textstyle= ( \mathsf{A}(\sum_{j=1}^{N}l_{ij}\mathsf{P}_{j,k}^{-1} + \mathsf{C}_i^T \mathsf{R}_i^{-1}\mathsf{C}_i )^{-1}\mathsf{A}^T +\mathsf{Q})^{-1}\\
			&\textstyle\ge \check{\beta}\mathsf{A}^{-T} (\sum_{j=1}^{N}l_{ij}\mathsf{P}_{j,k}^{-1} + \mathsf{C}_i^T \mathsf{R}_i^{-1}\mathsf{C}_i) \mathsf{A}^{-1} \\
			&\textstyle \ge \check{\beta}^{\bar{k}}(\mathsf{A}^{-T})^{\bar{k}}(\sum_{j=1}^{N}l_{ij}^{(\bar{k})}\mathsf{P}_{j,k+1-\bar{k}}^{-1} )(\mathsf{A}^{-1})^{\bar{k}} +\tilde{\Omega}_i,
			 \end{aligned}
		\end{equation*}
		where $\tilde{\Omega}_i  = \sum_{s=2}^{\bar{k}}\check{\beta}^{s}(\mathsf{A}^{-T})^{s}(\sum_{j=1}^{N}l_{ij}^{(s-1)}\mathsf{C}_j^T\mathsf{R}_j^{-1}\mathsf{C}_j )(\mathsf{A}^{-1})^{s}$, $\bar{k}\ge 2$.
		
		As discussed in \cite{battistelli2014kullback}, $\tilde{\Omega}_i$ is positive definite for sufficiently large $\bar{k}$, thus $ \mathsf{P}_{i,k+1} \le \tilde{\Omega}^{-1}$ for $k\ge \bar{k}$. Together with $\mathsf{P}_{i,k} >\mathsf{Q}$, we have $\mathsf{P}_{i,k+1}$ is uniformly bounded.
		
		By following the same procedure used in the proofs of Lemmas~2-3, and Theorems~1-2 in \cite{qian2022harmonic}, with replacing $\tilde{C}_i$ and $\tilde{R}_i$ by  $\mathsf{C}_i$ and $ \mathsf{R}_i$, both the existence and uniqueness of the solution to the HCRE, as well as the convergence of the HCRR to the solution of the HCRE, can be established.

	\end{proof}
	
	\begin{lemma}~
		\label{lemma:DARE-HCRE-preserve-order}
	\begin{enumerate}[1)]
		\item If $(\mathsf{A},\mathsf{C})$ is observable, then the solution of the $DARE$ preserves the order of $\mathsf{Q}, \mathsf{R}$.
		\item  If $\mathsf{A}$ is invertible and $(\mathsf{A},\col_{i=1}^N(\mathsf{C}_i))$ is observable, then the solution of the $HCRE$  preserves the order of $\mathsf{Q}, \diag_{i=1}^N(\mathsf{R}_i)$.
	\end{enumerate}
	\end{lemma}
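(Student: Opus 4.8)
The plan is to reduce the statement to the elementary monotonicity of the Riccati iteration maps and then pass to the limit using Lemma~\ref{lemma:DARE-HCRE-Existence-Uniqueness-Convergence}. Throughout, $\mathsf{A}$ and $\mathsf{C}$ (resp. the $\mathsf{C}_i$) are held fixed and only $\mathsf{Q},\mathsf{R}$ (resp. $\mathsf{Q}$ and the $\mathsf{R}_i$) are varied, so ``preserves the order'' means the solution map is monotone nondecreasing in these parameters. Fix two admissible tuples with $\mathsf{Q}^{(1)}\le\mathsf{Q}^{(2)}$ and $\mathsf{R}^{(1)}\le\mathsf{R}^{(2)}$, and let $\Phi_\ell(\mathsf{P}):=\mathsf{A}\big(\mathsf{P}^{-1}+\mathsf{C}^T(\mathsf{R}^{(\ell)})^{-1}\mathsf{C}\big)^{-1}\mathsf{A}^T+\mathsf{Q}^{(\ell)}$ denote the two DARR maps. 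I would first record three facts: (i) $\Phi_\ell$ is monotone in its argument, i.e. $0<\mathsf{P}\le\mathsf{P}'$ implies $\Phi_\ell(\mathsf{P})\le\Phi_\ell(\mathsf{P}')$, obtained from $\mathsf{P}^{-1}\ge(\mathsf{P}')^{-1}$, adding the fixed PSD term $\mathsf{C}^T(\mathsf{R}^{(\ell)})^{-1}\mathsf{C}$, inverting again, congruence by $\mathsf{A}$, and adding $\mathsf{Q}^{(\ell)}$; (ii) $\mathsf{R}^{(1)}\le\mathsf{R}^{(2)}$ gives $(\mathsf{R}^{(1)})^{-1}\ge(\mathsf{R}^{(2)})^{-1}$, hence $\mathsf{C}^T(\mathsf{R}^{(1)})^{-1}\mathsf{C}\ge\mathsf{C}^T(\mathsf{R}^{(2)})^{-1}\mathsf{C}$; (iii) $\mathsf{Q}^{(1)}\le\mathsf{Q}^{(2)}$ enters additively. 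Facts (ii)--(iii), combined with the order reversal under inversion and with congruence by $\mathsf{A}$, yield $\Phi_1(\mathsf{P})\le\Phi_2(\mathsf{P})$ for every $\mathsf{P}>0$.

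Next I would run both DARRs from the common initial value $\mathsf{P}_0>0$; all iterates remain $\ge\mathsf{Q}^{(\ell)}>0$, so fact (i) applies at every step. A one-line induction shows $\mathsf{P}_k^{(1)}\le\mathsf{P}_k^{(2)}$ for all $k$: assuming this at step $k$,
\[
\mathsf{P}_{k+1}^{(1)}=\Phi_1\big(\mathsf{P}_k^{(1)}\big)\le\Phi_1\big(\mathsf{P}_k^{(2)}\big)\le\Phi_2\big(\mathsf{P}_k^{(2)}\big)=\mathsf{P}_{k+1}^{(2)},
\]
where the first inequality uses (i) and the second uses $\Phi_1\le\Phi_2$. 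Since $(\mathsf{A},\mathsf{C})$ is observable, part~1) of Lemma~\ref{lemma:DARE-HCRE-Existence-Uniqueness-Convergence} gives $\mathsf{P}_k^{(\ell)}\to\mathsf{P}^{(\ell)}$, the unique DARE solution, and the inequality passes to the limit, proving part~1).

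For part~2) the argument is the same, carried out on the coupled recursion. Put $\Psi_{i,\ell}(\{\mathsf{P}_j\}):=\mathsf{A}\big(\sum_{j=1}^N l_{ij}\mathsf{P}_j^{-1}+\mathsf{C}_i^T(\mathsf{R}_i^{(\ell)})^{-1}\mathsf{C}_i\big)^{-1}\mathsf{A}^T+\mathsf{Q}^{(\ell)}$. Nonnegativity of the weights $l_{ij}$ makes $\Psi_{i,\ell}$ monotone in each matrix argument $\mathsf{P}_j$ by the same reasoning as in (i), and the hypotheses $\mathsf{Q}^{(1)}\le\mathsf{Q}^{(2)}$ together with $\diag_{i=1}^N(\mathsf{R}_i^{(1)})\le\diag_{i=1}^N(\mathsf{R}_i^{(2)})$ — the latter being exactly $\mathsf{R}_i^{(1)}\le\mathsf{R}_i^{(2)}$ for every $i$ — give $\Psi_{i,1}\le\Psi_{i,2}$ pointwise. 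Running both HCRRs from the common profile $\{\mathsf{P}_{i,0}\}$ and inducting on $k$ exactly as above yields $\mathsf{P}_{i,k}^{(1)}\le\mathsf{P}_{i,k}^{(2)}$ for all $i$ and $k$; invertibility of $\mathsf{A}$ and collective observability then let part~2) of Lemma~\ref{lemma:DARE-HCRE-Existence-Uniqueness-Convergence} send $k\to\infty$, giving $\mathsf{P}_i^{(1)}\le\mathsf{P}_i^{(2)}$.

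The only point requiring genuine care is fact (i), the order-preservation of the Riccati map in its matrix argument, which should be spelled out cleanly through the chain ``invert (reverse order), add a fixed PSD term (keep order), invert (reverse back), congruence by $\mathsf{A}$ (keep order), add $\mathsf{Q}$ (keep order)''; all of these are standard properties of the PSD cone. Everything else is bookkeeping, and the convergence of the recursions to their unique fixed points — which is what lets us pass the per-step inequalities to the limit — is already supplied by Lemma~\ref{lemma:DARE-HCRE-Existence-Uniqueness-Convergence}, so no new fixed-point or stability analysis is needed here.
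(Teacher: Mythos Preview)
Your proposal is correct and follows essentially the same approach as the paper: establish monotonicity of the Riccati recursion map in both its argument and its parameters, propagate the inequality $\mathsf{P}_k^{(1)}\le\mathsf{P}_k^{(2)}$ by induction on $k$, and then pass to the limit via Lemma~\ref{lemma:DARE-HCRE-Existence-Uniqueness-Convergence}. The only cosmetic difference is that the paper allows ordered initial values $\mathsf{P}_{1,0}\ge\mathsf{P}_{2,0}$ whereas you start from a common one; both variants yield the same conclusion.
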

	
	\begin{proof}
		
		{1)}
		Let $\mathsf{Q}_1\ge \mathsf{Q}_2$ and $\mathsf{R}_1 \ge \mathsf{R}_2$. We shall prove $\mathsf{P}_1 \ge \mathsf{P}_2$, where $\mathsf{P}_i$ is the solution to $DARE(\mathsf{A}, \mathsf{C}, \mathsf{Q}_i, \mathsf{R}_i)$.
		
		We consider the corresponding DARR firstly. Assume that $\mathsf{P}_{1,0} \ge \mathsf{P}_{2,0}$. For all $k\in \mathbb{Z}^+$, if $\mathsf{P}_{1,k-1} \ge \mathsf{P}_{2,k-1}$, we have 
		\begin{equation*}
			\begin{aligned}
			\mathsf{P}_{1,k} &= \mathsf{A}(\mathsf{P}_{1,k-1}^{-1} + \mathsf{C}^T \mathsf{R}_1^{-1} \mathsf{C})^{-1}\mathsf{A}^T + \mathsf{Q}_1 \\
			&\ge \mathsf{A}(\mathsf{P}_{2,k-1}^{-1} + \mathsf{C}^T \mathsf{R}_2^{-1} \mathsf{C})^{-1}\mathsf{A}^T + \mathsf{Q}_2 = \mathsf{P}_{2,k}.
			\end{aligned}
		\end{equation*}
		Hence, $\mathsf{P}_{1,k} \ge \mathsf{P}_{2,k}$ for all $k\in \mathbb{Z}_0^+$ by deduction. 
		
		Note that $\mathsf{P}_{i,k}$ converges to $\mathsf{P}_{i}$, and $\mathsf{P}_{i}$ is the unique solution to the DARE, the proof is complete.
		
		{2)}
		Similar to 1).
	\end{proof}

	\begin{lemma}
		\label{lemma:DARE-HCRE-Continuous-on-R}
		\begin{enumerate}[1)]
			\item  If $(\mathsf{A},\mathsf{C})$ is observable, then the solution of the $DARE$ is continuously dependent on $\mathsf{R}$, i.e., $ \forall \varepsilon >0$, $\exists \delta >0$, s.t.  $\forall \grave{\mathsf{R}}$ satisfying $\grave{\mathsf{R}} = \grave{\mathsf{R}}^T$ and  $ \|\grave{\mathsf{R}}-\mathsf{R}\|_F< \delta$, the solution  of $DARE({\mathsf{A}}, \mathsf{C}, {\mathsf{Q}}, \grave{\mathsf{R}})$, denoted as $\grave{\mathsf{P}}$, satisfies
			\begin{equation*}
				\|\grave{\mathsf{P}} - \mathsf{P}\|_F < \varepsilon.
			\end{equation*}
			\item If $\mathsf{A}$ is invertible and $(\mathsf{A},\col_{i=1}^N(\mathsf{C}_i))$ is observable, then the solution of the $HCRE$ is continuously dependent on $\diag_{i=1}^N(\mathsf{R}_i)$. 
		\end{enumerate}
	\end{lemma}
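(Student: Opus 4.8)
The plan is to derive both statements from the order-preservation property (Lemma~\ref{lemma:DARE-HCRE-preserve-order}) and the uniqueness of the steady-state solution (Lemma~\ref{lemma:DARE-HCRE-Existence-Uniqueness-Convergence}) by a sandwich argument, so that no direct differentiation of the Riccati operator is needed. I describe part~1); part~2) follows along the same lines with $\mathsf{R}$ replaced by $\diag_{i=1}^N(\mathsf{R}_i)$ and the DARE by the HCRE.

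For part~1), fix $\mathsf{R}>0$ with DARE solution $\mathsf{P}$, and for small $\epsilon>0$ (say $\epsilon<\lambda_{\min}(\mathsf{R})$) introduce the bracketing data $\mathsf{R}^{\pm}_\epsilon:=\mathsf{R}\pm\epsilon I>0$, with corresponding DARE solutions $\mathsf{P}^{\pm}_\epsilon$ (which exist and are unique by Lemma~\ref{lemma:DARE-HCRE-Existence-Uniqueness-Convergence}). First I would observe that any symmetric $\grave{\mathsf{R}}$ with $\|\grave{\mathsf{R}}-\mathsf{R}\|_F<\epsilon$ satisfies $\mathsf{R}^{-}_\epsilon\le\grave{\mathsf{R}}\le\mathsf{R}^{+}_\epsilon$ (because $\|\cdot\|_2\le\|\cdot\|_F$), so in particular $\grave{\mathsf{R}}>0$ and its DARE solution $\grave{\mathsf{P}}$ is well defined and unique. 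Since the DARE solution preserves the order of $\mathsf{R}$ (Lemma~\ref{lemma:DARE-HCRE-preserve-order}, with $\mathsf{Q}$ held fixed), one has $\mathsf{P}^{-}_\epsilon\le\grave{\mathsf{P}}\le\mathsf{P}^{+}_\epsilon$ and likewise $\mathsf{P}^{-}_\epsilon\le\mathsf{P}\le\mathsf{P}^{+}_\epsilon$. Then, using monotonicity of the Frobenius norm on the positive semidefinite cone (if $0\le X\le Y$ then $\operatorname{Tr}(X(Y-X))\ge 0$ and $\operatorname{Tr}((Y-X)Y)\ge 0$, hence $\operatorname{Tr}(X^2)\le\operatorname{Tr}(XY)\le\operatorname{Tr}(Y^2)$ and so $\|X\|_F\le\|Y\|_F$), I would bound
\[
\|\grave{\mathsf{P}}-\mathsf{P}\|_F\le\|\grave{\mathsf{P}}-\mathsf{P}^{-}_\epsilon\|_F+\|\mathsf{P}-\mathsf{P}^{-}_\epsilon\|_F\le 2\,\|\mathsf{P}^{+}_\epsilon-\mathsf{P}^{-}_\epsilon\|_F .
\]
It then remains to show $\|\mathsf{P}^{+}_\epsilon-\mathsf{P}^{-}_\epsilon\|_F\to 0$ as $\epsilon\downarrow 0$, after which taking $\delta:=\epsilon$ small enough makes the right-hand side below any prescribed $\varepsilon$. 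For this, order preservation gives that $\mathsf{P}^{+}_\epsilon$ decreases and $\mathsf{P}^{-}_\epsilon$ increases as $\epsilon\downarrow 0$, both trapped (for $\epsilon$ below a fixed $\epsilon_0$) between $\mathsf{Q}$ and the fixed matrix $\mathsf{P}^{+}_{\epsilon_0}$, so the monotone limits $\mathsf{P}^{\pm}_0$ exist; passing $\epsilon\downarrow 0$ in $\mathsf{P}^{\pm}_\epsilon=\mathsf{A}((\mathsf{P}^{\pm}_\epsilon)^{-1}+\mathsf{C}^T(\mathsf{R}^{\pm}_\epsilon)^{-1}\mathsf{C})^{-1}\mathsf{A}^T+\mathsf{Q}$ — legitimate since $(\mathsf{P},\mathsf{R})\mapsto\mathsf{A}(\mathsf{P}^{-1}+\mathsf{C}^T\mathsf{R}^{-1}\mathsf{C})^{-1}\mathsf{A}^T+\mathsf{Q}$ is continuous where $\mathsf{P}\ge\mathsf{Q}>0$ and $\mathsf{R}>0$, the inner matrix being $\ge\mathsf{P}^{-1}>0$ — shows that $\mathsf{P}^{+}_0$ and $\mathsf{P}^{-}_0$ both solve $DARE(\mathsf{A},\mathsf{C},\mathsf{Q},\mathsf{R})$, whence $\mathsf{P}^{+}_0=\mathsf{P}^{-}_0=\mathsf{P}$ by uniqueness.

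For part~2) I would run the same argument with $\diag_{i=1}^N(\mathsf{R}_i)$ in place of $\mathsf{R}$: perturbing each block to $\mathsf{R}_i\pm\epsilon I$ brackets $\diag_i(\grave{\mathsf{R}}_i)$ between $\diag_i(\mathsf{R}_i-\epsilon I)$ and $\diag_i(\mathsf{R}_i+\epsilon I)$ whenever $\max_i\|\grave{\mathsf{R}}_i-\mathsf{R}_i\|_F<\epsilon$; Lemma~\ref{lemma:DARE-HCRE-preserve-order} part~2) then squeezes each $\grave{\mathsf{P}}_i$ and $\mathsf{P}_i$ between the HCRE solutions $\mathsf{P}^{-}_{i,\epsilon}$ and $\mathsf{P}^{+}_{i,\epsilon}$; and passing $\epsilon\downarrow 0$ in the HCRE (the inner matrix $\sum_j l_{ij}\mathsf{P}_{j}^{-1}+\mathsf{C}_i^T\mathsf{R}_i^{-1}\mathsf{C}_i\ge l_{ii}\mathsf{P}_i^{-1}>0$ is invertible, so the operator is again continuous on the relevant set), together with uniqueness of the HCRE solution from Lemma~\ref{lemma:DARE-HCRE-Existence-Uniqueness-Convergence}, gives $\mathsf{P}^{\pm}_{i,\epsilon}\to\mathsf{P}_i$ for every $i$ and hence the stated continuity of $\{\mathsf{P}_i\}$ in $\{\mathsf{R}_i\}$. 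The step I expect to require the most care is the limit passage: one must verify that the monotone limits of the bracketing solutions exist and still satisfy the (coupled) Riccati equation, so that uniqueness can be invoked; everything else reduces mechanically to order preservation and norm monotonicity on the positive semidefinite cone. An equivalent alternative, avoiding the explicit monotone limits, is a contradiction argument: if $\grave{\mathsf{R}}^{(k)}\to\mathsf{R}$ but $\|\grave{\mathsf{P}}^{(k)}-\mathsf{P}\|_F\ge\varepsilon_0$, order preservation makes $\{\grave{\mathsf{P}}^{(k)}\}$ bounded, a convergent subsequence has a limit solving the DARE/HCRE, and uniqueness forces that limit to equal $\mathsf{P}$ (resp.\ $\mathsf{P}_i$), a contradiction.
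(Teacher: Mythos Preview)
Your sandwich argument is correct and complete: bracketing $\grave{\mathsf{R}}$ by $\mathsf{R}\pm\epsilon I$, invoking order preservation, and passing to the monotone limits with uniqueness is a sound and self-contained route. The paper takes a genuinely different path. Instead of additive bracketing $\mathsf{R}\pm\epsilon I$ and a limiting argument, it exploits the \emph{homogeneity} of the Riccati equation: if $\mathsf{P}$ solves $DARE(\mathsf{A},\mathsf{C},\mathsf{Q},\mathsf{R})$ then $k\mathsf{P}$ solves $DARE(\mathsf{A},\mathsf{C},k\mathsf{Q},k\mathsf{R})$. Bracketing $\grave{\mathsf{R}}$ multiplicatively by $(1\pm c)\mathsf{R}$ and using order preservation in both $\mathsf{Q}$ and $\mathsf{R}$ then traps $\grave{\mathsf{P}}$ between $(1-c)\mathsf{P}$ and $(1+c)\mathsf{P}$, yielding an explicit $\delta$ proportional to $\varepsilon$ with no limit passage or appeal to uniqueness. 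What each buys: the paper's scaling trick gives a quantitative, essentially Lipschitz estimate in one stroke and is shorter; your monotone-limit (or compactness/subsequence) argument is more robust---it would survive in settings where the scaling identity fails---but delivers only qualitative continuity and relies on the uniqueness statement from Lemma~\ref{lemma:DARE-HCRE-Existence-Uniqueness-Convergence} at the end.
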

		
	\begin{proof} 
		For all $k>0$, it can be easily verified that if $\mathsf{P}$ is the solution to $DARE(\mathsf{A}, \mathsf{C}, \mathsf{Q}, \mathsf{R})$, then $k\mathsf{P}$ is the solution to $DARE(\mathsf{A}, \mathsf{C}, k\mathsf{Q}, k\mathsf{R})$. Similarly, if $\{\mathsf{P}_i\}$ is the solution to $HCRE(\mathsf{A}, \col_{i=1}^N(\mathsf{C}_i), \mathsf{Q}, \diag_{i=1}^N(\mathsf{R}_i), \mathcal{L})$, then $\{k\mathsf{P}_i\}$ is the solution to $HCRE(\mathsf{A}, \col_{i=1}^N(\mathsf{C}_i), k\mathsf{Q}, \diag_{i=1}^N(k\mathsf{R}_i),  \mathcal{L})$.
		
		{1)}
		Given any $\varepsilon>0$, let $\delta =\frac{\alpha \lambda_{min}(\mathsf{R})\varepsilon}{\|\mathsf{P}\|_2}$, where $ 0 < \alpha< \min(\frac{1}{\sqrt{n}}, \frac{\|\mathsf{P}\|_2}{\varepsilon}  )$, and $n$ is the dimension of $\mathsf{A}$.
		
		Then for all $\grave{\mathsf{R}}=\grave{\mathsf{R}}^T$ satisfying $ \|\grave{\mathsf{R}}-\mathsf{R}\|_F< \delta$, 
		\begin{equation*}
			\begin{aligned}
			\mathsf{R}-\frac{\alpha \lambda_{min}(\mathsf{R})\varepsilon}{\|\mathsf{P}\|_2}I 
			< \grave{\mathsf{R}} < \mathsf{R}+\frac{\alpha \lambda_{min}(\mathsf{R})\varepsilon}{\|\mathsf{P}\|_2}I.
			\end{aligned}
		\end{equation*}
		In use of $\lambda_{min}(\mathsf{R})I \le \mathsf{R}$ and $ \frac{\alpha \varepsilon}{\|\mathsf{P}\|_2} < 1$, we have
		\begin{equation*}
			\begin{aligned}
			0 < (1-\frac{\alpha \varepsilon}{\|\mathsf{P}\|_2})\mathsf{R}
			< \grave{\mathsf{R}} < (1+\frac{\alpha \varepsilon}{\|\mathsf{P}\|_2})\mathsf{R}.
		\end{aligned}
		\end{equation*}
		
		Since $(1-\frac{\alpha \varepsilon}{\|\mathsf{P}\|_2})\mathsf{P}$ is the solution to $DARE(\mathsf{A},\mathsf{C},(1-\frac{\alpha \varepsilon}{\|\mathsf{P}\|_2})\mathsf{Q},(1-\frac{\alpha \varepsilon}{\|\mathsf{P}\|_2})\mathsf{R})$ and $\grave{\mathsf{P}}$ is the solution to $DARE(\mathsf{A}, \mathsf{C}, \mathsf{Q}, \grave{\mathsf{R}})$, in combination with Lemma~\ref{lemma:DARE-HCRE-preserve-order}, we have
		\begin{equation*}
			(1-\frac{\alpha \varepsilon}{\|\mathsf{P}\|_2})\mathsf{P} < \grave{\mathsf{P}}.
		\end{equation*}
		It  can be proved in the same way that $ \grave{\mathsf{P}}<(1+\frac{\alpha \varepsilon}{\|\mathsf{P}\|_2})\mathsf{P} $, and we get
		\begin{equation*}
			-\frac{\alpha \varepsilon}{\|\mathsf{P}\|_2}\mathsf{P} < \grave{\mathsf{P}} -\mathsf{P} < \frac{\alpha \varepsilon}{\|\mathsf{P}\|_2}\mathsf{P}.
		\end{equation*}
		Since $\mathsf{P}, \grave{\mathsf{P}}>0$, we have
		\begin{equation*}
			\| \grave{\mathsf{P}} -\mathsf{P} \|_F  \le \sqrt{n} \| \grave{\mathsf{P}} -\mathsf{P} \|_2 <\sqrt{n} \frac{\alpha \varepsilon}{\|\mathsf{P}\|_2}\|\mathsf{P}\|_2< \varepsilon.
		\end{equation*}
		
		{2)} Similar to 1).
	\end{proof}

	\begin{lemma}~ 
		\label{lemma:DARE-HCRE-Convergent-Parameter}
		\begin{enumerate}[1)]
			\item If $(\mathsf{A},\mathsf{C})$ is observable and $\grave{\mathsf{R}}_k$ converges to $\mathsf{R}$, then the $DARR(\mathsf{A},\mathsf{C},\mathsf{Q},\grave{\mathsf{R}}_k)$ converges to the solution of $DARE(\mathsf{A},\mathsf{C},\mathsf{Q},\mathsf{R})$.
			\item If $\mathsf{A}$ is invertible, $(\mathsf{A},\col_{i=1}^N(\mathsf{C}_i))$ is observable, and $\diag_{i=1}^N(\grave{\mathsf{R}}_{i,k})$ converges to $\diag_{i=1}^N(\mathsf{R}_i)$, then the $HCRR(\mathsf{A},\col_{i=1}^N(\mathsf{C}_i),\mathsf{Q},\diag_{i=1}^N(\grave{\mathsf{R}}_{i,k}), \mathcal{L})$ converges to the solution of $HCRE(\mathsf{A},\col_{i=1}^N(\mathsf{C}_i),\mathsf{Q},
			\diag_{i=1}^N(\mathsf{R}_i), \mathcal{L})$.
		\end{enumerate}
	\end{lemma}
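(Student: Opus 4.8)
The plan is to prove both parts by the same monotone sandwich: squeeze the time-varying recursion between two constant-parameter recursions whose limits are pinned down by Lemmas~\ref{lemma:DARE-HCRE-Existence-Uniqueness-Convergence}--\ref{lemma:DARE-HCRE-Continuous-on-R}. I would carry out part~1) in detail; part~2) then goes through verbatim once the scalar Riccati map is replaced by the coupled HCRR map, which is still nondecreasing in every matrix argument $\mathsf{P}_{j,k}$ (because $l_{ij}\ge 0$) and in $\diag_{i=1}^N(\mathsf{R}_i)$, and whose constant-parameter existence/uniqueness/convergence and continuity are exactly the $\mathsf{A}$-invertible, collective-observability branches of those lemmas.

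First I would record the monotonicity already embedded in the proof of Lemma~\ref{lemma:DARE-HCRE-preserve-order}: the map $f_{\mathsf{R}}(\mathsf{P}):=\mathsf{A}(\mathsf{P}^{-1}+\mathsf{C}^T\mathsf{R}^{-1}\mathsf{C})^{-1}\mathsf{A}^T+\mathsf{Q}$ is nondecreasing in $\mathsf{P}>0$ and nondecreasing in $\mathsf{R}>0$. Consequently, if $\mathsf{R}_2\le\grave{\mathsf{R}}_k\le\mathsf{R}_1$ for all $k\ge K$ and the three recursions $\underline{\mathsf{P}}_k$ (constant parameter $\mathsf{R}_2$), $\grave{\mathsf{P}}_k$ (parameter $\grave{\mathsf{R}}_k$) and $\overline{\mathsf{P}}_k$ (constant parameter $\mathsf{R}_1$) share the same value at time $K$, then a one-line induction using both monotonicities in turn gives $\underline{\mathsf{P}}_k\le\grave{\mathsf{P}}_k\le\overline{\mathsf{P}}_k$ for every $k\ge K$. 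Now fix $\epsilon\in(0,1)$. Since $\grave{\mathsf{R}}_k\to\mathsf{R}$ with $\mathsf{R}>0$, writing $\grave{\mathsf{R}}_k=\mathsf{R}+\Delta_k$ with $\|\Delta_k\|_2\to0$ gives $\pm\Delta_k\le\tfrac{\|\Delta_k\|_2}{\lambda_{\min}(\mathsf{R})}\mathsf{R}$, so there is $K(\epsilon)$ with $(1-\epsilon)\mathsf{R}\le\grave{\mathsf{R}}_k\le(1+\epsilon)\mathsf{R}$ (in particular $\grave{\mathsf{R}}_k>0$, so the recursion is genuinely defined) for all $k\ge K(\epsilon)$; as convergence is a tail property we may study $\{\grave{\mathsf{P}}_k\}$ from time $K(\epsilon)$ on, with $\grave{\mathsf{P}}_{K(\epsilon)}$ a fixed positive definite matrix.

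Let $\underline{\mathsf{P}}_k^\epsilon,\overline{\mathsf{P}}_k^\epsilon$ be the constant-parameter DARR iterates with parameters $(1-\epsilon)\mathsf{R}$ and $(1+\epsilon)\mathsf{R}$, both started from $\grave{\mathsf{P}}_{K(\epsilon)}$ at time $K(\epsilon)$. The sandwich gives $\underline{\mathsf{P}}_k^\epsilon\le\grave{\mathsf{P}}_k\le\overline{\mathsf{P}}_k^\epsilon$ for $k\ge K(\epsilon)$; in particular $\{\grave{\mathsf{P}}_k\}$ is bounded. By the first part of Lemma~\ref{lemma:DARE-HCRE-Existence-Uniqueness-Convergence} — and here the ``regardless of the initial value'' clause is essential, since it lets us re-anchor at $K(\epsilon)$ — $\underline{\mathsf{P}}_k^\epsilon\to\underline{\mathsf{P}}^\epsilon$ and $\overline{\mathsf{P}}_k^\epsilon\to\overline{\mathsf{P}}^\epsilon$, the unique solutions of $DARE(\mathsf{A},\mathsf{C},\mathsf{Q},(1-\epsilon)\mathsf{R})$ and $DARE(\mathsf{A},\mathsf{C},\mathsf{Q},(1+\epsilon)\mathsf{R})$. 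Hence every subsequential limit $\mathsf{P}^\star$ of $\{\grave{\mathsf{P}}_k\}$ obeys $\underline{\mathsf{P}}^\epsilon\le\mathsf{P}^\star\le\overline{\mathsf{P}}^\epsilon$. Letting $\epsilon\downarrow0$ and using the first part of Lemma~\ref{lemma:DARE-HCRE-Continuous-on-R} (continuity of the DARE solution in $\mathsf{R}$, with $(1\pm\epsilon)\mathsf{R}\to\mathsf{R}$ in Frobenius norm), both $\underline{\mathsf{P}}^\epsilon$ and $\overline{\mathsf{P}}^\epsilon$ tend to $\mathsf{P}$, the solution of $DARE(\mathsf{A},\mathsf{C},\mathsf{Q},\mathsf{R})$; therefore $\mathsf{P}^\star=\mathsf{P}$. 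A bounded sequence whose only subsequential limit is $\mathsf{P}$ converges to $\mathsf{P}$, which is the claim.

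The main obstacle I anticipate is purely one of bookkeeping the two limits: $K(\epsilon)$ depends on $\epsilon$, so the comparison recursions must be freshly anchored for each $\epsilon$ at the genuine iterate $\grave{\mathsf{P}}_{K(\epsilon)}$, and the $\epsilon\downarrow0$ passage may only follow the $k\to\infty$ passage — getting this order wrong is the typical pitfall. For part~2) the scheme is identical; the only extra care is to enforce $(1-\epsilon)\mathsf{R}_i\le\grave{\mathsf{R}}_{i,k}\le(1+\epsilon)\mathsf{R}_i$ for all $i$ simultaneously (take $K(\epsilon)=\max_i K_i(\epsilon)$, finitely many nodes) and to read $\underline{\mathsf{P}}_k^\epsilon,\overline{\mathsf{P}}_k^\epsilon$ as the full node-vectors $(\mathsf{P}_{1,k},\dots,\mathsf{P}_{N,k})$ generated by the HCRR with the deflated/inflated parameter diagonals $\diag_{i=1}^N((1\mp\epsilon)\mathsf{R}_i)$; monotonicity of the HCRR map in all its matrix arguments and in $\diag_{i=1}^N(\mathsf{R}_i)$, together with the HCRE branches of Lemmas~\ref{lemma:DARE-HCRE-Existence-Uniqueness-Convergence} and~\ref{lemma:DARE-HCRE-Continuous-on-R}, then closes the argument componentwise.
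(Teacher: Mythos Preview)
Your proposal is correct and follows essentially the same sandwich strategy as the paper: bound $\grave{\mathsf{R}}_k$ between two fixed perturbations of $\mathsf{R}$ after a finite time, run constant-parameter comparison recursions anchored at the current iterate, and invoke Lemmas~\ref{lemma:DARE-HCRE-Existence-Uniqueness-Convergence}--\ref{lemma:DARE-HCRE-Continuous-on-R} to squeeze the limit. The only cosmetic differences are that the paper uses additive perturbations $\mathsf{R}\pm\delta I$ rather than your multiplicative $(1\pm\epsilon)\mathsf{R}$, and concludes via a direct $\varepsilon$--$\delta$ bound instead of your subsequential-limit argument; both routes are equivalent.
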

	\begin{proof}
	1) Denote the solution of $DARE(\mathsf{A},\mathsf{C},\mathsf{Q},\mathsf{R})$ as $\mathsf{P}$. Consider the sequence $\grave{\mathsf{P}}_{k+1} = DARR(\mathsf{A},\mathsf{C},\mathsf{Q},\grave{\mathsf{R}}_k)$. 
	
	With Lemma~\ref{lemma:DARE-HCRE-Continuous-on-R}, we can conclude that for all $\varepsilon>0$, there exists $\delta>0$, such that the solutions of  $DARE(\mathsf{A}, \mathsf{C}, \mathsf{Q}, \mathsf{R}-\delta I)$ and $DARE(\mathsf{A}, \mathsf{C}, \mathsf{Q}, \mathsf{R}+\delta I_n)$, denoted as $\check{\mathsf{P}}$ and $\hat{\mathsf{P}}$ respectively, satisfy
	\begin{equation*}
		\|\check{\mathsf{P}}-\mathsf{P}\|_F < \varepsilon/2, \; \|\hat{\mathsf{P}}-\mathsf{P}\|_F < \varepsilon/2.
	\end{equation*}
	
	Since $\grave{\mathsf{R}}_k$ converges to $\mathsf{R}$, there exists $N_1\in \mathbb{Z}^+$, for all $k>N_1$, it holds that $\|\grave{\mathsf{R}}_k-\mathsf{R}\|_F<\delta$, which means 
	\begin{equation*}
		\mathsf{R}-\delta I < \grave{\mathsf{R}}_k < \mathsf{R}+\delta I.
	\end{equation*}

	Construct two iterations $\check{\mathsf{P}}_k, \hat{\mathsf{P}}_k $ by
	\begin{equation*}
	\left\{
	\begin{aligned}
		&\check{\mathsf{P}}_{k+1} = \hat{\mathsf{P}}_{k+1} = DARR(\mathsf{A},\mathsf{C},\mathsf{Q},\grave{\mathsf{R}}_k),&k\le N,\\
		&\check{\mathsf{P}}_{k+1} = DARR(\mathsf{A},\mathsf{C},\mathsf{Q},\mathsf{R}-\delta I),&k> N,\\
		&\hat{\mathsf{P}}_{k+1} = DARR(\mathsf{A},\mathsf{C},\mathsf{Q},\mathsf{R}+\delta I),&k> N.
		\end{aligned}\right.
	\end{equation*}
	
	Similarly to the proof of Lemma~\ref{lemma:DARE-HCRE-preserve-order}, it can be obtained that
	\begin{equation}
		\label{eq:proof of lemma8 sandwich}
		\check{\mathsf{P}}_k \le \grave{\mathsf{P}}_k \le \hat{\mathsf{P}}_k, \quad k \in \mathbb{Z}^+. 
	\end{equation}
	
	With Lemma~\ref{lemma:DARE-HCRE-Existence-Uniqueness-Convergence}, we have $\check{\mathsf{P}}_{k}$ and $\hat{\mathsf{P}}_{k}$ converge to $\check{\mathsf{P}}$ and $\hat{\mathsf{P}}$, respectively. Therefore, there exists $N_2>N_1$, for all $k>N_2$, 
	\begin{equation*}
			\|\check{\mathsf{P}}_k-\check{\mathsf{P}}\|_F < \varepsilon/2, \; \|\hat{\mathsf{P}}_k-\hat{\mathsf{P}}\|_F < \varepsilon/2.
	\end{equation*}
	
	Hence, for all $k>N_2$, 
	\begin{equation*}
		\begin{aligned}
				\|\check{\mathsf{P}}_k - \mathsf{P}\|_F \le \|\check{\mathsf{P}}_k-\check{\mathsf{P}}\|_F + \|\check{\mathsf{P}}-\mathsf{P}\|_F =\varepsilon,\\
				\|\hat{\mathsf{P}}_k - \mathsf{P}\|_F \le \|\hat{\mathsf{P}}_k-\hat{\mathsf{P}}\|_F + \|\hat{\mathsf{P}}-\mathsf{P}\|_F =\varepsilon.
		\end{aligned}
	\end{equation*}
	
	Together with \eqref{eq:proof of lemma8 sandwich},  there is 
	\begin{equation*}
		\mathsf{P}-\varepsilon I \le \check{\mathsf{P}}_k \le \grave{\mathsf{P}}_k \le \hat{\mathsf{P}}_k \le \mathsf{P}+\varepsilon I, \; k>N_2, 
	\end{equation*}
	which implies $\|\grave{\mathsf{P}}_k-\mathsf{P}\|_2 \le \varepsilon, \; k>N_2$.
	
	2) Similar to 1).
		\end{proof}

		Based on the previous results for parameter converged DARR and HCRR, we can now analyze the stability and the performance of the Modified CM/CI methods.
		
		\subsection{Performance Analysis of the Modified CM}

		Lemmas~\ref{lemma:DARE-HCRE-Existence-Uniqueness-Convergence}-\ref{lemma:DARE-HCRE-Convergent-Parameter} in the previous subsection guarantee the existence and uniqueness of the solutions to DARE/HCRE and the convergence of  DARR/HCRR, which are useful in analyzing the stability of the proposed methods. However, these lemmas require the parameter $R$ (or $R_i$) to be positive definite, while in the modified methods it can only be guaranteed that  $\tilde{R}_i \ge 0$. Nonetheless, the equations considered in this paper have a special structure such that Lemmas~\ref{lemma:DARE-HCRE-Existence-Uniqueness-Convergence}-\ref{lemma:DARE-HCRE-Convergent-Parameter} can be adapted to the case that $\breve{R}_i$ is not full rank, as stated in the following lemma.
		
		\begin{lemma}
			\label{lemma: R dagger}
			In Algorithms~1-4, there exists  $\breve{R}_i >0$ satisfying $$\tilde{C}_i^T \breve{R}_i^{-1} \tilde{C}_i = \tilde{C}_i^T (\tilde{R}_i)^\dagger \tilde{C}_i.$$
		\end{lemma}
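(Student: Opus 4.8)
The plan is to reduce the claim to one structural observation---that $\tilde C_i$ and $\tilde R_i$ have the same column space---and then to construct $\breve R_i$ explicitly by ``completing'' $\tilde R_i$ to a nonsingular matrix on the orthogonal complement of that column space. Recall from \eqref{eq:covariance of  fuesed measurement} and the Covariance step of Algorithms~1--4 that $\tilde C_i = \sum_{j=1}^N l_{ij}^{(\gamma)} C_j^T R_j^{-1} C_j$ and $\tilde R_i = \sum_{j=1}^N (l_{ij}^{(\gamma)})^2 C_j^T R_j^{-1} C_j$, and that $\tilde C_i^T = \tilde C_i$ since $\tilde C_i \ge 0$.

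\emph{Step 1 (common range).} I would first show that
\[
\operatorname{range}(\tilde C_i) = \operatorname{range}(\tilde R_i) =: \mathcal V_i .
\]
Both matrices are nonnegatively weighted sums of the positive semidefinite matrices $C_j^T R_j^{-1} C_j$, with $l_{ij}^{(\gamma)} \ge 0$ (entrywise powers of the nonnegative matrix $\mathcal L$) and $l_{ij}^{(\gamma)} > 0 \iff (l_{ij}^{(\gamma)})^2 > 0$, so the index set $\mathcal J_i := \{ j : l_{ij}^{(\gamma)} > 0\}$ of ``active'' terms is the same for $\tilde C_i$ and $\tilde R_i$. For positive semidefinite $A,B$ one has $\ker(A+B) = \ker A \cap \ker B$, hence $\operatorname{range}(A+B) = \operatorname{range}(A) + \operatorname{range}(B)$; applying this and using $\operatorname{range}(C_j^T R_j^{-1} C_j) = \operatorname{range}(C_j^T)$ (valid because $R_j > 0$), both $\operatorname{range}(\tilde C_i)$ and $\operatorname{range}(\tilde R_i)$ equal $\sum_{j \in \mathcal J_i} \operatorname{range}(C_j^T)$.

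\emph{Step 2 (construction and verification).} Let $\Pi_i := \tilde R_i \tilde R_i^{\dagger}$ be the orthogonal projector onto $\mathcal V_i$ and $\Pi_i^{\perp} := I_n - \Pi_i$ the projector onto $\mathcal V_i^{\perp} = \ker \tilde R_i$, and set $\breve R_i := \tilde R_i + \Pi_i^{\perp}$. Then $\breve R_i > 0$: writing $x = x_1 + x_2$ with $x_1 \in \mathcal V_i$, $x_2 \in \mathcal V_i^{\perp}$, we get $x^T \breve R_i x = x_1^T \tilde R_i x_1 + \| x_2 \|_2^2$, strictly positive for $x \neq 0$ because $\tilde R_i$ is positive definite on $\mathcal V_i$. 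Using $\tilde R_i \Pi_i^{\perp} = 0 = \Pi_i^{\perp} \tilde R_i^{\dagger}$, idempotency of $\Pi_i^{\perp}$, and $\tilde R_i \tilde R_i^{\dagger} + \Pi_i^{\perp} = I_n$, a direct expansion gives $(\tilde R_i + \Pi_i^{\perp})(\tilde R_i^{\dagger} + \Pi_i^{\perp}) = \Pi_i + \Pi_i^{\perp} = I_n$, so $\breve R_i^{-1} = \tilde R_i^{\dagger} + \Pi_i^{\perp}$. Finally, by Step~1 and symmetry of $\tilde C_i$ we have $\Pi_i^{\perp} \tilde C_i = 0$ and $\tilde C_i \Pi_i^{\perp} = 0$, whence
\[
\tilde C_i^T \breve R_i^{-1} \tilde C_i = \tilde C_i\big(\tilde R_i^{\dagger} + \Pi_i^{\perp}\big) \tilde C_i = \tilde C_i \tilde R_i^{\dagger} \tilde C_i = \tilde C_i^T (\tilde R_i)^{\dagger} \tilde C_i ,
\]
which is the asserted identity.

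\emph{Main obstacle.} Everything after Step~1 is routine pseudoinverse bookkeeping; the one point needing care is Step~1---that squaring the consensus weights neither enlarges nor shrinks the range of the fused quantity---which is exactly where the nonnegativity of $l_{ij}^{(\gamma)}$ and $R_j>0$ are genuinely used. The proof does not invoke Lemmas~\ref{lemma:DARE-HCRE-Existence-Uniqueness-Convergence}--\ref{lemma:DARE-HCRE-Convergent-Parameter}; rather, this lemma is the bridge that lets those results (which presuppose a positive definite measurement covariance) be applied with the effective covariance $\breve R_i > 0$ in the stability analysis of Algorithms~1--4, since $\tilde C_i^T \tilde U_{i,k}^{\dagger} \tilde C_i$ and $\tilde C_i^T \tilde \Upsilon_{i,k}^{\dagger} \tilde C_i$ both converge to $\tilde C_i^T \tilde R_i^{\dagger} \tilde C_i = \tilde C_i^T \breve R_i^{-1} \tilde C_i$.
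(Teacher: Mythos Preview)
Your proof is correct, and in fact your construction $\breve R_i=\tilde R_i+\Pi_i^{\perp}$ coincides with the paper's: the paper writes $\breve R_i=V_i\operatorname{diag}(\tilde\Sigma_i^2,I_{n-r_i})V_i^{T}$, where $V_i$ is the right singular basis of $\breve C_i:=\col_{j=1}^N(l_{ij}^{(\gamma)}R_j^{-1/2}C_j)$, which is exactly $\tilde R_i$ plus the identity on $\ker\tilde R_i$.

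The route, however, is different. The paper exploits the factorizations $\tilde C_i=\bar C^{T}\breve C_i$ and $\tilde R_i=\breve C_i^{T}\breve C_i$ with $\bar C=\col_{j=1}^N(R_j^{-1/2}C_j)$, takes an SVD $\breve C_i=U_i\Sigma_iV_i^{T}$, and checks by direct computation that both $\tilde C_i^{T}\tilde R_i^{\dagger}\tilde C_i$ and $\tilde C_i^{T}\breve R_i^{-1}\tilde C_i$ reduce to $\bar C^{T}U_i\operatorname{diag}(I_{r_i},O)U_i^{T}\bar C$. Your argument is more intrinsic: you never introduce the stacked matrices or an SVD, and instead isolate the single structural fact that $\operatorname{range}(\tilde C_i)=\operatorname{range}(\tilde R_i)$ because both are PSD-weighted sums of $C_j^{T}R_j^{-1}C_j$ over the same index set $\{j:l_{ij}^{(\gamma)}>0\}$. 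This makes the proof shorter and more transparent; the paper's version, on the other hand, keeps visible the link to the ``virtual sensor'' interpretation via $\breve C_i$. A minor remark: you prove equality of the ranges but only use the inclusion $\operatorname{range}(\tilde C_i)\subseteq\operatorname{range}(\tilde R_i)$, which in the paper's formulation follows immediately from $\tilde C_i^{T}=\breve C_i^{T}\bar C$.
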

		
		\begin{proof} 
		See Appendix C. 
		\end{proof}
		
		Now we are ready to discuss the convergent results of the modified methods. Note that all the following discussions implicitly assume that Assumptions~1-3 are satisfied.
		
		\begin{theorem}
			\label{theorem:Algorithm-1-2-P-Convergence}
			If $(A, \tilde{C}_i)$ is observable for all $i\in\mathcal{N}$, then 
			\begin{enumerate}[1)]
			\item $P_{i,k+1|k} $ in Algorithm~1 converges to $P_{i}$, 
			\item $P_{i,k+1|k} $ in Algorithm~2 converges almost surely to $P_{i}$, 
			\end{enumerate}
			where $P_i$ is the solution of the following DARE
			\begin{equation}
				\label{eq:Algorithm1-2-steady-state-P}
					P_{i} = A (P_{i}^{-1} + \tilde{C}_i^T \tilde{R}_i^{\dagger} \tilde{C}_i  )^{-1}A^T + Q. 
			\end{equation}
		\end{theorem}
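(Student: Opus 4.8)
The plan is to unfold Algorithms~1--2 into a single Riccati recursion with a time‑varying ``information coefficient'', identify the limit of that coefficient, and then invoke the DARE/DARR machinery of Lemmas~\ref{lemma:DARE-HCRE-Existence-Uniqueness-Convergence}--\ref{lemma:DARE-HCRE-Convergent-Parameter}. First I would compose the prediction step with the correction step, obtaining for Algorithm~1
\begin{equation*}
	P_{i,k+1|k} = A\bigl(P_{i,k|k-1}^{-1} + \tilde C_i^T \tilde U_{i,k}^{\dagger}\tilde C_i\bigr)^{-1}A^T + Q ,
\end{equation*}
and the same with $\tilde\Upsilon_{i,k}$ in place of $\tilde U_{i,k}$ for Algorithm~2; the only time‑varying ingredient is the information matrix $\mathsf M_{i,k}:=\tilde C_i^T \tilde U_{i,k}^{\dagger}\tilde C_i$ (resp. $\tilde C_i^T\tilde\Upsilon_{i,k}^{\dagger}\tilde C_i$). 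Taking $X_j=C_j^T R_j^{-1}C_j$, so that the quadratic weighted sum $\tilde X_i$ of Section~III equals $\tilde R_i$, Lemma~\ref{lemma:direct method error}\,(1) gives $\tilde U_{i,k}^{\dagger}\to\tilde R_i^{\dagger}$ and hence $\mathsf M_{i,k}\to \mathsf M_i:=\tilde C_i^T\tilde R_i^{\dagger}\tilde C_i$, while for Algorithm~2 Lemma~\ref{lemma:stochastic method error}\,(1) together with Lemma~\ref{lemma: a.s. mat prod mat inv}\,(1) gives $\mathsf M_{i,k}\stackrel{a.s.}{\longrightarrow}\mathsf M_i$. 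So the task reduces to showing that a Riccati recursion whose information coefficient converges to $\mathsf M_i$ has iterates converging to the solution of \eqref{eq:Algorithm1-2-steady-state-P}.

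Next I would identify the limit equation and its solution. By Lemma~\ref{lemma: R dagger} there exists $\breve R_i>0$ with $\mathsf M_i=\tilde C_i^T\breve R_i^{-1}\tilde C_i$, so \eqref{eq:Algorithm1-2-steady-state-P} is exactly $DARE(A,\tilde C_i,Q,\breve R_i)$; since $(A,\tilde C_i)$ is observable, Lemma~\ref{lemma:DARE-HCRE-Existence-Uniqueness-Convergence}\,(1) shows it has the unique solution $P_i$, which is therefore the only candidate for the steady‑state value.

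The core step is to transport the convergence $\mathsf M_{i,k}\to\mathsf M_i$ into convergence of the iterates; this is precisely what Lemma~\ref{lemma:DARE-HCRE-Convergent-Parameter}\,(1) would give, except that it is phrased for a \emph{positive‑definite} noise covariance, whereas here the perturbation lives inside the information term and $\tilde R_i$ (hence $\mathsf M_i$) is in general rank‑deficient. To bridge this I would first check that the ranges stabilize: from $\operatorname{range}(\tilde C_i)=\operatorname{range}(\tilde R_i)=\sum_{j:\,l_{ij}^{(\gamma)}>0}\operatorname{range}(C_j^T)$ and the explicit form of $\tilde U_{i,k}$ (resp. of $\tilde\Upsilon_{i,k}$), one gets $\operatorname{range}(\tilde U_{i,k})=\operatorname{range}(\tilde R_i)$ for all large $k$, so that $\mathsf M_{i,k}$ and $\mathsf M_i$ are both positive definite on the fixed subspace $\operatorname{range}(\tilde C_i)$ and vanish on its orthogonal complement. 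Then $\mathsf M_{i,k}\to\mathsf M_i$ upgrades to the order sandwich $(1-\eta)\mathsf M_i\le \mathsf M_{i,k}\le(1+\eta)\mathsf M_i$ for every $\eta>0$ and all large $k$, and writing $(1\pm\eta)\mathsf M_i=\tilde C_i^T\bigl(\tfrac{1}{1\pm\eta}\breve R_i\bigr)^{-1}\tilde C_i$ brings us back to positive‑definite parameters $\tfrac{1}{1\pm\eta}\breve R_i>0$. From here I would mimic the proof of Lemma~\ref{lemma:DARE-HCRE-Convergent-Parameter}\,(1) verbatim, with its additive perturbation $\mathsf R\pm\delta I$ replaced by the multiplicative one $\tfrac{1}{1\mp\eta}\breve R_i$: the monotonicity of the Riccati map in the information term (the information‑form restatement of Lemma~\ref{lemma:DARE-HCRE-preserve-order}\,(1), proved by the same induction on $k$) sandwiches $P_{i,k+1|k}$ from some time $k_0$ onward between the iterates of $DARR(A,\tilde C_i,Q,\tfrac{1}{1+\eta}\breve R_i)$ and $DARR(A,\tilde C_i,Q,\tfrac{1}{1-\eta}\breve R_i)$ started at the common value $P_{i,k_0|k_0-1}$; by Lemma~\ref{lemma:DARE-HCRE-Existence-Uniqueness-Convergence}\,(1) both converge, and by the continuity Lemma~\ref{lemma:DARE-HCRE-Continuous-on-R}\,(1) their limits lie within $\varepsilon(\eta)$ of $P_i$ with $\varepsilon(\eta)\to0$. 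Letting $\eta\to0$ yields $P_{i,k+1|k}\to P_i$, which is~1); for~2), the identical deterministic argument is run on the probability‑one event on which $\mathsf M_{i,k}\to\mathsf M_i$ (furnished by Lemmas~\ref{lemma: a.s. mat prod mat inv} and~\ref{lemma:stochastic method error}), giving almost‑sure convergence.

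The main obstacle is the rank deficiency: all the Riccati tools (Lemmas~\ref{lemma:DARE-HCRE-Existence-Uniqueness-Convergence}--\ref{lemma:DARE-HCRE-Convergent-Parameter}) assume an invertible noise covariance, so the work is in certifying that the fused, possibly singular, $\tilde R_i$ and $\tilde U_{i,k}$ (resp. $\tilde\Upsilon_{i,k}$) behave, for the purposes of these lemmas, like genuine positive‑definite covariances — which is exactly what the range identity $\operatorname{range}(\tilde C_i)=\operatorname{range}(\tilde R_i)$ and Lemma~\ref{lemma: R dagger} are there to guarantee — and in upgrading the norm convergence $\mathsf M_{i,k}\to\mathsf M_i$ to an order sandwich.
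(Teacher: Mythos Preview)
Your proposal is correct but takes a somewhat more laborious route than the paper. Both arguments write out the one-step recursion, observe that $\tilde U_{i,k}^\dagger\to\tilde R_i^\dagger$ (resp.\ $\tilde\Upsilon_{i,k}^\dagger\stackrel{a.s.}{\longrightarrow}\tilde R_i^\dagger$), and invoke Lemma~\ref{lemma: R dagger} to pass from the pseudoinverse $\tilde R_i^\dagger$ to a genuine inverse $\breve R_i^{-1}$; the difference is in how the rank deficiency is absorbed thereafter. The paper's device is a single additive shift: since $\tilde C_i^T(\tilde R_i^\dagger-\breve R_i^{-1})\tilde C_i=\boldsymbol{O}$, one has $\tilde C_i^T\tilde U_{i,k}^\dagger\tilde C_i=\tilde C_i^T\bigl(\tilde U_{i,k}^\dagger-\tilde R_i^\dagger+\breve R_i^{-1}\bigr)\tilde C_i$, and the bracketed matrix converges to $\breve R_i^{-1}>0$; thus the recursion is literally $DARR(A,\tilde C_i,Q,\grave{\mathsf R}_k)$ with $\grave{\mathsf R}_k:=(\tilde U_{i,k}^\dagger-\tilde R_i^\dagger+\breve R_i^{-1})^{-1}\to\breve R_i>0$, and Lemma~\ref{lemma:DARE-HCRE-Convergent-Parameter}\,(1) applies as a black box, with no separate range analysis and no need to reopen its proof. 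Your approach instead keeps the information matrix $\mathsf M_{i,k}$ as the primary object, first establishes the range identity $\operatorname{range}(\tilde U_{i,k})=\operatorname{range}(\tilde R_i)=\operatorname{range}(\tilde C_i)$ (and its almost-sure analogue for $\tilde\Upsilon_{i,k}$), uses this to upgrade norm convergence to the multiplicative sandwich $(1-\eta)\mathsf M_i\le\mathsf M_{i,k}\le(1+\eta)\mathsf M_i$, and then reruns the comparison argument inside the proof of Lemma~\ref{lemma:DARE-HCRE-Convergent-Parameter}. What the paper's shift buys is brevity; what your approach buys is transparency about \emph{why} the pseudoinverse is harmless here---namely the shared range of $\tilde C_i$, $\tilde R_i$, and the estimates $\tilde U_{i,k}$, $\tilde\Upsilon_{i,k}$---which the paper leaves packaged inside Lemma~\ref{lemma: R dagger}.
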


		\begin{proof}
			1) In Algorithm~1, we have
			\begin{equation}
				\label{eq:Theorem1_eq1}
				P_{i,k+1|k} = A (P_{i,k|k-1}^{-1} + \tilde{C}_i^T \tilde{U}_{i,k}^{\dagger} \tilde{C}_i  )^{-1}A^T + Q.
			\end{equation}
			
			By Lemma~\ref{lemma: R dagger}, there exists $\breve{R}_i >0$ such that $\tilde{C}_i^T (\tilde{R}_{i}^{\dagger} - \breve{R}_i^{-1}) \tilde{C}_i = \boldsymbol{O}$. Thus, \eqref{eq:Theorem1_eq1} is equivalent to 
			\begin{equation*}
				P_{i,k+1|k} = A (P_{i,k|k-1}^{-1} + \tilde{C}_i^T (\tilde{U}_{i,k}^{\dagger}- \tilde{R}_{i}^{\dagger} + \breve{R}_i^{-1})  \tilde{C}_i  )^{-1}A^T + Q.
			\end{equation*}
			
			By Lemma~\ref{lemma:direct method error}, $\tilde{U}_{i,k}^\dagger$ converges to $\tilde{R}_i^\dagger$, hence 
			\begin{equation*}
				\lim\limits_{k\rightarrow\infty}(\tilde{U}_{i,k}^{\dagger}- \tilde{R}_{i}^{\dagger} + \breve{R}_i^{-1})^{-1} = \breve{R}_i.
			\end{equation*}
			
			Since $(A, \tilde{C}_i)$ is observable, it follows from Lemma~\ref{lemma:DARE-HCRE-Convergent-Parameter} that $P_{i,k+1|k}$ converges to the solution of 
			\begin{equation}
				\label{eq:theorem1 dare eq1}
				P_{i} = A (P_{i}^{-1} + \tilde{C}_i^T\breve{R}_i^{-1} \tilde{C}_i  )^{-1}A^T + Q,
			\end{equation}
			which is equivalent to \eqref{eq:Algorithm1-2-steady-state-P} by Lemma~\ref{lemma: R dagger}.
			
			2) In Algorithm~2, by Lemma~\ref{lemma:stochastic method error}, we have $\tilde{\Upsilon}_{i,k}^\dagger \stackrel{a.s.}{\longrightarrow}\tilde{R}_i^\dagger$, i.e.
			\begin{equation*}
				\mathbb{P}( \lim\limits_{k\rightarrow \infty}\|\tilde{\Upsilon}_{i,k}^\dagger - \tilde{R}_i^\dagger\|_F = 0) =1.
			\end{equation*}
			
			Thus, under similar arguments of 1), it holds that
			\begin{equation*}
				\mathbb{P}( \lim\limits_{k\rightarrow \infty} \|{P}_{i,k+1|k} - P_i\|_F = 0) =1.
			\end{equation*}
		\end{proof}

		Now we consider the true error of the prior and posterior estimate. Denote the estimate error of the $i$-th node as $\tilde{x}_{i,k|k-1} =x_k -  \hat{x}_{i,k|k-1}$, $\tilde{x}_{i,k|k} =x_k -  \hat{x}_{i,k|k}$, and the corresponding error covariance matrix as 
		\begin{equation*}
			\footnotesize \begin{aligned}
				&\tilde{P}_{i,k|k-1}
				= \mathbb{E}_{\{x_0,w_p, v_{q,r}, p \in \underline{k-1}\cup\{0\},q\in \mathcal{N}, r\in\underline{k-1}\}}[\tilde{x}_{i,k|k-1}(\tilde{x}_{i,k|k-1})^T],\\
				 &\tilde{P}_{i,k|k}
				= \mathbb{E}_{\{x_0,w_p, v_{q,r}, p\in \underline{k-1}\cup\{0\},q\in \mathcal{N}, r\in \underline{k}\}}[\tilde{x}_{i,k|k}(\tilde{x}_{i,k|k})^T].
			\end{aligned}
		\end{equation*}
		
		\begin{theorem}
			\label{theorem:Algorithm1-2-Steady-Error}
			 If $(A, \tilde{C}_i)$ is observable for all $i\in\mathcal{N}$, we have 
			\begin{enumerate}[1)]
			\item  $\tilde{P}_{i,k+1|k}$ in Algorithm~1 converges to $ P_{i}$,
			\item   $ \tilde{P}_{i,k+1|k}$ in Algorithm~2 converges almost surely to  $P_{i}$,
			 \end{enumerate}
			 			where $P_i$ is the solution of \eqref{eq:Algorithm1-2-steady-state-P}.
		\end{theorem}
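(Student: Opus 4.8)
The plan is to track the genuine estimation error of the modified filter, show that its covariance obeys a linear time-varying (LTV) Lyapunov recursion whose coefficients converge to those of a stable steady-state filter, and then conclude by the convergence of a stable LTV recursion driven by a vanishing perturbation, with the fixed point identified as the DARE solution $P_i$ of \eqref{eq:Algorithm1-2-steady-state-P}. Concretely, put $K_{i,k}:=P_{i,k|k}\tilde C_i^{T}\tilde U_{i,k}^{\dagger}$ (for Algorithm~2 replace $\tilde U_{i,k}$ by $\tilde\Upsilon_{i,k}$). From the correction step and the fused model $\tilde y_{i,k}=\tilde C_i x_k+\tilde v_{i,k}$ one gets $\tilde x_{i,k|k}=(I-K_{i,k}\tilde C_i)\tilde x_{i,k|k-1}-K_{i,k}\tilde v_{i,k}$, and from the prediction step $\tilde x_{i,k+1|k}=A(I-K_{i,k}\tilde C_i)\tilde x_{i,k|k-1}-AK_{i,k}\tilde v_{i,k}+w_k$. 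Since $\tilde x_{i,k|k-1}$ is a fixed linear function of $x_0$, $\{w_s\}_{s\le k-1}$, $\{v_{j,r}\}_{r\le k-1}$ while $\tilde v_{i,k}$ is a linear combination of $\{v_{j,k}\}_j$, Assumption~\ref{assumption:sec2 noise model} makes $\tilde x_{i,k|k-1}$ independent of both $\tilde v_{i,k}$ and $w_k$, so all cross terms vanish and
\begin{equation*}
\tilde P_{i,k+1|k}=A(I-K_{i,k}\tilde C_i)\tilde P_{i,k|k-1}(I-K_{i,k}\tilde C_i)^{T}A^{T}+AK_{i,k}\tilde R_i K_{i,k}^{T}A^{T}+Q .
\end{equation*}
For Algorithm~2 this identity holds pathwise in the auxiliary samples $\{\theta_{j,k}\}$, which are independent of the system noises.

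Next I would pass to the limit of the coefficients. By Theorem~\ref{theorem:Algorithm-1-2-P-Convergence}, $P_{i,k|k-1}\to P_i$ (a.s.\ for Algorithm~2), and by Lemma~\ref{lemma:direct method error}/Lemma~\ref{lemma:stochastic method error}, $\tilde U_{i,k}^{\dagger}\to\tilde R_i^{\dagger}$ (a.s.), hence $P_{i,k|k}\to P_i^{+}:=(P_i^{-1}+\tilde C_i^{T}\tilde R_i^{\dagger}\tilde C_i)^{-1}$ and $K_{i,k}\to K_i:=P_i^{+}\tilde C_i^{T}\tilde R_i^{\dagger}$, so $F_{i,k}:=A(I-K_{i,k}\tilde C_i)\to F_i:=A(I-K_i\tilde C_i)$ and $G_{i,k}:=AK_{i,k}\tilde R_iK_{i,k}^{T}A^{T}+Q\to G_i$. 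Two facts about the limit are needed. (a) $P_i$ is the fixed point, $F_iP_iF_i^{T}+G_i=P_i$: this is the Joseph-form identity, which survives the singularity of $\tilde R_i$ because $\operatorname{range}(\tilde C_i)=\operatorname{range}(\tilde R_i)$ gives $I-K_i\tilde C_i=P_i^{+}P_i^{-1}$ and $K_i\tilde R_iK_i^{T}=P_i^{+}\tilde C_i^{T}\tilde R_i^{\dagger}\tilde R_i\tilde R_i^{\dagger}\tilde C_iP_i^{+}=P_i^{+}\tilde C_i^{T}\tilde R_i^{\dagger}\tilde C_iP_i^{+}$ (using $\tilde R_i^{\dagger}\tilde R_i\tilde R_i^{\dagger}=\tilde R_i^{\dagger}$), whence $F_iP_iF_i^{T}+G_i=AP_i^{+}A^{T}+Q=P_i$ by \eqref{eq:Algorithm1-2-steady-state-P}. (b) $F_i$ is Schur stable: by Lemma~\ref{lemma: R dagger} the nonsingular $\breve R_i$ satisfies $K_i\tilde C_i=P_i^{+}\tilde C_i^{T}\breve R_i^{-1}\tilde C_i$, so $F_i$ coincides with the predictor-form closed-loop matrix of the steady-state Kalman filter for the parameter set $(A,\tilde C_i,Q,\breve R_i)$; since $(A,\tilde C_i)$ is observable and $Q,\breve R_i>0$, classical DARE theory makes this matrix Schur.

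Finally I would close the argument by LTV stability. With $E_k:=\tilde P_{i,k|k-1}-P_i$, the recursion reads $E_{k+1}=F_{i,k}E_kF_{i,k}^{T}+\Delta_k$, where $\Delta_k:=F_{i,k}P_iF_{i,k}^{T}+G_{i,k}-P_i\to 0$. Because $F_{i,k}\to F_i$ with $\rho(F_i)<1$, there exist $c>0$, $\mu\in(0,1)$, $k_0\in\mathbb{Z}^{+}$ with $\|F_{i,k}\cdots F_{i,\ell}\|\le c\mu^{\,k-\ell}$ for all $k\ge\ell\ge k_0$; together with boundedness of the finite transient (the coefficients are bounded) and $\Delta_k\to 0$, the standard estimate $\|E_{k+1}\|\le c^{2}\mu^{\,2(k-k_0)}\|E_{k_0}\|+\sum_{\ell=k_0}^{k}c^{2}\mu^{\,2(k-\ell)}\|\Delta_\ell\|$ yields $E_k\to 0$, i.e.\ $\tilde P_{i,k+1|k}\to P_i$. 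For Algorithm~2 every convergence above is almost sure, so the whole chain holds on a probability-one set of sample paths, giving $\tilde P_{i,k+1|k}\stackrel{a.s.}{\longrightarrow}P_i$.

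The main obstacle I expect is the careful handling of the singular $\tilde R_i$: one must verify that the fixed point of the true-error recursion is exactly the DARE solution \eqref{eq:Algorithm1-2-steady-state-P} and that the limiting closed-loop matrix $F_i$ is Schur, even though the gain uses $\tilde R_i^{\dagger}$ while the noise-injection term uses $\tilde R_i$; both rest on the pseudo-inverse identities and on the nonsingular reparametrization supplied by Lemma~\ref{lemma: R dagger}. The remaining ingredients---deriving the covariance recursion and the vanishing-perturbation LTV convergence---are routine.
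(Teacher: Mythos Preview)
Your proposal is correct and follows essentially the same route as the paper: derive the true-error Lyapunov recursion $\tilde P_{i,k+1|k}=F_{i,k}\tilde P_{i,k|k-1}F_{i,k}^{T}+G_{i,k}$, show the coefficients converge (via Theorem~\ref{theorem:Algorithm-1-2-P-Convergence} and Lemmas~\ref{lemma:direct method error}/\ref{lemma:stochastic method error}) to those of the steady-state filter, use the $\breve R_i$ reparametrization of Lemma~\ref{lemma: R dagger} to certify Schur stability of $F_i$, and then invoke LTV-Lyapunov convergence to the fixed point $P_i$. The only cosmetic differences are that the paper writes the recursion in information form ($AP_{i,k|k}P_{i,k|k-1}^{-1}$ instead of your $A(I-K_{i,k}\tilde C_i)$, which are equal), cites Theorem~1 of \cite{cattivelli2010diffusion} for the final LTV step where you sketch it directly, and verifies the fixed-point identity through $\breve R_i$ rather than your pseudo-inverse algebra.
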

		
			\begin{proof}
			1) In Algorithm~1, the estimate error satisfies
			\begin{equation}
				\label{eq:theorem2 estimation error}
				\begin{aligned}
					\tilde{x}_{i,k|k}
					& = P_{i,k|k}(P_{i,k|k-1}^{-1}\tilde{x}_{i,k|k-1} -  \tilde{C}_i^T \tilde{U}_{i,k}^{\dagger} \tilde{v}_{i,k}) .
				\end{aligned} 
			\end{equation}
			
			Since $\tilde{x}_{i,k|k-1}$ and $ \tilde{v}_{i,k}$ are independent, the true error covariance matrix follows
			\begin{equation*}
				\begin{aligned}
					\tilde{P}_{i,k+1|k} &= AP_{i,k|k}P_{i,k|k-1}^{-1} \tilde{P}_{i,k|k-1} P_{i,k|k-1}^{-1} P_{i,k|k}A^T \\
					& \phantom{mmm}+  AP_{i,k|k} \tilde{C}_i^T \tilde{U}_{i,k}^{\dagger} \tilde{R}_i \tilde{U}_{i,k}^{\dagger} \tilde{C}_i P_{i,k|k}A^T   + Q. 
				\end{aligned}
			\end{equation*}
			
			By Lemma~\ref{lemma: R dagger}, there exists $\breve{R}_i >0$ such that $\tilde{C}_i^T (\tilde{R}_{i}^{\dagger} - \breve{R}_i^{-1}) \tilde{C}_i = \boldsymbol{O}$.
			
			Denote \begin{equation*}
				\begin{aligned}
					K_{{P}_{i}} & := A {P}_{i} \tilde{C}_i^T \left(\tilde{C}_i{P}_{i}\tilde{C}_i^T+\breve{R}_i\right)^{-1},\\
					\tilde{A}_{{P}_{i} } & := A-  K_{{P}_{i}} \tilde{C}_i,
				\end{aligned}
			\end{equation*}
			 then we have that $ \tilde{A}_{{P}_{i}}$ is Schur stable since $P_i$ is the solution of $DARE(A, \tilde{C}_i, Q, \breve{R}_i)$ and $(A,\tilde{C}_i)$ is observable\cite{kailath2000linear}.
			
			Since  $\tilde{U}_{i,k}^{\dagger}{\rightarrow} \tilde{R}_i^{\dagger}$ by Lemma~\ref{lemma:direct method error}, we have 
			\begin{equation}
				 \tilde{C}_i^T \tilde{U}_{i,k}^{\dagger} \tilde{R}_i \tilde{U}_{i,k}^{\dagger} \tilde{C}_i \rightarrow \tilde{C}_i^T \tilde{R}_{i}^{\dagger} \tilde{C}_i = \tilde{C}_i^T \breve{R}_{i,k}^{-1} \tilde{C}_i.
			\end{equation}
			Consider also $P_{i,k+1|k} {\rightarrow} P_{i}$ by Theorem~\ref{theorem:Algorithm-1-2-P-Convergence}, we have
			\begin{equation*}
				\begin{aligned}
					&AP_{i,k|k}P_{i,k|k-1}^{-1}  \rightarrow \tilde{A}_{P_i},\\
					&AP_{i,k|k} \tilde{C}_i^T \tilde{U}_{i,k}^{\dagger} \tilde{R}_i \tilde{U}_{i,k}^{\dagger} \tilde{C}_i P_{i,k|k}A^T   + Q {\rightarrow} K_{P_i} \breve{R}_i   K_{P_i}^T+ Q.
				\end{aligned}
			\end{equation*}
			
			By exploiting Theorem~1 in \cite{cattivelli2010diffusion}, we have that  $\tilde{P}_{i,k+1|k}$ converges to the solution of the Lyapunov equation
			\begin{equation*}
				X = \tilde{A}_{P_i} X \tilde{A}_{P_i}^T + K_{P_i} \breve{R}_i   K_{P_i}^T+ Q.
			\end{equation*} 
			It can be verified that the above equation is equivalent to \eqref{eq:theorem1 dare eq1}  and consequently equivalent to  \eqref{eq:Algorithm1-2-steady-state-P}.
			
			2) In Algorithm~2, since $P_{i,k+1|k} \stackrel{a.s.}{\longrightarrow} P_{i}$ and $\tilde{\Upsilon}_{i,k}^{\dagger} \stackrel{a.s.}{\longrightarrow} \tilde{R}_i^{\dagger}$, by Lemma~\ref{lemma:stochastic method error}, we have 
			$AP_{i,k|k}P_{i,k|k-1}^{-1} \stackrel{a.s.}{\longrightarrow} \tilde{A}_{P_i}$ and $AP_{i,k|k} \tilde{C}_i^T \tilde{\Upsilon}_{i,k}^{\dagger} \tilde{R}_i \tilde{\Upsilon}_{i,k}^{\dagger} \tilde{C}_i P_{i,k|k}A^T   + Q \stackrel{a.s.}{\longrightarrow} K_{P_i} \breve{R}_i   K_{P_i}^T+ Q $ almost surely.
			
			Similar to the proof of 2) in Theorem~\ref{theorem:Algorithm-1-2-P-Convergence}, by analogous  arguments of 1), $\tilde{P}_{i,k+1|k}$ converges almost surely to $P_i$ in Algorithm~2.
			\end{proof}
		
		\begin{theorem}
			\label{theorem:ModifiedCM best performance}
			The Modified CM converges to the following linear minimum variance estimator  as $k$ tends to infinity, 
			\begin{equation}
				\label{eq:theorem 3 problem}
				\begin{aligned}
				P_{i,k+1|k}^{opt}&:= \min_{K_{i,k}\in \mathbb{R}^{n\times n}}\mathbb{E}\left[(\hat{x}_{i,k+1|k} - x_k)(\hat{x}_{i,k+1|k}-x_k)^T\right]\\
				&\begin{aligned}
				\operatorname{s.t.}\; &\hat{x}_{i,k|k} = (I - K_{i,k}\tilde{C_i})\hat{x}_{i,k|k-1}+ K_{i,k}  \tilde{y}_{i,k},\\
				&\hat{x}_{i,k+1|k} = Ax_{i,k|k}, \;
				\hat{x}_{i,0|0} = \hat{x}_0.
				\end{aligned}
				\end{aligned}				\tag{P1}
			\end{equation}
			
			Furthermore, as the  number of consensus steps per iteration tends to infinity, Algorithms~1-2 converge to the CKF method. 
		\end{theorem}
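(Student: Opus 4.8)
The plan is to handle the two assertions of Theorem~\ref{theorem:ModifiedCM best performance} separately, reducing each to facts about the DARE \eqref{eq:Algorithm1-2-steady-state-P} already established in Lemmas~\ref{lemma:DARE-HCRE-Existence-Uniqueness-Convergence}--\ref{lemma: R dagger} and Theorems~\ref{theorem:Algorithm-1-2-P-Convergence}--\ref{theorem:Algorithm1-2-Steady-Error}. For the first assertion, I would first identify the solution of (P1). With the affine structure imposed in (P1), the posterior error obeys $\tilde{x}_{i,k|k}=(I-K_{i,k}\tilde{C}_i)\tilde{x}_{i,k|k-1}-K_{i,k}\tilde{v}_{i,k}$ with $\tilde{x}_{i,k|k-1}$ independent of $\tilde{v}_{i,k}$, so minimizing $\mathbb{E}[\tilde{x}_{i,k|k}\tilde{x}_{i,k|k}^{T}]$ over $K_{i,k}$ is exactly the minimum-variance problem whose solution is \eqref{eq:minimum variance estimator}; hence, writing $\Pi_k$ for the prior error covariance, the optimal posterior covariance is $(\Pi_k^{-1}+\tilde{C}_i^{T}\tilde{R}_i^{\dagger}\tilde{C}_i)^{-1}$, and since the prediction step only applies $\Pi\mapsto A\Pi A^{T}+Q$, the optimal one-step-ahead covariance satisfies $\Pi_{k+1}=A(\Pi_k^{-1}+\tilde{C}_i^{T}\tilde{R}_i^{\dagger}\tilde{C}_i)^{-1}A^{T}+Q$; a short induction using monotonicity of this Riccati map (Lemma~\ref{lemma:DARE-HCRE-preserve-order}) confirms that optimizing $K_{i,k}$ stage by stage is globally optimal, so $P_{i,k+1|k}^{opt}$ is precisely this recursion. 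By Lemma~\ref{lemma: R dagger} we may replace $\tilde{C}_i^{T}\tilde{R}_i^{\dagger}\tilde{C}_i$ with $\tilde{C}_i^{T}\breve{R}_i^{-1}\tilde{C}_i$, $\breve{R}_i>0$, so this recursion is $\mathrm{DARR}(A,\tilde{C}_i,Q,\breve{R}_i)$, which converges to the solution $P_i$ of \eqref{eq:Algorithm1-2-steady-state-P} by part~1 of Lemma~\ref{lemma:DARE-HCRE-Existence-Uniqueness-Convergence}. Since Theorem~\ref{theorem:Algorithm1-2-Steady-Error} gives that the actual error covariance $\tilde{P}_{i,k+1|k}$ of the Modified CM tends to $P_i$ (almost surely for Algorithm~2), the Modified CM and the optimum $P_{i,k+1|k}^{opt}$ share the same limit; moreover the gain $K_{i,k}=P_{i,k|k}\tilde{C}_i^{T}\tilde{U}_{i,k}^{\dagger}$ (resp.\ with $\tilde{\Upsilon}_{i,k}$) implicit in the correction step tends to the stationary optimal gain of (P1) because $P_{i,k|k}\to(P_i^{-1}+\tilde{C}_i^{T}\breve{R}_i^{-1}\tilde{C}_i)^{-1}$ and $\tilde{U}_{i,k}^{\dagger}\to\tilde{R}_i^{\dagger}$, which is the precise meaning of convergence to the estimator (P1).

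For the second assertion, I would use that a connected undirected graph with positive self-weights has primitive doubly stochastic $\mathcal{L}$, so $l_{ij}^{(\gamma)}\to 1/N$ as $\gamma\to\infty$ (\cite[Lemma~1]{qian2022consensus}) and, in addition, $\mathcal{L}^{\gamma}>0$ entrywise for all sufficiently large $\gamma$. Writing $\Sigma:=\sum_{j=1}^{N}C_j^{T}R_j^{-1}C_j$, this yields $\tilde{C}_i\to\tfrac1N\Sigma$ and $\tilde{R}_i\to\tfrac1{N^2}\Sigma$. The key step is that $\tilde{C}_i^{T}\tilde{R}_i^{\dagger}\tilde{C}_i\to\Sigma$ and $\tilde{C}_i^{T}\tilde{R}_i^{\dagger}\tilde{y}_{i,k}\to\sum_{j=1}^{N}C_j^{T}R_j^{-1}y_{j,k}$: once $\mathcal{L}^{\gamma}>0$, the range of $\tilde{R}_i$ is $\sum_j\operatorname{range}(C_j^{T}R_j^{-1}C_j)=\operatorname{range}(\Sigma)=\operatorname{range}(\tilde{C}_i)$ and is therefore constant in $\gamma$, so the pseudoinverse is continuous along the sequence; and $\Sigma\Sigma^{\dagger}$ fixes every vector that occurs since $C_j^{T}R_j^{-1}v_{j,k}\in\operatorname{range}(C_j^{T})\subseteq\operatorname{range}(\Sigma)$. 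Substituting these limits into the correction step of Algorithms~1--2 reproduces exactly the CKF measurement update \eqref{eq:sec2 CKF measure update}, while the prediction step and the initialization already coincide with the CKF, so an induction on $k$ gives convergence of the whole Modified CM recursion to the CKF; equivalently, at the steady-state level, the parameter $\tilde{C}_i^{T}\tilde{R}_i^{\dagger}\tilde{C}_i$ of the DARE \eqref{eq:Algorithm1-2-steady-state-P} tends to $\Sigma$, so $P_i$ tends to the stationary CKF covariance by the continuity result of Lemma~\ref{lemma:DARE-HCRE-Continuous-on-R} applied through the $\breve{R}_i$ reformulation.

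The step I expect to be the main obstacle is this $\gamma\to\infty$ limit of $\tilde{C}_i^{T}\tilde{R}_i^{\dagger}\tilde{C}_i$, because the Moore--Penrose inverse is not continuous in general; one cannot pass to the limit in $\tilde{R}_i^{\dagger}$ directly, and the argument must first establish that the rank (range) of $\tilde{R}_i$ is eventually constant in $\gamma$, which is exactly where primitivity of $\mathcal{L}$ enters, and only then use continuity of inversion restricted to that fixed subspace. A secondary point requiring care is the claim that optimizing the single gain $K_{i,k}$ stage by stage in (P1) yields the globally optimal linear filter of the prescribed form; this follows from monotonicity of the Riccati recursion but should be stated explicitly, since (P1) as written optimizes only the current gain.
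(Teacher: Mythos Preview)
Your proposal is correct and follows essentially the same route as the paper: identify (P1) with the Kalman filter for the virtual measurement $(\tilde{C}_i,\tilde{R}_i)$, compare its steady-state DARE with \eqref{eq:Algorithm1-2-steady-state-P}, and for $\gamma\to\infty$ pass the limits $l_{ij}^{(\gamma)}\to 1/N$ through $\tilde{C}_i^{T}\tilde{R}_i^{\dagger}\tilde{C}_i$ and $\tilde{C}_i^{T}\tilde{R}_i^{\dagger}\tilde{y}_{i,k}$ to recover \eqref{eq:sec2 CKF measure update}. Your treatment is in fact more careful than the paper's on the two points you flag---the paper simply cites Kalman filter theory for the optimality of (P1) and asserts the pseudoinverse limits without discussing rank constancy---so your added justification via primitivity of $\mathcal{L}$ and monotonicity of the Riccati map fills genuine but minor gaps rather than constituting a different argument.
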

		
		\begin{proof}
			It is direct that the optimal filter to \eqref{eq:theorem 3 problem} is given by Kalman filter\cite{kailath2000linear}. The corresponding  steady-state error covariance $P_i^{opt}$ satisfies 
			\begin{equation*}
				\begin{aligned}
				P_i^{opt} &= A	P_i^{opt}A^T + Q - AP_i^{opt}\tilde{C}_i^T\\
				&\phantom{mmmmm}\times(\tilde{C}_i	P_i^{opt}\tilde{C}_i^T +\tilde{R}_i)^{\dagger}\tilde{C}_i P_i^{opt}A^T.
				\end{aligned}
			\end{equation*}
			Compared with \eqref{eq:Algorithm1-2-steady-state-P}, it can be verified that $P_{i}^{opt} =P_{i}$. Therefore Algorithms~1-2 converge to the optimal estimator defined by (P1).
			
			As $\gamma$ tends to infinity, we have that $l_{ij}^{(\infty)} = \frac{1}{N}$. Therefore
				$\lim\limits_{\gamma\rightarrow \infty} \tilde{C}_{i}^T \tilde{R}_i^\dagger\tilde{C}_i = \sum_{j\in\mathcal{N}} C_j^T R_j^{-1}C_j$,	$\lim\limits_{\gamma\rightarrow \infty} \tilde{C}_{i}^T \tilde{R}_i^\dagger\tilde{y}_{i,k} = \sum_{j\in\mathcal{N}} C_j^T R_j^{-1}y_{j,k}$.
			Thus the measurement update in  Algorithms~1-2 converges to the measurement update \eqref{eq:sec2 CKF measure update} of CKF with $\gamma \rightarrow \infty$.
		\end{proof}

		\begin{corollary}
		From Theorem~\ref{theorem:ModifiedCM best performance}, it can be inferred that the Modified CM is more accurate than the traditional CM in terms of steady-state performance.
			 Besides, with Corollary 1 in \cite{qian2022consensus}, the steady-state error covariance of the  Modified CM converges to that of the CKF with the speed no slower than the exponential convergence, with $\gamma$ tending to infinity. In other words, there exists $M>0$ and $0<\beta<1$ such that 
		\begin{equation}
			\|P_i-P^{C}\|_2\le M\beta^\gamma, \forall i \in \mathcal{N}, \forall \gamma\ge d,
		\end{equation}
		 where $P_i$ is the steady-state prior estimate error covariance of Algorithms~1-2 (Theorem~\ref{theorem:Algorithm1-2-Steady-Error}), $P^{C}$ is the steady-state error covariance of the  CKF, and $d$ is the diameter of the graph.
		\end{corollary}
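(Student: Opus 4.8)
The plan is to squeeze the steady-state prior error covariance $P_i$ of the Modified CM between those of the CKF and of the classical CM, i.e.\ to establish $P^{C}\le P_i\le P_i^{CM}$, and then to invoke the exponential bound $\|P_i^{CM}-P^{C}\|_2\le M\beta^{\gamma}$ already available from \cite[Corollary~1]{qian2022consensus} to control the middle term. Throughout I would restrict to $\gamma\ge d$: since the graph is connected and $l_{ii}>0$, the matrix $\mathcal{L}^{\gamma}$ is entrywise positive, so $l_{ij}^{(\gamma)}>0$ for all $i,j$; this makes the range of $\tilde{C}_i=\sum_j l_{ij}^{(\gamma)}C_j^TR_j^{-1}C_j$ equal to that of $\Pi^{C}:=\sum_{j}C_j^TR_j^{-1}C_j$, and by Lemma~\ref{lemma: R dagger} the same holds for $\tilde{C}_i^T\tilde{R}_i^{\dagger}\tilde{C}_i$, so collective observability (Assumption~\ref{assumption: collective observability}) gives observability of $(A,\tilde{C}_i)$ and Theorems~\ref{theorem:Algorithm-1-2-P-Convergence}--\ref{theorem:ModifiedCM best performance} apply.

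For the upper bound $P_i\le P_i^{CM}$, which is precisely the first assertion of the corollary, I would note that, using $\tilde{y}_{i,k}^{(\gamma)}=\tilde{C}_ix_k+\tilde{v}_{i,k}$, the CM measurement update can be rewritten as $\hat{x}_{i,k|k}=(I-K_{i,k}\tilde{C}_i)\hat{x}_{i,k|k-1}+K_{i,k}\tilde{y}_{i,k}^{(\gamma)}$ with $K_{i,k}=NP_{i,k|k}$, followed by $\hat{x}_{i,k+1|k}=A\hat{x}_{i,k|k}$. Since CM is stable and its gain converges to a constant Schur-stabilizing value, its steady-state true error covariance $P_i^{CM}$ is a feasible point of the minimization \eqref{eq:theorem 3 problem}, whereas by Theorem~\ref{theorem:ModifiedCM best performance} the Modified CM attains the minimizer $P_i=P_i^{opt}$; hence $P_i\le P_i^{CM}$.

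For the lower bound $P^{C}\le P_i$ I would first show the matrix inequality $\tilde{C}_i^T\tilde{R}_i^{\dagger}\tilde{C}_i\le\Pi^{C}$. Writing $C_j^TR_j^{-1}C_j=B_j^TB_j$, putting $a_j=l_{ij}^{(\gamma)}>0$, $D_j=a_jB_j$, and stacking $B=\col_{j=1}^{N}(B_j)$, $D=\col_{j=1}^{N}(D_j)$, one has $\tilde{R}_i=D^TD$ and $\tilde{C}_i=B^TD$ (a symmetric matrix), so $\tilde{C}_i^T\tilde{R}_i^{\dagger}\tilde{C}_i=B^TD(D^TD)^{\dagger}D^TB\le B^TB=\Pi^{C}$ because $D(D^TD)^{\dagger}D^T$ is the orthogonal projector onto the range of $D$ and hence $\le I$. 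Then, since $P^{C}$ solves $P^{C}=A((P^{C})^{-1}+\Pi^{C})^{-1}A^T+Q$ and $P_i$ solves \eqref{eq:Algorithm1-2-steady-state-P}, the monotonicity of the DARE solution in its information matrix — the induction on the Lyapunov-type recursion used in the proof of Lemma~\ref{lemma:DARE-HCRE-preserve-order}, which only uses the ordering of the correction terms — yields $P_i\ge P^{C}$. Combining the two bounds gives $0\le P_i-P^{C}\le P_i^{CM}-P^{C}$, hence $\|P_i-P^{C}\|_2\le\|P_i^{CM}-P^{C}\|_2\le M\beta^{\gamma}$ for $\gamma\ge d$ by \cite[Corollary~1]{qian2022consensus}.

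I expect the main obstacle to be the careful justification of the upper bound: one must confirm, drawing on the CM stability analysis of \cite{qian2022consensus}, that the classical CM genuinely possesses a well-defined steady-state true error covariance and that its (steady-state) recursion lies in the feasible set of \eqref{eq:theorem 3 problem}. A secondary, more routine point is the mild extension of the monotonicity in Lemma~\ref{lemma:DARE-HCRE-preserve-order} from ``order in $\mathsf{R}$ with $\mathsf{C}$ fixed'' to ``order in the information matrix $\mathsf{C}^T\mathsf{R}^{-1}\mathsf{C}$'' across DAREs with different $\mathsf{C}$, which however follows verbatim from the same Lyapunov-recursion argument.
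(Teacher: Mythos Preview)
Your proposal is correct and follows exactly the route the paper indicates: the paper gives no separate proof for this corollary but simply appeals to Theorem~\ref{theorem:ModifiedCM best performance} for $P_i\le P_i^{CM}$ and to \cite[Corollary~1]{qian2022consensus} for the exponential bound on $\|P_i^{CM}-P^{C}\|_2$. Your explicit verification of the lower bound $P^{C}\le P_i$ via the projector inequality $D(D^TD)^{\dagger}D^T\le I$ and DARE monotonicity is a clean detail the paper leaves implicit.
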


		\subsection{Performance Analysis of the Modified CI}
		For brevity, with a slight abuse of notation, we still use $P_i$ to denote the estimated steady-state  covariance of the prior estimate in Algorithms~3-4 (similar to Theorem~\ref{theorem:Algorithm-1-2-P-Convergence}), and use $\tilde{{P}}_{i}$ to denote the actual steady-state error covariance of the prior estimate (similar to Theorem~\ref{theorem:Algorithm1-2-Steady-Error}).
		\begin{theorem}
			\label{theorem:Algorithm-3-4-P-Convergence}
			If A is invertible,  
			\begin{enumerate}[1)]
				\item $P_{i,k+1|k} $ in Algorithm~3 converges to $P_{i}$, 
				\item $P_{i,k+1|k} $ in Algorithm~4 converges almost surely to $P_{i}$, 
			\end{enumerate}
			where $P_i$ is the solution of the following HCRE
			\begin{equation}
				P_{i} = A ( \sum_{j=1}^{N}l_{ij}^{(\gamma)} P_{j}^{-1} + \tilde{C}_i^T \tilde{R}_i^{\dagger} \tilde{C}_i  )^{-1} A^T + Q. \label{eq:Algorithm-3-4-steady-state-P}
			\end{equation}
		\end{theorem}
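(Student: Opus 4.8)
The plan is to mirror the proof of Theorem~\ref{theorem:Algorithm-1-2-P-Convergence}, replacing the DARE/DARR tools by their HCRE/HCRR analogues. First I would compress Algorithm~3 into a single recursion. Since $\tilde{V}_{i,k}^{(0)}=P_{i,k|k-1}^{-1}$ is diffused by the consensus rule, $\tilde{V}_{i,k}^{(\gamma)}=\sum_{j=1}^{N}l_{ij}^{(\gamma)}P_{j,k|k-1}^{-1}$, so the correction step $P_{i,k|k}=(\tilde{V}_{i,k}^{(\gamma)}+\tilde{C}_i^T\tilde{U}_{i,k}^{\dagger}\tilde{C}_i)^{-1}$ followed by the prediction step yields
\begin{equation*}
	P_{i,k+1|k}=A\Bigl(\textstyle\sum_{j=1}^{N}l_{ij}^{(\gamma)}P_{j,k|k-1}^{-1}+\tilde{C}_i^T\tilde{U}_{i,k}^{\dagger}\tilde{C}_i\Bigr)^{-1}A^T+Q ,
\end{equation*}
which is an HCRR with consensus matrix $\mathcal{L}^{\gamma}$ and the time-varying information term $\tilde{C}_i^T\tilde{U}_{i,k}^{\dagger}\tilde{C}_i$.

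Next I would regularize the (possibly rank-deficient) information term exactly as in Theorem~\ref{theorem:Algorithm-1-2-P-Convergence}: by Lemma~\ref{lemma: R dagger} there is $\breve{R}_i>0$ with $\tilde{C}_i^T\breve{R}_i^{-1}\tilde{C}_i=\tilde{C}_i^T\tilde{R}_i^{\dagger}\tilde{C}_i$, so $\tilde{C}_i^T\tilde{U}_{i,k}^{\dagger}\tilde{C}_i=\tilde{C}_i^T(\tilde{U}_{i,k}^{\dagger}-\tilde{R}_i^{\dagger}+\breve{R}_i^{-1})\tilde{C}_i$, and setting $\breve{R}_{i,k}:=(\tilde{U}_{i,k}^{\dagger}-\tilde{R}_i^{\dagger}+\breve{R}_i^{-1})^{-1}$, Lemma~\ref{lemma:direct method error} gives $\tilde{U}_{i,k}^{\dagger}\to\tilde{R}_i^{\dagger}$ and hence $\breve{R}_{i,k}\to\breve{R}_i>0$. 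Thus the recursion above is $HCRR(A,\col_{i=1}^N(\tilde{C}_i),Q,\diag_{i=1}^N(\breve{R}_{i,k}),\mathcal{L}^{\gamma})$ with $\diag_{i=1}^N(\breve{R}_{i,k})\to\diag_{i=1}^N(\breve{R}_i)$.

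To invoke Lemma~\ref{lemma:DARE-HCRE-Convergent-Parameter}(2) I must verify its hypotheses. $A$ is invertible by assumption; $\mathcal{L}^{\gamma}$ is doubly stochastic with positive diagonal, hence inherits the connectivity of $\mathcal{L}$ and the lemma applies with $\mathcal{L}^{\gamma}$ playing the role of the consensus matrix; and collective observability of $(A,\col_{i=1}^N(\tilde{C}_i))$ follows from Assumptions~\ref{assumption: topology}--\ref{assumption: collective observability}, because $\tilde{C}_i=\sum_{j}l_{ij}^{(\gamma)}C_j^TR_j^{-1}C_j$ is a sum of positive semidefinite matrices with $l_{ii}^{(\gamma)}>0$, so $\ker\tilde{C}_i=\bigcap_{j:\,l_{ij}^{(\gamma)}>0}\ker C_j\subseteq\ker C_i$; therefore $\tilde{C}_iA^tv=\boldsymbol{0}$ for all $i,t$ forces $C_iA^tv=\boldsymbol{0}$ for all $i,t$, i.e. $v=\boldsymbol{0}$. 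Lemma~\ref{lemma:DARE-HCRE-Convergent-Parameter}(2) then yields that $P_{i,k+1|k}$ converges to the solution of $HCRE(A,\col_{i=1}^N(\tilde{C}_i),Q,\diag_{i=1}^N(\breve{R}_i),\mathcal{L}^{\gamma})$, and Lemma~\ref{lemma: R dagger} identifies this HCRE with \eqref{eq:Algorithm-3-4-steady-state-P}, proving~1).

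For~2) the only change is that $\tilde{U}_{i,k}^{\dagger}$ is replaced by $\tilde{\Upsilon}_{i,k}^{\dagger}$, which converges to $\tilde{R}_i^{\dagger}$ almost surely by Lemma~\ref{lemma:stochastic method error}. Since the network is finite, on a single probability-one event all the $\breve{R}_{i,k}:=(\tilde{\Upsilon}_{i,k}^{\dagger}-\tilde{R}_i^{\dagger}+\breve{R}_i^{-1})^{-1}$ converge to $\breve{R}_i$, and on that event the deterministic argument of~1) applies pathwise (Lemma~\ref{lemma:DARE-HCRE-Convergent-Parameter} is a statement about deterministic sequences), so $P_{i,k+1|k}\to P_i$ almost surely. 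I expect the only genuinely delicate point to be the observability reduction in the third paragraph — one must check that collective observability of $(A,\col_{i=1}^N(C_i))$ passes to $(A,\col_{i=1}^N(\tilde{C}_i))$ after diffusion through $\mathcal{L}^{\gamma}$, and note that, unlike the Modified CM, no per-node observability of $(A,\tilde{C}_i)$ is needed here; once that is settled, the Riccati-recursion convergence is a black-box application of Lemma~\ref{lemma:DARE-HCRE-Convergent-Parameter}, just as in the CM case.
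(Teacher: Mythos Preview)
Your proposal is correct and follows essentially the same route as the paper's proof, which is extremely terse: it writes the HCRR recursion for $P_{i,k+1|k}$, then says ``In use of Lemma~\ref{lemma:DARE-HCRE-Convergent-Parameter} and Lemma~\ref{lemma: R dagger}, the proof is completed by taking a similar procedure in the proof of Theorem~\ref{theorem:Algorithm-1-2-P-Convergence}.'' Your write-up simply expands this into the concrete steps (regularization via $\breve{R}_i$, convergence of the time-varying parameter, invocation of the HCRR convergence lemma, and the pathwise argument for the stochastic case). The one point you add that the paper leaves entirely implicit is the verification that $(A,\col_{i=1}^N(\tilde{C}_i))$ is collectively observable so that Lemma~\ref{lemma:DARE-HCRE-Convergent-Parameter}(2) actually applies with $\mathsf{C}_i=\tilde{C}_i$; your kernel argument ($\ker\tilde{C}_i\subseteq\ker C_i$ because $l_{ii}^{(\gamma)}>0$ and each summand is positive semidefinite) is correct and is indeed the natural way to fill that gap.
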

		\begin{proof}
			In Algorithm~3, we have
			\begin{equation*}
				P_{i,k+1|k} = A ( \sum_{j=1}^{N}l_{ij}^{(\gamma)} P_{j,k|k-1}^{-1} + \tilde{C}_i^T  \tilde{U}_{i,k}^{\dagger}\tilde{C}_i  )^{-1} A^T + Q.
			\end{equation*}
			
			In use of Lemma~\ref{lemma:DARE-HCRE-Convergent-Parameter} and Lemma~\ref{lemma: R dagger}, the proof is completed by taking a similar procedure in the proof in Theorem~\ref{theorem:Algorithm-1-2-P-Convergence}. 
			The statement 2) for Algorithm~4 can be proven similarly.
		\end{proof}
		
		\begin{theorem}
			\label{theorem:Algorithm-3-4-Error}
			If A is invertible, we have 
			\begin{enumerate}[1)]
				\item $\tilde{{P}}_{i,k+1|k} $ in Algorithm~3 converges to $\tilde{{P}}_{i}$, 
				\item $\tilde{{P}}_{i,k+1|k} $ in Algorithm~4 converges almost surely to $\tilde{{P}}_{i}$, 
			\end{enumerate}
			where $\tilde{P}_i$ is the $i$-th diagonal block of $\tilde{\mathcal{P}}$, which is the solution of the following Lyapunov equation
			\begin{equation}
				\mathcal{P} = \mathcal{A} \mathcal{P}  \mathcal{A}^T +\Gamma 
				\diag_{i=1}^N(R_i) \Gamma^T   + \mathbf{1}_N \mathbf{1}_N^T\otimes Q, \label{eq:Algorithm-3-4-steady-error}
			\end{equation}
			where
			\begin{equation*}
				\begin{aligned}
					&\mathcal{A}(i,j) =  l_{ij}^{(\gamma)} A\bar{P}_i P_j^{-1}, \quad 
					 \Gamma(i,j) = l_{ij}^{(\gamma)}A\bar{P}_i\tilde{C}_i^T \tilde{R}_i^{\dagger}C_j^TR_j^{-1},\\
					&\bar{P}_i = (\sum_{j=1}^{N}l_{ij}^{(\gamma)}P_{j}^{-1}+ \tilde{C}_i^T \tilde{R}_i^{\dagger}\tilde{C}_i)^{-1}. 
				\end{aligned}
			\end{equation*}
		\end{theorem}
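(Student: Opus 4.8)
\textbf{Plan.} The plan is to write the stacked (network-wide) recursion for the true estimation errors in Algorithms 3--4 and show it is an affine map whose ``homogeneous part'' $\mathcal{A}$ is Schur stable, so that the true error covariance converges to the unique solution of the associated discrete Lyapunov equation \eqref{eq:Algorithm-3-4-steady-error}. First I would derive, from the correction step of Algorithm 3, the error update for node $i$. Using $\hat{x}_{i,k|k} = P_{i,k|k}\big(J_{i,k}^{(\gamma)} + \tilde{C}_i^T\tilde{U}_{i,k}^{\dagger}\tilde{y}_{i,k}\big)$ with $J_{i,k}^{(\gamma)} = \sum_{j}l_{ij}^{(\gamma)}P_{j,k|k-1}^{-1}\hat{x}_{j,k|k-1}$, the prior error $\tilde{x}_{i,k+1|k} = Ax_{k|k} + w_k - A\hat{x}_{i,k|k}$ becomes an explicit linear combination of the neighbouring prior errors $\tilde{x}_{j,k|k-1}$, the fused measurement noises $\tilde{v}_{i,k} = \sum_j l_{ij}^{(\gamma)}C_j^TR_j^{-1}v_{j,k}$, and $w_k$. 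Stacking over $i\in\mathcal{N}$ and writing $\mathbf{x}_k := \col_{i=1}^N(\tilde{x}_{i,k|k-1})$ gives $\mathbf{x}_{k+1} = \mathcal{A}_k\mathbf{x}_k + \mathcal{G}_k \col_i(v_{i,k}) + (\mathbf{1}_N\otimes w_k)$, where $\mathcal{A}_k$ has $(i,j)$ block $l_{ij}^{(\gamma)}AP_{i,k|k}P_{j,k|k-1}^{-1}$ and $\mathcal{G}_k$ has blocks built from $AP_{i,k|k}\tilde{C}_i^T\tilde{U}_{i,k}^{\dagger}C_j^TR_j^{-1}$ (replace $\tilde{U}$ by $\tilde{\Upsilon}$ for Algorithm 4).

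\textbf{Key steps.} (i) By Lemma~\ref{lemma: R dagger} pick $\breve{R}_i>0$ with $\tilde{C}_i^T\breve{R}_i^{-1}\tilde{C}_i = \tilde{C}_i^T\tilde{R}_i^{\dagger}\tilde{C}_i$, so that $P_i$ solving \eqref{eq:Algorithm-3-4-steady-state-P} equals the $i$-th block of the HCRE solution with parameters $(\mathsf{A},\col(\mathsf{C}_i),\mathsf{Q},\diag(\breve{R}_i))$; by Theorem~\ref{theorem:Algorithm-3-4-P-Convergence}, $P_{i,k+1|k}\to P_i$ (a.s.\ for Alg.~4), hence $P_{i,k|k}\to\bar P_i$, $\mathcal{A}_k\to\mathcal{A}$ and $\mathcal{G}_k\to\Gamma\,\diag_i(\text{blocks})$ with $\mathcal{A},\Gamma$ as in the statement — here I use $\tilde{U}_{i,k}^{\dagger}\to\tilde{R}_i^{\dagger}$ from Lemma~\ref{lemma:direct method error} (resp. Lemma~\ref{lemma:stochastic method error}). (ii) Show $\mathcal{A}$ is Schur stable. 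This should follow from the HCRE theory in \cite{qian2022harmonic}: the limiting HCRR map linearized at its fixed point is exactly $\mathbf{P}\mapsto\mathcal{A}(\cdot)$-type, and convergence of HCRR (Lemma~\ref{lemma:DARE-HCRE-Existence-Uniqueness-Convergence}) forces the block-structured matrix $\mathcal{A}$ with entries $l_{ij}^{(\gamma)}A\bar P_iP_j^{-1}$ to have spectral radius $<1$; alternatively one invokes the collective-observability argument of \cite{battistelli2014kullback} that makes the error dynamics contractive. (iii) Since the driving noises $v_{i,k},w_k$ are zero-mean, white, and (by Assumption~\ref{assumption:sec2 noise model}) uncorrelated with $\mathbf{x}_k$, the covariance $\mathcal{P}_k := \mathbb{E}[\mathbf{x}_k\mathbf{x}_k^T]$ obeys $\mathcal{P}_{k+1} = \mathcal{A}_k\mathcal{P}_k\mathcal{A}_k^T + \mathcal{G}_k\diag_i(R_i)\mathcal{G}_k^T + \mathbf{1}_N\mathbf{1}_N^T\otimes Q$; plugging the noise-gain blocks shows the constant term converges to $\Gamma\diag_i(R_i)\Gamma^T + \mathbf{1}_N\mathbf{1}_N^T\otimes Q$. (iv) Apply a standard result on asymptotically time-invariant Lyapunov recursions (e.g. Theorem~1 of \cite{cattivelli2010diffusion}, already used in the proof of Theorem~\ref{theorem:Algorithm1-2-Steady-Error}): if $\mathcal{A}_k\to\mathcal{A}$ Schur stable and the forcing term converges, then $\mathcal{P}_k\to\mathcal{P}$, the unique solution of \eqref{eq:Algorithm-3-4-steady-error}; then $\tilde P_i$ is its $i$-th diagonal block. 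For Algorithm~4 every ``$\to$'' is replaced by ``$\stackrel{a.s.}{\longrightarrow}$'' and Lemma~\ref{lemma: a.s. mat prod mat inv} is used to propagate a.s.\ convergence through the products.

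\textbf{Main obstacle.} The delicate point is step (ii): rigorously establishing that the block matrix $\mathcal{A}$ with entries $l_{ij}^{(\gamma)}A\bar P_iP_j^{-1}$ is Schur stable. Unlike the CM case (Theorem~\ref{theorem:Algorithm1-2-Steady-Error}), here the coupling through $\sum_j l_{ij}^{(\gamma)}P_j^{-1}$ means $\mathcal{A}$ is genuinely a network operator, not block-diagonal, so one cannot argue node-by-node. I expect to borrow the machinery already cited — the boundedness/contraction estimates from Lemma~1 of \cite{battistelli2014kullback} and the HCRE fixed-point analysis of \cite{qian2022harmonic} — to show that iterating the affine map contracts in a suitable weighted norm; making the constants explicit (or at least showing their existence under only ``$A$ invertible'' plus Assumptions~1--3, via collective observability) is the part that needs the most care. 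A secondary nuisance is bookkeeping the exact form of $\Gamma$: one must verify that the cross terms $\tilde{C}_i^T\tilde{U}_{i,k}^{\dagger}\tilde{R}_i\tilde{U}_{i,k}^{\dagger}\tilde{C}_i$ and $\tilde{C}_i^T\tilde{U}_{i,k}^{\dagger}C_j^TR_j^{-1}$ limit to $\tilde{C}_i^T\tilde{R}_i^{\dagger}C_j^TR_j^{-1}$ consistently, which relies again on Lemma~\ref{lemma: R dagger} to absorb the rank deficiency of $\tilde{R}_i$.
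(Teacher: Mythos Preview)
Your proposal is correct and follows essentially the same route as the paper: derive the stacked prior-error recursion $\tilde{x}_{k+1|k}=\mathcal{A}_k\tilde{x}_{k|k-1}+\Gamma_k\col_j(v_{j,k})+\mathbf{1}_N\otimes w_k$, use Theorem~\ref{theorem:Algorithm-3-4-P-Convergence} together with Lemma~\ref{lemma:direct method error}/\ref{lemma:stochastic method error} to get $\mathcal{A}_k\to\mathcal{A}$, $\Gamma_k\to\Gamma$ (a.s.\ for Algorithm~4 via Lemma~\ref{lemma: a.s. mat prod mat inv}), invoke Lemma~\ref{lemma: R dagger} to pass to a positive-definite $\breve{R}_i$, cite the HCRE analysis of \cite{qian2022harmonic} for Schur stability of $\mathcal{A}$, and conclude with Theorem~1 of \cite{cattivelli2010diffusion}. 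Your ``main obstacle'' is exactly what the paper disposes of by a one-line reference to Section~IV.A of \cite{qian2022harmonic}; and your ``secondary nuisance'' about the limit of the noise gain is in fact simpler here than in Theorem~\ref{theorem:Algorithm1-2-Steady-Error}, since the stacked recursion carries the raw $v_{j,k}$ (with covariance $\diag_j R_j$) rather than $\tilde{v}_{i,k}$, so no sandwiched $\tilde{R}_i$ appears and the forcing term is directly $\Gamma_k\diag_j(R_j)\Gamma_k^T$.
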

		
		\begin{proof}
		By Lemma~\ref{lemma: R dagger}, there exists $\breve{R}_i >0$ such that $\tilde{C}_i^T (\tilde{R}_{i}^{\dagger} - \breve{R}_i^{-1}) \tilde{C}_i = \boldsymbol{O}$, which follows 
		\begin{equation}
			\bar{P}_i = (\sum_{j=1}^{N}l_{ij}^{(\gamma)}P_{j}^{-1}+ \tilde{C}_i^T \breve{R}_i^{-1}\tilde{C}_i)^{-1},
		\end{equation}
		and \eqref{eq:Algorithm-3-4-steady-state-P} can be rewritten as
		\begin{equation}
			P_{i} = A ( \sum_{j=1}^{N}l_{ij}^{(\gamma)} P_{j}^{-1} + \tilde{C}_i^T \breve{R}_i^{-1} \tilde{C}_i  )^{-1} A^T + Q.
		\end{equation}

		By the same arguments in Section IV.A in \cite{qian2022harmonic}, it can be obtained that $ \mathcal{A}$ is Schur stable.
		
		Denote $\mathcal{A}_k(i,j) =  l_{ij}^{(\gamma)} A{P}_{i,k|k} P_{j,k|k-1}^{-1}$, $ \Gamma_k (i,j) = l_{ij}^{(\gamma)}A{P}_{i,k|k}\tilde{C}_i^T \tilde{R}_i^{\dagger}C_j^TR_j^{-1}$.
		
		1) In Algorithm~3, similar to \eqref{eq:theorem2 estimation error}, we can derive that
		\begin{equation*}
			\small \begin{aligned}
				\tilde{x}_{i, k|k} &=x_k -  \hat{x}_{i,k|k}\\
				& = P_{i,k|k}(\sum_{j=1}^{N}l_{ij}^{(\gamma)} ( P_{j,k|k-1}^{-1} \tilde{x}_{j,k|k-1} -  \tilde{C}_i^T \tilde{U}_{i,k}^{\dagger} C_j R_j^{-1}{v}_{j,k}) ).
			\end{aligned} 
		\end{equation*}
		
		By denoting $ \tilde{x}_{k+1|k} = \col_{i=1}^N(\tilde{x}_{i, k+1|k})$, we have
		\begin{equation*}
			\tilde{x}_{k+1|k}  = \mathcal{A}_k  \tilde{x}_{k|k-1} + \Gamma_k \col_{j=1}^N(v_{j,k}) + \mathbf{1}_N \otimes w_{k}.
		\end{equation*}
		
		Thus
		\begin{equation*}
			\begin{aligned}
				\tilde{\mathcal{P}}_{k+1|k}
				= &\mathbb{E}_{\{w_p, \,v_{q,r},\, p \in \underline{k}\cup\{0\},\,q\in \mathcal{N}, r\in\underline{k}\}}[\tilde{x}_{k+1|k}(\tilde{x}_{k+1|k})^T]\\
				=&  \mathcal{A}_{k} \tilde{\mathcal{P}}_{k|k-1} \mathcal{A}_{k}^T +  \Gamma_k \diag_{i=1}^N(R_i)  \Gamma_k +  \mathbf{1}_N \mathbf{1}_N^T\otimes Q.
			\end{aligned}
		\end{equation*}
		
		Since $P_{i,k+1|k} {\longrightarrow} P_{i}$ and $\tilde{U}_{i,k}^{\dagger}{\rightarrow} \tilde{R}_i^{\dagger}$, we have $\mathcal{A}_{k}\longrightarrow \mathcal{A}$ and $\Gamma_k \longrightarrow \Gamma$. By exploiting Theorem~1 in \cite{cattivelli2010diffusion}, $\mathcal{P}_{k+1|k}$ converges to the solution of \eqref{eq:Algorithm-3-4-steady-error}.
		
		2) In Algorithm~4, since $P_{i,k+1|k} \stackrel{a.s.}{\longrightarrow} P_{i}$ and $\tilde{\Upsilon}_{i,k}^{\dagger} \stackrel{a.s.}{\longrightarrow} \tilde{R}_i^{\dagger}$, by Lemma~\ref{lemma: a.s. mat prod mat inv}, we have $\mathcal{A}_{k}\stackrel{a.s.}{\longrightarrow} \mathcal{A}$ and $\Gamma_k \stackrel{a.s.}{\longrightarrow }\Gamma$. By exploiting Theorem~1 in \cite{cattivelli2010diffusion}, $\mathcal{P}_{k+1|k}$ converges almost surely to the solution of \eqref{eq:Algorithm-3-4-steady-error}.
		\end{proof}

		\begin{theorem}
			\label{theorem:MCI consistency}
			When steady state is achieved, the pair $(\hat{x}_{i,k|k}, P_{i,k|k})$ and  $(\hat{x}_{i,k+1|k}, P_{i,k+1|k})$ in Algorithms~3-4 are consistent. 
		\end{theorem}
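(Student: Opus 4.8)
The plan is to prove consistency in the standard covariance-intersection sense, i.e.\ to show that once steady state is reached one has $\tilde{P}_{i,k|k}\le P_{i,k|k}$ and $\tilde{P}_{i,k+1|k}\le P_{i,k+1|k}$ for every node $i$, where $\tilde{P}$ is the actual error covariance of Theorem~\ref{theorem:Algorithm-3-4-Error} and $P$ the locally computed one of Theorem~\ref{theorem:Algorithm-3-4-P-Convergence}. Two structural observations enable the proof. First, since $\mathcal{L}$ is doubly stochastic, so is $\mathcal{L}^\gamma$, hence $\sum_{j=1}^N l_{ij}^{(\gamma)}=1$ and the fusion step of Algorithms~3--4 is a genuine convex combination in information form. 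Second, at steady state the covariance estimate has converged, so $\tilde{U}_{i,k}^\dagger\to\tilde{R}_i^\dagger$ (Algorithm~3) and $\tilde{\Upsilon}_{i,k}^\dagger\stackrel{a.s.}{\longrightarrow}\tilde{R}_i^\dagger$ (Algorithm~4), and the correction step uses the \emph{exact} covariance $\tilde{R}_i$ of the fused noise $\tilde{v}_{i,k}$ in \eqref{eq:fused measurement model}; by Lemma~\ref{lemma: R dagger} we may replace $\tilde{R}_i^\dagger$ by $\breve{R}_i^{-1}$ with $\breve{R}_i>0$ wherever convenient.

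I would then establish three monotonicity facts, each preserving the order $\tilde{P}\le P$. (i) \emph{Convex fusion preserves consistency}: if $\tilde{P}_{j,k|k-1}\le P_{j,k|k-1}$ for all $j$, with arbitrary and unknown cross-correlations among the local errors, then the fused pair $\bar{P}_i^{-1}=\sum_j l_{ij}^{(\gamma)}P_{j,k|k-1}^{-1}$, $\bar{x}_i=\bar{P}_i\sum_j l_{ij}^{(\gamma)}P_{j,k|k-1}^{-1}\hat{x}_{j,k|k-1}$ has true error covariance $\bar{\tilde{P}}_i\le\bar{P}_i$. Writing $x_k-\bar{x}_i=\bar{P}_i\sum_j l_{ij}^{(\gamma)}P_{j,k|k-1}^{-1}(x_k-\hat{x}_{j,k|k-1})$ and testing against an arbitrary vector, this collapses to the scalar inequality $\bigl(\sum_j\omega_j s_j\bigr)^2\le\sum_j\omega_j s_j^2$ for $\omega_j\ge 0$, $\sum_j\omega_j=1$ (Jensen / Cauchy--Schwarz), which is exactly what absorbs the cross terms. (ii) \emph{The exact measurement update preserves consistency}: given a consistent prior whose error is uncorrelated with $\tilde{v}_{i,k}$ — which holds by the whiteness in Assumption~\ref{assumption:sec2 noise model}, since the prior depends only on data up to time $k-1$ — the Moore--Penrose identity $\tilde{R}_i^\dagger\tilde{R}_i\tilde{R}_i^\dagger=\tilde{R}_i^\dagger$ and the bound $\bar{P}_i^{-1}\bar{\tilde{P}}_i\bar{P}_i^{-1}\le\bar{P}_i^{-1}$ give $\tilde{P}_{i,k|k}\le P_{i,k|k}\bigl(\bar{P}_i^{-1}+\tilde{C}_i^T\tilde{R}_i^\dagger\tilde{C}_i\bigr)P_{i,k|k}=P_{i,k|k}$. (iii) \emph{Prediction preserves consistency}: $\tilde{P}_{i,k|k}\le P_{i,k|k}$ immediately yields $A\tilde{P}_{i,k|k}A^T+Q\le AP_{i,k|k}A^T+Q$.

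To conclude I would run the recursion of Algorithms~3--4 with $\tilde{U}_{i,k}^\dagger$ (resp.\ $\tilde{\Upsilon}_{i,k}^\dagger$) replaced by its limit $\tilde{R}_i^\dagger$ and with any initialization satisfying $\tilde{P}_{i,0|0}=P_{i,0|0}$ — which the algorithms already meet, since $x_0\sim\mathbf{N}(\hat{x}_0,P_0)$ makes the initial true error covariance equal to $P_0=P_{i,0|0}$. Facts (i)--(iii) then give, by induction on $k$, that $\tilde{P}_{i,k|k}\le P_{i,k|k}$ and $\tilde{P}_{i,k+1|k}\le P_{i,k+1|k}$ for all $k$ and all $i$. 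By Lemma~\ref{lemma:DARE-HCRE-Existence-Uniqueness-Convergence}, Lemma~\ref{lemma: R dagger} and Theorem~\ref{theorem:Algorithm-3-4-P-Convergence}, $P_{i,k+1|k}\to P_i$ regardless of the initialization; and by the Schur stability of $\mathcal{A}$ established inside the proof of Theorem~\ref{theorem:Algorithm-3-4-Error}, $\tilde{P}_{i,k+1|k}\to\tilde{P}_i$ likewise regardless of the initialization, both limits being the steady-state values of Algorithms~3--4 since they are governed solely by the (already converged) asymptotic dynamics. Passing to the limit yields $\tilde{P}_i\le P_i$ and the analogous inequality for the posteriors, which is precisely consistency of the steady-state pairs. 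For Algorithm~4 every convergence and inequality above holds almost surely, via Lemma~\ref{lemma: a.s. mat prod mat inv} and Lemma~\ref{lemma:stochastic method error}.

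The hard part will be step (i): rigorously handling the $N$ unknown cross-correlations among the local prior errors, and making sure the bound uses only $\sum_j l_{ij}^{(\gamma)}=1$ together with the individual bounds $\tilde{P}_{j,k|k-1}\le P_{j,k|k-1}$. A secondary pitfall is circularity in the steady-state step: $\tilde{P}_i$ and $P_i$ must be pinned down through their initialization-independent convergence (Lemma~\ref{lemma:DARE-HCRE-Existence-Uniqueness-Convergence} and the Lyapunov-recursion convergence within Theorem~\ref{theorem:Algorithm-3-4-Error}) before being compared, rather than assuming the fixed point already lies on the consistent side.
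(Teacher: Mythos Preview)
Your proposal is correct and follows essentially the same route as the paper. Both arguments replace the time-varying $\tilde{U}_{i,k}^\dagger$ (or $\tilde{\Upsilon}_{i,k}^\dagger$) by its limit $\tilde{R}_i^\dagger$ to obtain a ``steady Modified CI'' recursion, then establish consistency inductively from the matched initialization $\tilde{P}_{i,0|0}=P_{i,0|0}$ via (a) the covariance-intersection bound for the convex fusion of priors, (b) exactness of the measurement-update step once the true $\tilde{R}_i$ is used, and (c) monotonicity under prediction, finally passing to the limit and observing that this limit coincides with the steady state of Algorithms~3--4. The only cosmetic difference is that you split the correction into two separate consistency-preserving operations (fusion, then Kalman update) and spell out the Jensen/Cauchy--Schwarz argument underlying covariance intersection, whereas the paper handles them jointly and simply cites \cite{julier1997non}; your added care about initialization-independence of the limits is welcome but not strictly needed, since the steady Modified CI shares the same initialization as Algorithms~3--4.
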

		
		\begin{proof}
			Consider the steady-state version of Algorithms~3-4 by merely substituting  $\tilde{U}_{i,k}$ or $\tilde{\Upsilon}_{i,k}$ with the static quantity $\tilde{R}_i$, and we call this algorithm the steady Modified CI.  It can be directly obtained that in this algorithm, $P_{i,k+1|k}\rightarrow P_i$ in Theorem~\ref{theorem:Algorithm-3-4-P-Convergence} and $\tilde{P}_{i,k+1|k} \rightarrow \tilde{P}_{i}$ in Theorem~\ref{theorem:Algorithm-3-4-Error}.
			
			 In the steady Modified CI, we have 
			\begin{equation*}
				\small\begin{aligned}
						P_{i,k|k} &= (\sum_{j=1}^{N}l_{ij}^{(\gamma)} P_{j,k|k-1}^{-1}+\tilde{C}_i^T  \tilde{R}_{i}^{\dagger} \tilde{C}_i)^{-1},\\
						\tilde{x}_{i,k|k} &=  P_{i,k|k}(\sum_{j=1}^{N}l_{ij}^{(\gamma)} ( P_{j,k|k-1}^{-1} \tilde{x}_{j,k|k-1} -  \tilde{C}_i^T \tilde{R}_{i}^{\dagger} C_j R_j^{-1}{v}_{j,k}) ).
				\end{aligned}
			\end{equation*}
			
			If $\mathbb{E}[\tilde{x}_{i,k|k-1}\tilde{x}_{i,k|k-1}^T]\le P_{i,k|k-1}$ holds for all $i\in \mathcal{N}$, 
			from covariance intersection\cite{julier1997non}, we have
			\begin{equation*}
				\begin{aligned}
				&\mathbb{E}\left[(\sum_{j=1}^{N}l_{ij}^{(\gamma)} ( P_{j,k|k-1}^{-1} \tilde{x}_{j,k|k-1})(\sum_{j=1}^{N}l_{ij}^{(\gamma)} ( P_{j,k|k-1}^{-1} \tilde{x}_{j,k|k-1})^T\right] \\
				\le& \sum_{j=1}^{N}l_{ij}^{(\gamma)} P_{j,k|k-1}^{-1}.
				\end{aligned}
			\end{equation*}
			
			On the other hand, from \eqref{eq:covariance of  fuesed measurement}, we have 
			\begin{equation*}
				\begin{aligned}
				&\mathbb{E}\left[(\sum_{j=1}^{N}l_{ij}^{(\gamma)} \tilde{C}_i^T \tilde{R}_{i}^{\dagger} C_j R_j^{-1}{v}_{j,k})(\sum_{j=1}^{N}l_{ij}^{(\gamma)} \tilde{C}_i^T \tilde{R}_{i}^{\dagger} C_j R_j^{-1}{v}_{j,k})^T\right] \\
				= &\tilde{C}_i^T  \tilde{R}_{i}^{\dagger} \tilde{C}_i.
				\end{aligned}
			\end{equation*}
		
			Since $\tilde{x}_{i,k|k-1}$ is independent of $v_{j,k}$, for all $i,j \in \mathcal{N}$, we have
			\begin{equation*}
				\begin{aligned}
				\mathbb{E}\left[	\tilde{x}_{i,k|k} 	\tilde{x}_{i,k|k} ^T\right] &\le P_{i,k|k}(\sum_{j=1}^{N}l_{ij}^{(\gamma)} P_{j,k|k-1}^{-1} +\tilde{C}_i^T  \tilde{R}_{i}^{\dagger} \tilde{C}_i ) P_{i,k|k}\\
				& = P_{i,k|k}.
				\end{aligned}
			\end{equation*}
			and therefore $ \mathbb{E}[	\tilde{x}_{i,k+1|k} 	\tilde{x}_{i,k+1|k} ^T] \le P_{i,k+1|k}$ by the prediction step.
			In combination with the initial condition $ 	\mathbb{E}[	\tilde{x}_{i,0|0} 	\tilde{x}_{i,0|0} ^T] = P_{i,0|0}$, we have that the pair $(\hat{x}_{i,k|k}, P_{i,k|k})$ and  $(\hat{x}_{i,k+1|k}, P_{i,k+1|k})$ are consistent in the steady Modified CI. Note that the steady Modified CI and Algorithms~3-4 share the same actual and estimated error covariance in steady state.
			Thus, in Algorithms~3-4, it still holds that $\tilde{P}_{i}\le P_i $.

		\end{proof}
		
		\begin{theorem}
			As the number of consensus steps per iteration tends to infinity, Algorithms~3-4 converge to the CKF method. 
		\end{theorem}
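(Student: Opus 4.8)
The plan is to follow the template used for the CM in the last part of the proof of Theorem~\ref{theorem:ModifiedCM best performance}, adding the bookkeeping needed by the consensus-on-information structure. Since the prediction step of Algorithms~3--4 is identical to the CKF prediction \eqref{eq: sec2 CKF predict}, it suffices to show that the fusion-plus-correction step collapses onto the CKF measurement update \eqref{eq:sec2 CKF measure update} as $\gamma\to\infty$. Writing $S:=\sum_{j=1}^{N}C_j^T R_j^{-1}C_j$ and recalling from the proof of Theorem~\ref{theorem:ModifiedCM best performance} that $l_{ij}^{(\gamma)}\to\frac1N$, one gets $\tilde C_i\to\frac1N S$, $\tilde R_i\to\frac1{N^2}S$ (and, since $\mathcal L$ is primitive under Assumption~\ref{assumption: topology}, all $l_{ij}^{(\gamma)}>0$ for large $\gamma$, so $\operatorname{rank}\tilde R_i=\operatorname{rank}S$ and $\tilde R_i^{\dagger}\to N^2 S^{\dagger}$), and $\tilde y_{i,k}\to\frac1N\sum_{j}C_j^T R_j^{-1}y_{j,k}$; hence, using $SS^{\dagger}S=S$ together with $\sum_j C_j^T R_j^{-1}y_{j,k}\in\col(S)$, one recovers $\tilde C_i^T\tilde R_i^{\dagger}\tilde C_i\to S$ and $\tilde C_i^T\tilde R_i^{\dagger}\tilde y_{i,k}\to\sum_j C_j^T R_j^{-1}y_{j,k}$, exactly as recorded there.

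The point that is new relative to the CM case is the pair of averaged prior terms $\tilde V_{i,k}^{(\gamma)}=\sum_j l_{ij}^{(\gamma)}P_{j,k|k-1}^{-1}$ and $J_{i,k}^{(\gamma)}=\sum_j l_{ij}^{(\gamma)}P_{j,k|k-1}^{-1}\hat x_{j,k|k-1}$ in the correction step of Algorithm~3 (and their analogues in Algorithm~4). The plan is to argue by induction on $k$ that, as $\gamma\to\infty$, $(\hat x_{i,k|k},P_{i,k|k})$ converges to the CKF iterate $(\hat x_{k|k},P_{k|k})$ for every $i\in\mathcal N$. The base case holds since the initializations coincide with those of CKF. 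In the inductive step the prediction step is a fixed affine map and so preserves the limit, whence every node prior tends to the common CKF prior $(\hat x_{k|k-1},P_{k|k-1})$; consequently $\tilde V_{i,k}^{(\gamma)}\to\frac1N\cdot N\,P_{k|k-1}^{-1}=P_{k|k-1}^{-1}$ and $J_{i,k}^{(\gamma)}\to P_{k|k-1}^{-1}\hat x_{k|k-1}$, so the correction step of Algorithm~3 becomes in the limit $P_{i,k|k}\to(P_{k|k-1}^{-1}+S)^{-1}$ and $\hat x_{i,k|k}\to(P_{k|k-1}^{-1}+S)^{-1}(P_{k|k-1}^{-1}\hat x_{k|k-1}+\sum_j C_j^T R_j^{-1}y_{j,k})$, which is precisely \eqref{eq:sec2 CKF measure update}; matrix inversion being continuous at invertible points, the induction closes. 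The same argument applies to Algorithm~4 along almost-sure sample paths, since $\tilde\Upsilon_{i,k}$ plays the role of $\tilde U_{i,k}$.

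To keep the correction step free of a spurious time dependence, it should be run on the \emph{steady Modified CI} of the proof of Theorem~\ref{theorem:MCI consistency}, i.e.\ with $\tilde U_{i,k}$ (resp.\ $\tilde\Upsilon_{i,k}$) replaced by the static matrix $\tilde R_i$: by Theorems~\ref{theorem:Algorithm-3-4-P-Convergence}--\ref{theorem:Algorithm-3-4-Error}, and by the observation in Theorem~\ref{theorem:MCI consistency} that the steady Modified CI and Algorithms~3--4 share their steady-state estimated and actual covariances, this is without loss for the $k\to\infty$ behaviour. The main obstacle is exactly this interchange of two limits — the consensus index that drives $\tilde U_{i,k}\to\tilde R_i$ and the horizon $\gamma\to\infty$ — compounded by the discontinuity of $\tilde R\mapsto\tilde R^{\dagger}$ at rank drops; both are dealt with as in Section~IV, by passing to the full-rank surrogate $\breve R_i$ of Lemma~\ref{lemma: R dagger} (so that $\tilde C_i^T\breve R_i^{-1}\tilde C_i=\tilde C_i^T\tilde R_i^{\dagger}\tilde C_i$), for which the solution map of the DARE/HCRE depends continuously on the noise covariance (Lemma~\ref{lemma:DARE-HCRE-Continuous-on-R}) and the associated recursions with slowly-varying parameters converge (Lemma~\ref{lemma:DARE-HCRE-Convergent-Parameter}).
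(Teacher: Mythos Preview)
Your proposal is correct and follows essentially the same approach as the paper: once $l_{ij}^{(\gamma)}\to\frac1N$, every node's prior coincides, the consensus-on-information averages $\tilde V_{i,k}^{(\gamma)}$ and $J_{i,k}^{(\gamma)}$ collapse to the common prior, and the remainder reduces to the second half of Theorem~\ref{theorem:ModifiedCM best performance}. The paper's own proof is a one-line version of your second paragraph (it simply asserts $P_{i,k+1|k}=P_{j,k+1|k}$ for all $i,j,k$ and defers to Theorem~\ref{theorem:ModifiedCM best performance}); your third paragraph about the steady Modified CI and the interchange of limits is more machinery than the paper invokes, since it works directly at $\gamma=\infty$ rather than treating a genuine double limit.
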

		\begin{proof}
			Note that $\lim\limits_{\gamma\rightarrow \infty}l_{ij}^{(\gamma)}=\frac{1}{N}$, thus $P_{i,k+1|k} = P_{j,k+1|k}$ for all $i,j \in \mathcal{N}, k\in \mathbb{Z}_0^+$. The remaining proof is similar to that of Theorem~\ref{theorem:ModifiedCM best performance}.
		\end{proof}

	\begin{remark}
	\label{remark:smaller bound}
		The traditional CI also provides an upper bound of the actual error covariance (see Theorem~3 in \cite{battistelli2014kullback}). Compared with that, the Modified CI provides a smaller bound.
	This can be justified by comparing the HCRRs that the covariance bound \( P_i \) must satisfy. For the Modified CI, the bound $P_i^{MCI}$ corresponds to the solution of \eqref{eq:Algorithm-3-4-steady-state-P}, while for CI, the bound $P_i^{CI}$ is obtained by replacing  $\tilde{C}_i^T\tilde{R}_i^{\dagger}\tilde{C}_i$ in \eqref{eq:Algorithm-3-4-steady-state-P} with \( \tilde{C}_i \). Note that \( \tilde{C}_i \le \tilde{C}_i^T\tilde{R}_i^{\dagger}\tilde{C}_i \),  therefore $P_i^{MCI} \le P_i^{CI}$.
	
	An evident difference is that as the  number of consensus steps per iteration approaches infinity, the gap between the estimated and true error covariance in CI converges to a constant non-zero matrix. In contrast, in the Modified CI, this gap converges to zero, which means that both of the estimated and true error covariance converge to the error covariance of CKF.
	\end{remark}
	
	\begin{remark}
	The proposed methods, along with their stability and performance analysis (excluding only the upper bound of the convergence rate in Lemma~\ref{lemma:direct method error}, \(\frac{N|\lambda_2|^k}{1 - N|\lambda_2|^{k}}\)), can be readily extended to strongly connected directed graphs, provided that the weight matrix is doubly stochastic with positive diagonal entries. This type of weight matrix can be computed in a distributed manner following \cite{gharesifard2012distributed}, and the proof follows directly using Lemma~1 from \cite{qian2021fully}.
	\end{remark}

	\section{Simulation}
		This section aims at validating the theoretical analysis given in Sections III-IV and showing the effectiveness of the proposed methods. Three numerical experiments are provided for a distributed target-tracking problem. In the first experiment, the proposed algorithms are implemented on a randomly generated network, and the simulated performance is compared with the theoretically predicted performance derived in Theorems~\ref{theorem:Algorithm1-2-Steady-Error} and Theorem~\ref{theorem:Algorithm-3-4-Error}. Meanwhile, the  error of estimating $\tilde{R}_i$ and $\tilde{R}_i^\dagger$ in this scenario is also compared with the bounds given in Lemmas~\ref{lemma:direct method error} and Lemma~\ref{lemma:stochastic method error}. In the second and third experiments, we consider different communicating topologies and different consensus weights, respectively, and compare the performance of Algorithms~1-4 with traditional methods.
		
		The considered target tracking problem can be found in \cite{battistelli2014kullback}. The  system is  described by $x_{k+1} = Ax_k + w_k$, where $x_k = [p_x, v_x, p_y, v_y]^T$ denotes the target position and velocity along the axis, the system matrix $A$ is given by
		\begin{equation*}
			a = \begin{pmatrix}
				1&T\\0 &1
			\end{pmatrix},\quad
			A = \begin{pmatrix}a & \boldsymbol{0}_{2\times 2}\\
				\boldsymbol{0}_{2\times 2} & a
			\end{pmatrix},
		\end{equation*}
	and the error covariance $Q$ is given by
		\begin{equation*}
			G=\left(\begin{array}{cc}
				\frac{T^3}{3} & \frac{T^2}{2} \\
				\frac{T^2}{2} & T
			\end{array}\right), \quad Q=\left(\begin{array}{cc}
				G & 0.5 G \\
				0.5 G & G
			\end{array}\right)
		\end{equation*}
		with the sampling interval $T = 0.1s$.
		
		We assume that the network contains three kinds of nodes, each of which can process local data and communicate with its neighbours. The difference lies in the measurement it can obtain. Type 1 and  Type 2 can respectively obtain the $x$, $y$ position of the target, while Type 3 does not have the ability of sensing. The observation matrices are	\begin{equation*}
			\begin{aligned}
				C^{(1)} = [1, 0,0,0],\quad C^{(2)} = [0,0,1,0], \quad C^{(3)} = \boldsymbol{0}_{1\times4},
			\end{aligned}
		\end{equation*}
		and the noise covariances are
		\begin{equation*}
			R^{(1)} = R^{(2)} = 0.01, \quad R^{(3)} = 10^6.
		\end{equation*}
		
		In the first experiment, the whole network is composed of $N=20$ nodes, randomly spreading across a square region with side length 300m. We assume that any two nodes within a distance of 100m from each other can communicate with each other, and the consensus weights are set to be Metropolis weights\cite{xiao2005scheme}. The communication topology generated in this case is shown in Fig. \ref{fig:fig1}. 
		
		\begin{figure}[t]
			\centering
			\includegraphics[width=0.9\linewidth]{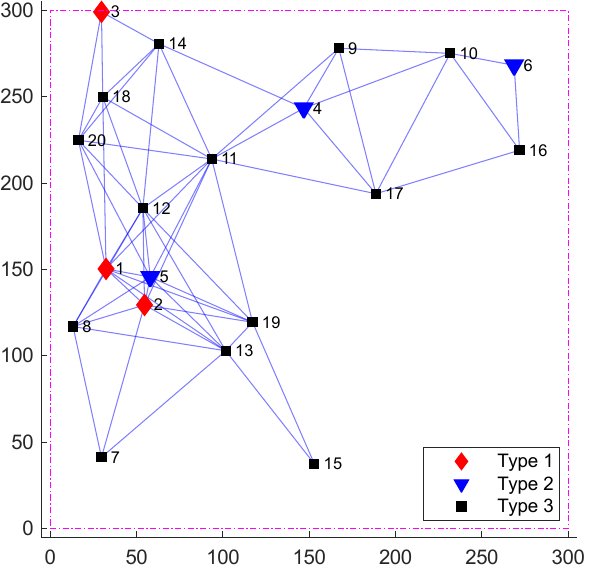}
			\caption{Plot of the randomly generated sensor network (20 nodes and communication radius equal to 100m).}
			\label{fig:fig1}
		\end{figure}
		
		We consider a trajectory of length 20s, starting from $x_0 \sim \mathbf{N}(\hat{x}_0, 100I_4)$ where $\hat{x}_0 = [150, 0, 150, 0]^T$. With Monte Carlo method, each filtering process is run for 1000 times and the mean square error (MSE) is calculated as 
		
		\begin{equation*}
			\mathrm{MSE}_{i, k}=\frac{1}{1000} \sum_{l=1}^{1000}\left\|\hat{x}_{i, k|k}^{(l)}-x_k^{(l)}\right\|_2^2,
		\end{equation*}
		where $\hat{x}_{i, k|k}^{(l)}$ and $x_k^{(l)}$ denote the posterior state estimate and true state at
		time instant $k$ in the $l$-th simulation, respectively. In the following discussion, we use the mean of  MSE (MMSE)  in steady state of the whole network to describe the algorithm performance.
		
		\begin{figure}
			\centering
			\includegraphics[width=0.9\linewidth]{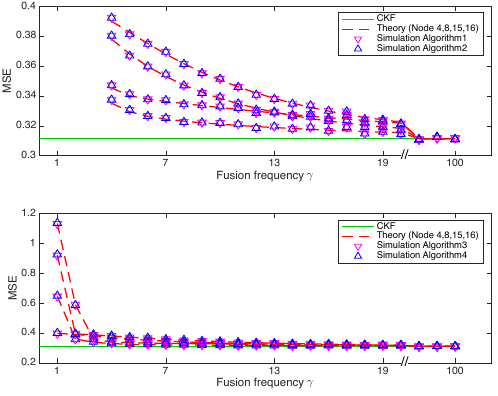}
			\caption{The steady-state errors for Algorithms~1-2 (top) and Algorithms~3-4 (bottom), along with theoretical predictions from Theorem~2 (top) and Theorem~5 (bottom).}
			\label{fig:fig2}
		\end{figure}
		The simulated  and theoretically predicted performance of Algorithms~1-4 under different $\gamma$ (representing the number of communication steps between two successive sampling instants) is shown in Fig. \ref{fig:fig2}. It can be seen the Monte Carlo result perfectly fits the theoretical analysis, and both of the steady-state performance of  the Modified CM and Modified CI converge to the optimal performance of CKF as expected.
		
		\begin{figure}
			\centering
			\includegraphics[width=0.9\linewidth]{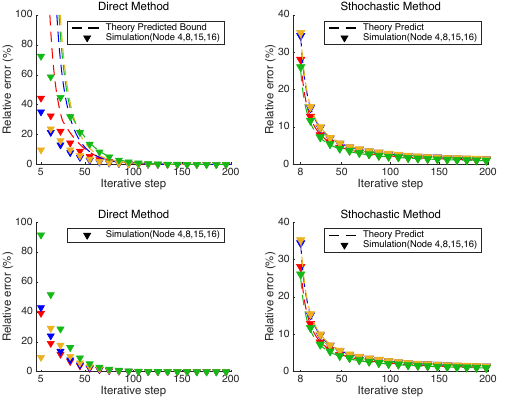}
			\caption{The estimation errors for QWS (top) and the corresponding Moore-Penrose inverse (bottom) using both direct and stochastic methods. The theoretical bounds/values are based on Lemma~1 (left) and Lemma~3 (right).}
			\label{fig:fig3}
		\end{figure}
		Fig. \ref{fig:fig3} shows the error of the direct and stochastic method on estimating $\tilde{R}_i$ and $\tilde{R}_i^\dagger$ in this experiment,  where $\gamma = 5$. 	The direct method is used in Algorithm~1 and Algorithm~3 and the stochastic method in Algorithm~2 and Algorithm~4. It can bee seen that in both  methods, the estimate errors converge to zero rapidly, and both of the predicted error bound in Lemma~\ref{lemma:direct method error} and the predicted error in Lemma~\ref{lemma:stochastic method error} precisely characterize the decay of the error.

		\begin{figure}[ht]
			\centering
			\includegraphics[width=0.9\linewidth]{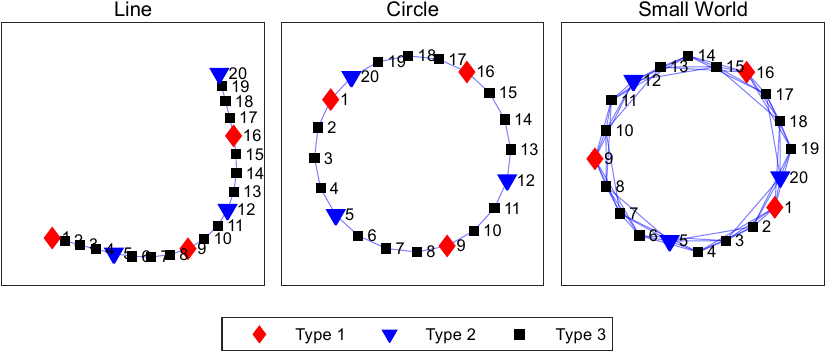}
			\caption{Plot of sensor networks added in Experiment 2.}
			\label{fig:fig4}
		\end{figure}
		
				\begin{figure}
			\centering
			\includegraphics[width=0.9\linewidth]{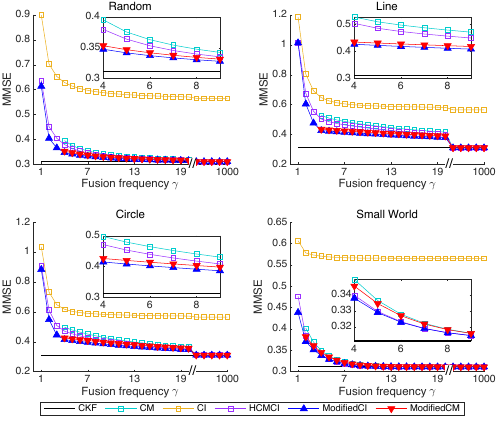}
			\caption{The steady-state performance of the proposed and conventional algorithms under different communication topologies in Experiment 2.}
			\label{fig:fig5}
		\end{figure}

		Experiment 2 considers four communication graphs, i.e., randomly generated, line, circle and small world graphs. The newly added graphs are shown in Fig. \ref{fig:fig4}. We implement Algorithms~1-4 and the CM\cite{qian2022consensus}, CI\cite{battistelli2014kullback}, HCMCIs\cite{battistelli2014consensus} on these graphs, while the consensus weights are set to be Metropolis weights as well.  The result for different  number of consensus steps per iteration is shown in Fig. \ref{fig:fig5}. It can be seen that the performance of the Modified CM is always better than that of CM as shown in Theorem~\ref{theorem:ModifiedCM best performance}, and the Modified CI outperforms all the other algorithms on each graph in this experiment. The reason for the performance improvement can be attributed to the use of accurate fused measurement covariance in the modified algorithm, rather than the approximated one used in previous algorithms, as discussed in Section II, Theorem~\ref{theorem:ModifiedCM best performance} and Remark~\ref{remark:smaller bound}.

		\begin{table}[ht]
			\centering
			\caption{Comparison of the steady-state performance of different algorithms in experiment 3.}
			\label{tab:experiment3}
			\fontsize{7}{10}\selectfont
			\begin{tabular}{|@{\hspace{3pt}}c@{\hspace{3pt}}| *{5}{c@{\hspace{3pt}}c@{\hspace{3pt}}|}}
				\hline
				{} & \multicolumn{2}{c|}{CM} & \multicolumn{2}{c|}{CI} & \multicolumn{2}{c|}{HCMCI} & \multicolumn{2}{c|}{Modified CM} & \multicolumn{2}{c|}{Modified CI} \\
				\hline
				$\eta$ &abs & rel(\%) &abs & rel(\%) &abs & rel(\%) &abs & rel(\%) &abs & rel(\%) \\
				\hline
				0& 0.395 & 100.0 &0.629 & 100.0 &0.380 & 100.0 &0.353 & 100.0 &\textbf{0.348} & 100.0 \\
				\hline
				0.1& 0.406 & 102.8 &0.637 & 101.3 &0.388 & 102.2 & 0.357 & 100.9 &\textbf{0.351} & \textbf{100.9} \\
				\hline
				0.3& 0.437 & 110.6 &0.659 & 104.8 &0.410 & 107.9  &0.364 & 103.1 &\textbf{0.358} &  \textbf{103.0}\\
				\hline
				0.5& 0.491 & 124.2 &0.700 &111.4  &0.443 & 116.7 &0.374 & 105.9 &\textbf{0.367}& \textbf{105.6} \\
				\hline
				0.7&0.626 & 158.2 &0.797 & 126.7 & 0.503 & 132.4 &0.385 & 108.9 &\textbf{0.378} & \textbf{108.7} \\
				\hline
				0.9&2.480 & 627.3 &1.247 & 198.3  &0.706 & 186.0 &0.395 & \textbf{111.7} &\textbf{0.390} & 112.2 \\
				\hline
			\end{tabular}
		\end{table}
		
		In the third experiment, we compare the performance of the previous and modified algorithms on the graph generated in Experiment 1, with the consensus weights changing from 
		 L(Metropolis weights) to $\eta I+(1-\eta)L$, where $0 \le \eta <1$. This change corresponds to the scenario that each node chooses to place more confidence on its own measurement and pay less attention to others'. It can be directly verified that the adjusted weight matrices are still doubly stochastic. The result with $\gamma=4$ is shown in Table \ref{tab:experiment3}, where the ``abs'' column shows the absolute value of MMSE, and the ``rel" column shows the ratio of MMSE under specific weights to MMSE under the Metropolis weights. It can be observed that the performance of all the algorithms is deteriorated as $\eta$ increases, but the degradation of the modified methods is significantly smaller than the traditional methods. This can still be attributed to that the traditional methods use an approximated covariance of the fused measurement, and the approximating accuracy is dependent on the the consensus weights. While in the modified methods, this problem does not exist since the accurate covariance is used. This experiment shows that the modified methods are relatively less sensitive to the change of weights.

		\section{Conclusion}
		In this paper, we demonstrate that the accurate fused measurement noise covariance is crucial for consensus-based DKFs, as its absence can lead to performance degradation or inconsistency. We derive two methods, the direct and the stochastic approaches, to compute this covariance in a fully distributed manner, which are then used to modify the CM and CI methods. Through an analysis of DARR/HCRR with asymptotically convergent parameters, we provide both stability guarantees and the steady-state covariance of the estimate. This analysis enriches the DARE/HCRE theory and offers practical techniques for designing DKF schemes with convergent parameters. The modified algorithms are shown to outperform traditional CM/CI algorithms in terms of steady-state estimation accuracy, resolve the inherent inconsistency and non-optimality in CM, and bridge the insurmountable performance gap between CI and CKF. Overall, this paper presents a paradigm for addressing flawed measurement fusion, which is promising to be adapted to other consensus-based DKFs.
		
		However, covariance intersection still induces conservatism in the Modified CI, as accurately characterizing the complex correlations between estimates from different nodes remains challenging. This issue is significant and is worth further investigation. Additionally, the design of optimal weights to minimize the covariance of fused measurements is also an important area for future research.

		\section*{Appendix}		
		\subsection{Proof of Lemma~\ref{lemma:direct method error}}
		\begin{lemma}
			\label{lemma:direct method core}
			Assume that \rm{$\mathbi{Q}_p$} $= \col_{i=1}^p(q_i) \in \mathbb{R}^{p\times N}$ is full row rank, where $q_i \in \mathbb{R}^{1\times N}$. Let $\alpha_{i} > 0$, then
			\begin{equation}
				\label{eq:direct-method-core}
			\textstyle	\alpha_{i} q_i \left(\sum_{j=1}^{p}\alpha_{j} q_j^T q_j\right)^\dagger q_k^T = \delta_{ik}. 
			\end{equation}
		\end{lemma}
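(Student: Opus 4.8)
The plan is to recognize \eqref{eq:direct-method-core} as a single entry of a matrix product and then collapse that product using standard Moore--Penrose identities for full-rank factors.

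First I would set $D:=\diag_{j=1}^p(\alpha_j)$, which is positive definite since each $\alpha_j>0$. Writing $\mathbi{Q}_p=\col_{i=1}^p(q_i)$, the $j$-th column of $\mathbi{Q}_p^T$ is $q_j^T$ and the $j$-th row of $\mathbi{Q}_p$ is $q_j$, so $\sum_{j=1}^p\alpha_j q_j^Tq_j=\mathbi{Q}_p^T D\,\mathbi{Q}_p$. Similarly $\alpha_i q_i$ is the $i$-th row of $D\mathbi{Q}_p$ and $q_k^T$ is the $k$-th column of $\mathbi{Q}_p^T$. Hence the left-hand side of \eqref{eq:direct-method-core} is precisely the $(i,k)$-entry of the $p\times p$ matrix $D\mathbi{Q}_p\bigl(\mathbi{Q}_p^T D\,\mathbi{Q}_p\bigr)^\dagger\mathbi{Q}_p^T$, and it suffices to show this matrix equals $I_p$.

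Next I would factor the weights through the decomposition: let $M:=D^{1/2}\mathbi{Q}_p\in\mathbb{R}^{p\times N}$, which has full row rank because $D^{1/2}$ is invertible and $\mathbi{Q}_p$ has full row rank. Then $\mathbi{Q}_p^T D\,\mathbi{Q}_p=M^TM$, $D\mathbi{Q}_p=D^{1/2}M$, and $\mathbi{Q}_p^T=M^TD^{-1/2}$, so the matrix of interest becomes $D^{1/2}M(M^TM)^\dagger M^TD^{-1/2}$. For a full-row-rank $M$ one has the standard identities $(M^TM)^\dagger M^T=M^\dagger$ and $MM^\dagger=I_p$, both immediate from the SVD $M=U[\Sigma_1\ \boldsymbol{0}]V^T$ with $U\in\mathbb{R}^{p\times p}$, $V\in\mathbb{R}^{N\times N}$ orthogonal and $\Sigma_1$ diagonal positive definite (the second identity is also just $MM^T(MM^T)^{-1}=I_p$, using $M^\dagger=M^T(MM^T)^{-1}$, valid since $MM^T$ is invertible). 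Combining, $M(M^TM)^\dagger M^T=MM^\dagger=I_p$, whence the matrix equals $D^{1/2}I_pD^{-1/2}=I_p$, which proves \eqref{eq:direct-method-core}.

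The computation is short once this viewpoint is adopted; the only point needing care is the identity $(M^TM)^\dagger M^T=M^\dagger$, which I would justify in one line via the SVD so the argument is self-contained, together with the observation that full row rank (not merely a nonzero rank) is exactly what makes $MM^\dagger=I_p$ hold, so that the diagonal weight factor cancels cleanly rather than leaving a residual orthogonal projector. Everything else---identifying the scalar relation with a matrix entry and commuting $D^{1/2}$ past the pseudoinverse---is routine bookkeeping.
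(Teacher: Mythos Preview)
Your proof is correct and essentially the same as the paper's. Both arguments introduce the scaled matrix $\bar{Q}=D^{1/2}\mathbi{Q}_p=\col_{i=1}^p(\sqrt{\alpha_i}\,q_i)$ and reduce the claim to $\bar{Q}(\bar{Q}^T\bar{Q})^\dagger\bar{Q}^T=I_p$; the only cosmetic difference is that the paper derives this identity by starting from the Moore--Penrose defining relation $\bar{Q}^T\bar{Q}(\bar{Q}^T\bar{Q})^\dagger\bar{Q}^T\bar{Q}=\bar{Q}^T\bar{Q}$ and cancelling with $(\bar{Q}\bar{Q}^T)^{-1}\bar{Q}$ on the left and $\bar{Q}^T(\bar{Q}\bar{Q}^T)^{-1}$ on the right, whereas you invoke $M^\dagger=M^T(MM^T)^{-1}$ directly.
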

		\begin{proof}
			Denote $\bar{Q} = \col_{i=1}^p(\sqrt{\alpha_{i}} q_i) $, with the definition of Moore-Penrose inverse, we have
			\begin{equation}
				\label{eq:direct-method-proof-eq1}
				\bar{Q}^{T} \bar{Q} \left(\bar{Q}^{T}\bar{Q}\right)^\dagger \bar{Q}^T  \bar{Q}=\bar{Q}^T  \bar{Q} .
			\end{equation}
			Since $\mathbi{Q}_p$ is full row rank, we have $\bar{Q}$ is full row rank, and thus $\bar{Q}\bar{Q}^T$ is invertible. Multiplying both sides of the equation \eqref{eq:direct-method-proof-eq1} with $(\bar{Q}\bar{Q}^T)^{-1}\bar{Q}$  on the left and $\bar{Q}^T(\bar{Q}\bar{Q}^T)^{-1}$ on the right, one can show
			\begin{equation*}
				\bar{Q} \left(\bar{Q}^{T}\bar{Q}\right)^\dagger \bar{Q}^T = I_{p},
			\end{equation*}
			which is equivalent to \eqref{eq:direct-method-core}.
		\end{proof}

		\begin{proof}[\textbf{Proof of Lemma~\ref{lemma:direct method error}}]
			The proof of 2) will be given firstly.
				Since $\mathbi{Q} = \col_{i=1}^N(q_i)$ is full rank, we have $\col_{i\in \mathcal{J}_i^{(k)}}(q_i)$ is  full row rank, where $\mathcal{J}_i^{(k)} = \{j\in \mathcal{N}| l_{ij}^{(k)} >0\}$ represents the neighbours within $k$ steps from the $i$-th node.  Denote $\tilde{q}_i := q_i\otimes I_n$, then by Lemma~\ref{lemma:direct method core} we have 
			\begin{equation}
				\label{eq:proof of lemma1 eq 1}
				\alpha_{i} \tilde{q}_i \left(\sum_{j\in \mathcal{J}_{i}^{(k)}}\alpha_{j} \tilde{q}_j^T \tilde{q}_j\right)^\dagger \tilde{q}_s^T = \delta_{is}I_n, 
			\end{equation} for any $i,s \in \mathcal{J}_{i}^{(k)} $, $\alpha_{i}> 0$ and $k \ge \gamma$.
			
			 In use of \eqref{eq:proof of lemma1 eq 1} and $ \mathcal{J}_{i}^{(\gamma)} \subseteq   \mathcal{J}_{i}^{(k)}$,  we get
			\begin{equation}
				\label{eq:direct-method-Uik}
				\begin{aligned}
					\tilde{U}_{i,k} 
				=&(\sum_{t=1}^N l_{it}^{(\gamma)}Y_t^T \tilde{q}_t ) ( \sum_{j=1}^{N}l_{ij}^{(k)}N \tilde{q}_j^T\tilde{q}_j)^\dagger   (\sum_{s=1}^N l_{is}^{(\gamma)}Y_s^T \tilde{q}_s  )^T\\
				= &\frac{1}{N}\sum_{t\in \mathcal{J}_i^{(\gamma)}}\sum_{s\in \mathcal{J}_i^{(\gamma)} } l_{it}^{(\gamma)} l_{is}^{(\gamma)} Y_t^T \tilde{q}_t ( \sum_{j\in \mathcal{J}_i^{(k)} }l_{ij}^{(k)}\tilde{q}_j^T\tilde{q}_j)^\dagger \tilde{q}_s^T Y_s \\
				=& \frac{1}{N}\sum_{j=1}^N \frac{(l_{ij}^{(\gamma)})^2}{l_{ij}^{(k)}}X_j.
			\end{aligned}
			\end{equation}
			
			Therefore, 
			\begin{equation}
				\label{eq:direct-method-error-inequality}
				\begin{aligned}
					\|\tilde{U}_{i,k} - \tilde{X}_i\|_2 
				=& \left\|\sum_{j=1}^N (\frac{1}{N l_{ij}^{(k)}}-1) (l_{ij}^{(\gamma)})^2X_j  \right\|_2\\
				\le & \max_{j\in \mathcal{J}_i^{(\gamma)}}  |\frac{1}{Nl_{ij}^{(k)}}-1| \|\tilde{X}_i\|_2,
			\end{aligned}
			\end{equation}
			where the inequality is due to  $l_{ij}^{(k)}>0$ for $j \in  \mathcal{J}_i^{(k)}$ and the fact that $\|D\|_2 \ge \|E\|_2$ holds for any symmetric matrices $ D, E$ satisfying $D \ge E \ge 0 $.

			On the other hand, since $\mathcal{L}$ is doubly stochastic and symmetric, it can be decomposed by
				$\mathcal{L} = U^{T} \Sigma U$, 
			where $U$ is an orthogonal matrix with the first row $\frac{1}{N} \mathbf{1}_N$, and $\Sigma = \diag_{i=1}^N(\lambda_i)$ where $\lambda_i \in \mathbb{R}$ is the $i$-th eigenvalue of $\mathcal{L}$ satisfying $1 = \lambda_1 \ge |\lambda_2| \ge \dots \ge |\lambda_N|$.
			Besides, since the graph is connected, $1$ is a simple eigenvalue of $\mathcal{L}$, and  $|\lambda_2|<1$.

			Thus we have $\| \mathcal{L}^{k}  - \frac{1}{N}\mathbf{1}_N \mathbf{1}_N^T\|_2 = |\lambda_2|^{k}$,
			which leads to
			\begin{equation}
				\textstyle	\label{eq:proof of lemma1 eq2}
				|l_{ij}^{(k)} - \frac{1}{N}| \le |\lambda_2|^k, \quad \forall i, j\in \mathcal{N}.
			\end{equation}
			
			Since $|\lambda_2| <1$, there exists $k_0\ge \gamma$ satisfying $|\lambda_2|^{k_0} < 1/N$. When $k \ge k_0$, from \eqref{eq:proof of lemma1 eq2} we get
			\begin{equation}
			\textstyle	\label{eq:proof of lemma1 eq4}
				l_{ij}^{(k)} \ge \frac{1}{N} - |\lambda_2|^k \ge  \frac{1}{N} - |\lambda_2|^{k _0} >0, \forall i, j \in \mathcal{N},
			\end{equation}
			thus
			\begin{equation}
				\textstyle \label{eq:proof of lemma1 eq3}
				|\frac{1}{Nl_{ij}^{(k)}}-1|  \le \frac{N|\lambda_2|^k}{1-N|\lambda_2|^k}.
			\end{equation}

			Substituting \eqref{eq:proof of lemma1 eq3} into \eqref{eq:direct-method-error-inequality}, the proof of 2) is completed.

			Now we are ready to prove 1). Utilizing the inequalities in \eqref{eq:direct-method-error-inequality}\eqref{eq:proof of lemma1 eq3}, it is straightforward to show that $\lim\limits_{k\rightarrow \infty}\|\tilde{U}_{i,k}-\tilde{X}_{i}\|_2=0$. 
			
			Because $\tilde{X}_i = \tilde{X}_i^T >0$ is symmetric, it has the following decomposition 
			\begin{equation}
				\label{eq:proof of Lemma1 eq6}
				\tilde{X}_i = V_i^T \diag(	\Sigma_{i}, \boldsymbol{O}) V_i,
			\end{equation}
			where $V_i$ is an orthogonal matrix and  $0<{\Sigma}_i \in \mathbb{R}^{r_i\times r_i}$ is a diagonal matrix.

			Since  $\tilde{X}_i =\sum_{j=1}^{N}(l_{ij}^{(\gamma)})^2 Y_j^TY_j $, any matrix $V_{i,\perp}$ satisfies $V_{i,\perp} \tilde{X}_{i} V_{i,\perp}^T = \boldsymbol{O}$ if and only if 
			\begin{equation}
				\label{eq:proof of Lemma1 eq5}
				Y_jV_{i,\perp}^T =\boldsymbol{O}, \forall j \in \mathcal{J}_{i}^{(\gamma)}.
			\end{equation}

			On the other hand, it follows from  \eqref{eq:proof of lemma1 eq4} that $l_{ij}^{(k)}>0$, $\forall k>k_0$, $\forall i,j \in \mathcal{N}$. Together with
			$	\tilde{U}_{i,k}  = \frac{1}{N}\sum_{j=1}^N \frac{(l_{ij}^{(\gamma)})^2}{l_{ij}^{(k)}}X_j$, we can derive that any matrix $V_{i,\perp}$ satisfies $V_{i,\perp} \tilde{U}_{i,k} V_{i,\perp}^T = \boldsymbol{O}$ if and only if \eqref{eq:proof of Lemma1 eq5} holds, when $k> k_0$.
			
			By \eqref{eq:proof of Lemma1 eq6}, $
				V_i \tilde{X}_{i} V_i^T = \diag(	\Sigma_{i}, \boldsymbol{O})$. Therefore, 
			\begin{equation*}
				V_i \tilde{U}_{i,k} V_i^T = \diag(	\Sigma_{i,k}, \boldsymbol{O}),
			\end{equation*}
			where  $0<{\Sigma}_{i,k} \in \mathbb{R}^{r_i\times r_i}$ .
			
			Since $\tilde{U}_{i,k}\rightarrow\tilde{X}_i$, it follows that $\Sigma_{i,k} \rightarrow \Sigma_{i}$, and  $\Sigma_{i,k}^{-1} \rightarrow \Sigma_{i}^{-1}$ \cite{stewart1969continuity}. Thus 
			\begin{equation*}
				\tilde{U}_{i,k}^\dagger = V_i^T \diag(	\Sigma_{i,k}^{-1}, \boldsymbol{O})  V_i\rightarrow   V_i^T  \diag(	\Sigma_{i}^{-1}, \boldsymbol{O}) V_i = \tilde{X}_i^\dagger.
			\end{equation*}  	
		\end{proof}	
		
		\subsection{Proof of Lemmas~\ref{lemma: a.s. mat prod mat inv}-\ref{lemma:stochastic method error}}
		In order to prove Lemma~\ref{lemma: a.s. mat prod mat inv} and Lemma~\ref{lemma:stochastic method error},
		the following definition and two lemmas are required, which can be found in Chapter 2.1  in \cite{van2000asymptotic}.

		\begin{lemma}[Strong Law of Large Number]
			\label{lemma:SLLN}
			Let $X_1, X_2, \dots $ be pairwise independent identically distributed random vectors with $E[|X_i|]<\infty$. Let $ E[X_i] = \mu$ and $S_k = X_1+\dots + X_k$. Then $S_k/k \rightarrow \mu$ almost surely as $k \rightarrow \infty$.
		\end{lemma}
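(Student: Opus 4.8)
The plan is to prove the statement by Etemadi's truncation method, which is designed precisely for the pairwise-independent hypothesis. First I would reduce to the scalar, non-negative case. Since $E[|X_i|]<\infty$, for each coordinate $\ell$ the sequence $\{[X_i]_\ell\}_{i\ge 1}$ is pairwise independent (any measurable function of an independent pair of vectors is an independent pair of scalars), identically distributed, with finite first moment; as the vectors have finitely many coordinates, almost sure convergence of each coordinate average to $\mu_\ell$ implies almost sure convergence of $S_k/k$ to $\mu$. Splitting each coordinate into its positive and negative parts reduces matters to $X_i\ge 0$ with $\mu:=E[X_1]<\infty$.

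Next I would truncate. Put $Y_k:=X_k\mathbf{1}\{X_k\le k\}$ and $T_n:=\sum_{k=1}^n Y_k$. Since $\sum_k P(X_k\ne Y_k)=\sum_k P(X_1>k)\le E[X_1]<\infty$, the first Borel--Cantelli lemma gives $X_k=Y_k$ for all large $k$ almost surely, so $S_n-T_n$ is eventually a fixed random constant and $(S_n-T_n)/n\to 0$ a.s.; it remains to show $T_n/n\to\mu$ a.s. Because $E[Y_k]=E[X_1\mathbf{1}\{X_1\le k\}]\to\mu$ by monotone convergence, the partial averages $E[T_n]/n$ also converge to $\mu$, so it suffices to prove the centred statement $(T_n-E[T_n])/n\to 0$ a.s.

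The core of the argument is a second-moment bound along a geometric subsequence. By pairwise independence, $\operatorname{Var}(T_n)=\sum_{k=1}^n\operatorname{Var}(Y_k)\le\sum_{k=1}^n E[X_1^2\mathbf{1}\{X_1\le k\}]$. Fixing $\alpha>1$ and $n_j:=\lfloor\alpha^j\rfloor$, Chebyshev's inequality together with an interchange of summation gives, for every $\varepsilon>0$,
\[
\sum_j P\!\left(|T_{n_j}-E[T_{n_j}]|>\varepsilon n_j\right)\le\frac{1}{\varepsilon^2}\sum_k E[X_1^2\mathbf{1}\{X_1\le k\}]\sum_{j:\,n_j\ge k}\frac{1}{n_j^2}\le\frac{C(\alpha)}{\varepsilon^2}\sum_k\frac{E[X_1^2\mathbf{1}\{X_1\le k\}]}{k^2},
\]
the inner geometric-type sum being $O(k^{-2})$, and the remaining series is at most a constant multiple of $E[X_1]<\infty$ by the elementary estimate $\sum_{k\ge x}k^{-2}\le 2/x$ for $x\ge 1$. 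Borel--Cantelli then yields $(T_{n_j}-E[T_{n_j}])/n_j\to 0$, hence $T_{n_j}/n_j\to\mu$, almost surely. It remains to fill the gaps: for $n_j\le n\le n_{j+1}$, monotonicity of $n\mapsto T_n$ (where $Y_k\ge 0$ is used) gives $\tfrac{n_j}{n_{j+1}}\tfrac{T_{n_j}}{n_j}\le\tfrac{T_n}{n}\le\tfrac{n_{j+1}}{n_j}\tfrac{T_{n_{j+1}}}{n_{j+1}}$, so that $\mu/\alpha\le\liminf_n T_n/n\le\limsup_n T_n/n\le\alpha\mu$ a.s., and letting $\alpha\downarrow 1$ through a countable sequence completes the proof. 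I expect the subsequence step to be the main obstacle: the geometric spacing must be fine enough that the double sum over $j$ collapses to a series bounded by $E[X_1]$ (rather than $E[X_1^2]$, which may be infinite), yet coarse enough that the monotonicity sandwich transfers the limit from $\{n_j\}$ to all $n$.
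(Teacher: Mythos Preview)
Your proposal is correct: this is Etemadi's proof of the strong law under pairwise independence, and the outline is complete and accurate, including the reduction to non-negative scalars, the truncation $Y_k=X_k\mathbf{1}\{X_k\le k\}$, the variance bound via pairwise independence, the geometric subsequence with the double-sum collapse to a quantity controlled by $E[X_1]$, and the monotonicity sandwich to pass from the subsequence to all $n$.

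The paper, however, does not prove this lemma at all. It is stated as a standard result and referenced to Chapter~2.1 of \cite{van2000asymptotic}; the lemma is invoked only as a tool in the proof of Lemma~\ref{lemma:SLLN for matrix}. So your proposal goes well beyond what the paper does: you supply a genuine self-contained argument (the classical Etemadi route), whereas the paper simply cites the literature. Both are valid ways to handle a textbook result of this kind; your version has the advantage of making the role of pairwise independence explicit (it enters only through $\operatorname{Var}(T_n)=\sum_k\operatorname{Var}(Y_k)$), which is exactly the hypothesis needed downstream for the sample-covariance application in Lemma~\ref{lemma:SLLN for matrix}.
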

		
		\begin{lemma}[Continuous Mapping]
			\label{lemma:AlmostSureContinuous}
			Let $g:\mathbb{R}^n \rightarrow \mathbb{R}^m$ be continuous at $X$. If $X_k \stackrel{a.s.}{\longrightarrow}$ X, then $g(X_k) \stackrel{a.s.}{\longrightarrow} g(X)$.
		\end{lemma}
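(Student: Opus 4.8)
The plan is to prove this as the ``almost sure'' companion of the ordinary continuous mapping theorem, reducing the convergence of $g(X_k)$ to an $\omega$-by-$\omega$ argument that uses nothing more than the sequential characterization of continuity in Euclidean space; I would keep the argument to a few lines and cite \cite{van2000asymptotic} for the reference statement. First I would isolate the full-probability event supplied by the hypothesis: set $A=\{\omega:\ \|X_k(\omega)-X(\omega)\|\to 0\}$, so that $\mathbb{P}(A)=1$ by the assumption $X_k\stackrel{a.s.}{\longrightarrow}X$. If one reads ``$g$ continuous at $X$'' in the general sense that $g$ is continuous at the (possibly random) point $X(\omega)$ for a.e.\ $\omega$, I would also set $B=\{\omega:\ g\text{ is continuous at }X(\omega)\}$ with $\mathbb{P}(B)=1$; in the way the lemma is actually applied later, $X$ is a deterministic matrix, so $B$ is the whole sample space and this bookkeeping is vacuous.

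Next I would fix any $\omega\in A\cap B$, an event of probability one. Then $X_k(\omega)\to X(\omega)$ as points of $\mathbb{R}^n$, and since $g$ is continuous at the point $X(\omega)$, sequential continuity gives $g(X_k(\omega))\to g(X(\omega))$ in $\mathbb{R}^m$; because all norms on $\mathbb{R}^m$ are equivalent, this convergence holds in the norm appearing in the statement. Hence $\|g(X_k)-g(X)\|\to 0$ on all of $A\cap B$, and therefore $\mathbb{P}\big(\lim_k\|g(X_k)-g(X)\|=0\big)\ge \mathbb{P}(A\cap B)=1$, which is exactly $g(X_k)\stackrel{a.s.}{\longrightarrow}g(X)$.

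The only point requiring a word of care --- and the closest thing to an obstacle --- is measurability: one must note that $g(X_k)$ is a bona fide random vector, so that the event $\{\lim_k\|g(X_k)-g(X)\|=0\}$ lies in the underlying $\sigma$-algebra. This is automatic, since a function continuous on the relevant set is Borel measurable and its composition with the measurable maps $X_k$ is therefore measurable; in the general version one additionally uses that the set of continuity points of any function is a Borel ($G_\delta$) set, so $B$ is measurable. Everything else is the elementary observation above. I would also remark that the matrix/Frobenius-norm statement invoked in the proof of Lemma~\ref{lemma: a.s. mat prod mat inv} is the immediate special case obtained by identifying matrices with vectors of their entries, so no separate argument is needed there.
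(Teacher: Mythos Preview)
Your argument is correct and is the standard $\omega$-by-$\omega$ proof of the almost-sure continuous mapping theorem. Note, however, that the paper does not actually supply a proof of this lemma: it is stated as one of the two preliminary lemmas ``which can be found in Chapter~2.1 in \cite{van2000asymptotic}'' and is simply cited. So there is nothing to compare on the level of proof strategy---your write-up goes strictly beyond what the paper does by making the argument self-contained, including the careful remarks on measurability and on the case where $X$ is random versus deterministic. That extra care is fine and harmless here; for the paper's purposes a bare citation suffices because the lemma is standard.
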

		
		To apply the strong law of large numbers on the sequence of  auxiliary variables $\tilde{U}_{i,k}$, a slight modification is needed, as stated in the following lemma.
		
		\begin{lemma}
			\label{lemma:SLLN for matrix}
			Let the sequence of pairwise independent random variables $\{u_k\}_{k=1}^\infty$ satisfy $u_k\sim \mathbf{N}_d(0,\Sigma)$, then ${\sum_{k = 1}^{n}u_k (u_k)^T}/{n} $ converges almost surely to $\Sigma$.

		\end{lemma}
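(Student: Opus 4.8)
The plan is to reduce the statement to the classical strong law of large numbers (Lemma~\ref{lemma:SLLN}) by vectorizing the sample second-moment matrices. Write $m$ for the dimension of each $u_k$ and set $X_k := \operatorname{vec}(u_k u_k^T)\in\mathbb{R}^{m^2}$. Since $u_k u_k^T$ is a measurable function of $u_k$ alone, the sequence $\{X_k\}_{k=1}^\infty$ inherits both pairwise independence and identical distribution from $\{u_k\}_{k=1}^\infty$, so it satisfies the structural hypotheses of Lemma~\ref{lemma:SLLN}.

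Next I would verify the two remaining hypotheses and then apply the lemma. For integrability, note that for any pair of indices $(p,q)$ the Cauchy--Schwarz inequality gives $\mathbb{E}\bigl[\,|[u_k]_p\,[u_k]_q|\,\bigr]\le \sqrt{\mathbb{E}[[u_k]_p^2]\,\mathbb{E}[[u_k]_q^2]}=\sqrt{\Sigma_{pp}\Sigma_{qq}}<\infty$, because a (possibly degenerate) Gaussian has finite second moments; summing over the $m^2$ entries yields $\mathbb{E}[\|X_k\|_1]<\infty$, hence $\mathbb{E}[\|X_k\|_2]<\infty$. For the mean, write $\Sigma=W I_r W^T$ with $r=\operatorname{rank}(\Sigma)$; by the definition of $\mathbf{N}_d$ the variable $z_k:=W^\dagger u_k\sim\mathbf{N}(0,I_r)$, and since $u_k$ lies a.s.\ in $\operatorname{range}(\Sigma)=\operatorname{range}(W)$ we have $u_k=WW^\dagger u_k=Wz_k$ a.s., whence $\mathbb{E}[u_k u_k^T]=W\mathbb{E}[z_k z_k^T]W^T=W I_r W^T=\Sigma$, i.e.\ $\mathbb{E}[X_k]=\operatorname{vec}(\Sigma)$. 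Lemma~\ref{lemma:SLLN} then gives $\tfrac1n\sum_{k=1}^n X_k\to\operatorname{vec}(\Sigma)$ almost surely. Since $\operatorname{vec}$ is a linear bijection between $\mathbb{R}^{m\times m}$ with the Frobenius norm and $\mathbb{R}^{m^2}$ with the Euclidean norm, applying the continuous mapping lemma (Lemma~\ref{lemma:AlmostSureContinuous}) to $\operatorname{vec}^{-1}$ yields $\tfrac1n\sum_{k=1}^n u_k u_k^T\to\Sigma$ almost surely in Frobenius norm, which is exactly the convergence of Definition~\ref{definition:AlmostSureConvergenceMatrix}.

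I do not expect a deep obstacle here; the parts that need the most care are (i) the integrability check $\mathbb{E}[\|X_k\|_1]<\infty$, which is precisely where the Gaussian (hence finite-second-moment) assumption enters, and (ii) handling the degeneracy of $\Sigma$, i.e.\ confirming that $\mathbb{E}[u_k u_k^T]=\Sigma$ continues to hold when $\Sigma$ is singular --- this is immediate from the representation $u_k=Wz_k$ supplied by the definition of $\mathbf{N}_d$. An equivalent route, if one prefers to avoid vectorization, is to run the classical scalar SLLN on each of the $m^2$ sequences $\{[u_k]_p[u_k]_q\}_k$ and then invoke equivalence of matrix norms; the two arguments are interchangeable.
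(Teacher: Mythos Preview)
Your proposal is correct and is essentially the paper's own argument: the paper applies the scalar SLLN (Lemma~\ref{lemma:SLLN}) entrywise to the sequences $\{u_k(i)u_k(j)\}_k$, bounding $\mathbb{E}[|u_k(i)u_k(j)|]\le \mathbb{E}[u_k(i)^2+u_k(j)^2]\le\operatorname{Tr}(\Sigma)$ and using $\mathbb{E}[u_ku_k^T]=\Sigma$ --- precisely the ``equivalent route'' you describe at the end. Your vectorized presentation and the Cauchy--Schwarz bound in place of the elementary inequality are cosmetic variations of the same proof.
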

		\begin{proof}
			It only needs to prove that each element of the random matrix $ u_k (u_k)^T$ converges almost surely to the corresponding element in $\Sigma$. 
			
			Since $u_k\sim \mathbf{N}_d(0,\Sigma)$, we have $\mathbb{E}[u_k (u_k)^T] = \Sigma$. Furthermore, denote $u_k(i)$ as the $i$-th entry of $u_k$, the sequence $\{u_k(i) u_k(j)\}_{k=1}^\infty $ is pairwise independent identically distributed.  Together with 
			\begin{equation*}
					\mathbb{E}[\left|u_k(i) u_k(j) \right|] \le \mathbb{E}[(u_k(i))^2+ (u_k(j))^2] \le \operatorname{Tr}\{\Sigma\} < \infty.
			\end{equation*}
			and Lemma~\ref{lemma:SLLN}, the proof is completed. 
		\end{proof}

		\begin{proof}[\textbf{Proof of Lemma~\ref{lemma: a.s. mat prod mat inv}}]
			\begin{enumerate}[1)]
				\item From Definition 1, we have 
				\begin{equation*}
					\small \begin{aligned}
						 \mathbb{P}(\lim_{k\rightarrow \infty} \|D_k - D\|_F = 0)=
						  \mathbb{P}(\lim_{k\rightarrow \infty} \|E_k - E\|_F = 0) = 1.
					\end{aligned}
				\end{equation*}
				Since $\lim_{k\rightarrow\infty}D_k E_k = DE$ when $\lim_{k\rightarrow\infty}D_k = D$ and $\lim_{k\rightarrow\infty}E_k = E$ are satisfied, we have 
					\begin{equation*}
					\begin{aligned}
						&\mathbb{P}(\lim_{k\rightarrow \infty} \|D_kE_k - DE\|_F = 0) \\
						\ge &\mathbb{P}(\lim_{k\rightarrow \infty} \|D_k - D\|_F = 0\text{\,and\,} \lim_{k\rightarrow \infty} \|E_k - E\|_F = 0)\\
						 = &1,
					\end{aligned}
				\end{equation*}
				where the  last equation uses the fact that if $\mathbb{P}(A)=1$ and $\mathbb{P}(B)=1$, then $\mathbb{P}(A\bigcap B) = 1$.
				\item It is worth noting that $D_k$ might be singular for some  $k\in \mathbb{Z}^+$. However, since $D_k$ converges to the nonsingular matrix $D$ almost surely, there exists sufficiently large $k_0$ such that $D_k$ is nonsingular  for all $k \ge k_0$ with probability 1. Therefore $D_k^\dagger = D_k^{-1}$, $\forall k>k_0$, with probability 1.  Since the inverse of a  nonsingular matrix is a continuous function of the elements of the matrix\cite{stewart1969continuity}, by Lemma~\ref{lemma:AlmostSureContinuous}, we get $D_k^{\dagger}\stackrel{a.s.}{\longrightarrow}D^{-1}$.			
			\end{enumerate}
		\end{proof}
	
				\begin{lemma}
			\label{lemma:WishartDistribution}
			\textnormal{(\cite[Definition 3.4.1]{mardia1979multivariate})} Let $M = \sum_{k=1}^{m}x_kx_k^T$, where  $x_i\sim {N}(0,\Sigma)\in \mathbb{R}^{p } $. Then $M$ has a Wishart distribution with scale matrix $\Sigma $ and degrees of freedom parameter $m$, and we denote it as $M\sim W_p(\Sigma,m)$.
			
			The following properties of Wishart Distribution is helpful:
			\begin{enumerate}[1)]
				\item \textnormal{(\cite[Definition 3.4.1]{mardia1979multivariate})} $\mathbb{E}(M) = m\Sigma$.
				\item \textnormal{(\cite[Theorem 3.1.4]{muirhead2009aspects})} The matrix $M$ is positive definite with probability 1 if and only if  $m \ge p$.
				\item \textnormal{(\cite[Chapter 3.8]{mardia1979multivariate})} If $M\sim W_p(\Sigma,m)$ where $m\ge p$, then $M^{-1}$ is said to have an Inverse-Wishart distribution $W_p^{-1}(\Sigma,m)$. In addition, if $m \ge p+2$, $\mathbb{E}(M^{-1})=\frac{\Sigma^{-1}}{m-p-1}$.		
				\item 	\textnormal{(\cite[Chapter 9.3.1]{holgerssonrecent})}  Let $M \sim W_p(\Sigma,m), m>p$. Then\\
					$\mathbb{E}[M^2] = (m^2+m)\Sigma^2 + m \operatorname{Tr}(\Sigma)\Sigma$, \\
					$\mathbb{E}[{M}^{-2}]=\frac{(m-p-1)}{d} \Sigma^{-2}+\frac{1}{d} \operatorname{Tr}[\Sigma^{-1}] \Sigma^{-1}$ for $m-p-3>0$, where $d = (m-p)(m-p-1)(m-p-3)$.
			\end{enumerate}
		\end{lemma}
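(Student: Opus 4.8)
The statement is a compendium of classical facts about the Wishart and inverse-Wishart distributions, so the plan is to verify each item, quoting the standard references \cite{mardia1979multivariate,muirhead2009aspects,holgerssonrecent} wherever a derivation would be purely routine and sketching the short arguments otherwise. The definition itself, $M=\sum_{k=1}^{m}x_kx_k^T$ with $x_k\sim N(0,\Sigma)$ i.i.d., is exactly the constructive definition of $W_p(\Sigma,m)$. Item 1) is immediate from linearity of expectation, $\mathbb{E}[M]=\sum_{k=1}^{m}\mathbb{E}[x_kx_k^T]=m\Sigma$. Item 2) is the usual rank criterion: the vectors $x_1,\dots,x_m$ almost surely span $\mathbb{R}^p$ exactly when $m\ge p$, since any fixed proper subspace has Gaussian measure zero, and $M>0$ iff those vectors span. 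Item 3) is just the definition of $W_p^{-1}(\Sigma,m)$ together with its first moment $\mathbb{E}[M^{-1}]=\Sigma^{-1}/(m-p-1)$, valid once $m\ge p+2$, which is the standard inverse-Wishart mean.

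For item 4), I would obtain the forward second moment $\mathbb{E}[M^2]$ by a direct Gaussian moment computation: expand $M^2=\sum_{k,\ell}(x_k^Tx_\ell)\,x_kx_\ell^T$ and separate the $k=\ell$ and $k\neq\ell$ contributions. For $k\neq\ell$, independence gives $\mathbb{E}[(x_k^Tx_\ell)x_kx_\ell^T]=\mathbb{E}_{x_\ell}[\Sigma\,x_\ell x_\ell^T]=\Sigma^2$, and there are $m(m-1)$ such pairs. For $k=\ell$, Isserlis' (Wick's) theorem yields $\mathbb{E}[(x_k^Tx_k)\,x_kx_k^T]=\operatorname{Tr}(\Sigma)\Sigma+2\Sigma^2$; summing the $m$ diagonal terms and adding $m(m-1)\Sigma^2$ gives $(m^2+m)\Sigma^2+m\operatorname{Tr}(\Sigma)\Sigma$, as claimed. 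The inverse second moment $\mathbb{E}[M^{-2}]$ I would simply quote from \cite{holgerssonrecent}, since, unlike the forward moments, it does not follow from a short Wick calculation; it rests on the moment theory of the inverse-Wishart law (for instance the known form of $\mathbb{E}[M^{-1}\otimes M^{-1}]$ followed by a partial trace, or Haff/Stein-type matrix identities).

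The main obstacle is precisely that last ingredient, $\mathbb{E}[M^{-2}]$: the forward quantities are elementary, whereas the inverse-Wishart second moment genuinely requires the more delicate inverse-Wishart moment formulas, so the honest route is to rely on the cited reference rather than re-derive it here. Everything else is bookkeeping with linearity of expectation, the rank argument, and the definitions.
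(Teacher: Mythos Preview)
Your proposal is correct. In fact, the paper does not prove this lemma at all: it is stated purely as a collection of cited results from \cite{mardia1979multivariate}, \cite{muirhead2009aspects}, and \cite{holgerssonrecent}, with no accompanying argument. Your sketch therefore goes strictly beyond what the paper does. The short derivations you give for items 1)--3) and for $\mathbb{E}[M^2]$ in item 4) are all sound (the Wick/Isserlis bookkeeping for the diagonal term $\mathbb{E}[(x_k^Tx_k)x_kx_k^T]=\operatorname{Tr}(\Sigma)\Sigma+2\Sigma^2$ and the off-diagonal term $\Sigma^2$ indeed combine to $(m^2+m)\Sigma^2+m\operatorname{Tr}(\Sigma)\Sigma$), and your decision to defer $\mathbb{E}[M^{-2}]$ to the inverse-Wishart moment literature is exactly what the paper does as well. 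There is nothing to correct.
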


		\begin{proof}[\textbf{Proof of Lemma~\ref{lemma:stochastic method error}}]
			It is worth noting that $\tilde{X}_i  $ might be singular, thus we need to make some preparations before applying Lemma~\ref{lemma: a.s. mat prod mat inv} and Lemma~\ref{lemma:WishartDistribution}.
			
			Note that  $\tilde{X}_i  = \tilde{X}_i^T \ge 0$, it has the following decomposition
			\begin{equation}
				\label{eq:proof of lemma 3 eq 1}
				\tilde{X}_i = V_i^T  \begin{bmatrix}
					\Sigma_{i} &\\ &\boldsymbol{O}_{n-r_i}
				\end{bmatrix} V_i,
			\end{equation} where $V_i$ is an orthogonal matrix and  ${\Sigma}_i > 0$ is a diagonal matrix. Together with \eqref{eq:stochastic method eq1}, we get
			\begin{equation*}
				V_i \tilde{X}_i V_i^T  = \sum_{j=1}^N (l_{ij}^{(\gamma)})^2 V_i Y_j^T Y_j V_i^T = \begin{bmatrix}
					\Sigma_{i} &\\ &\boldsymbol{O}_{n-r_i}
				\end{bmatrix},
			\end{equation*} therefore $V_i Y_j^T$ must be in the form
			\begin{equation*}
				V_i Y_j^T = \begin{bmatrix}
					*\\\boldsymbol{O}_{n-r_i}
				\end{bmatrix}, \quad j \in \mathcal{J}_{i}^{(\gamma)}.
			\end{equation*}
			
			By \eqref{eq:stochastic method eq1}, $\tilde{\Upsilon}_{i,k} = \frac{1}{k}\sum_{s=1}^{k} \sum_{r,t=1}^{N}l_{ir}^{(\gamma)} l_{it}^{(\gamma)} Y_r^T\theta_{r,s}\theta_{t,s}^T Y_t $, and 
			therefore $V_i \tilde{\Upsilon}_{i,k} V_i^T$ must be in the form
			\begin{equation}
					\label{eq:proof of lemma 3 eq 2}
				V_i \tilde{\Upsilon}_{i,k} V_i^T= \begin{bmatrix}
					\Sigma_{i,k} &\\ &\boldsymbol{O}_{n-r_i}
				\end{bmatrix},
			\end{equation}
			where $\Sigma_{i,k}\ge 0$.
			
			 Denote $E_i = \begin{bmatrix} I_{r_i}\\ \boldsymbol{O}_{n-r_i}\end{bmatrix}$, \eqref{eq:proof of lemma 3 eq 1} and \eqref{eq:proof of lemma 3 eq 2} can be summarized as
			\begin{equation}
				\label{eq: proof of lemma3 eq5}
					\tilde{X}_i = V_i^T  E_i  \Sigma_i  E_i^T V_i, \;\tilde{\Upsilon}_{i,k} =   V_i^T  E_i  \Sigma_{i,k} E_i^T V_i.
			\end{equation}

			From $\theta_{i,k}\sim \mathbf{N}(0, I_n)$ and \eqref{eq:stochastic method eq2}, \eqref{eq:stochastic method eq1}, we have
			$ \tilde{\upsilon}_{i,k} \sim \mathbf{N}_d(0,  \tilde{X}_i)$ and thus $E_i^TV_i \tilde{\upsilon}_{i,k} \sim \mathbf{N}(0, \Sigma_{i})$. Together with  Lemma~\ref{lemma:WishartDistribution} and 
			\begin{equation*}
				\begin{aligned}
					\Sigma_{i,k} &= E_i ^T V_i \tilde{\Upsilon}_{i,k} V_i^T E_i \\
					&= \frac{1}{k}\sum_{s=1}^{k} E_i ^T V_i\tilde{\upsilon}_{i,s}^{(\gamma)} (E_i ^T V_i\tilde{\upsilon}_{i,s}^{(\gamma)})^T, 
				\end{aligned}
			\end{equation*}
			we can conclude that $\Sigma_{i,k} \sim W_{r_i}(\frac{1}{k}\Sigma_i,k)$. Since $k >n+3 > r_i$, it still holds that
			$\mathbb{P}(\Sigma_{i,k} >0 ) = 1$ and $\Sigma_{i,k}^{-1} \sim W_{r_i}^{-1}(\frac{1}{k}\Sigma_i,k)$.
			
			Now we are ready to verify the assertions in this lemma.
			
			\begin{enumerate}[1)]
			
			\item	By Lemma~\ref{lemma:SLLN for matrix}, it holds that $ \tilde{\Upsilon}_{i,k} \stackrel{a.s.}{\longrightarrow} \tilde{X}_i $, and hence $\Sigma_{i,k}  \stackrel{a.s.}{\longrightarrow} \Sigma_i$ using \eqref{eq: proof of lemma3 eq5}. By Lemma~\ref{lemma: a.s. mat prod mat inv}, $\Sigma_{i,k}^{-1}  \stackrel{a.s.}{\longrightarrow} \Sigma_i^{-1}$ and thus $ 	\tilde{\Upsilon}_{i,k}^\dagger \stackrel{a.s.}{\longrightarrow}   \tilde{X}_i^\dagger $.
			
			\item 
			Considering \eqref{eq:stochastic method eq1} and Lemma~\ref{lemma:WishartDistribution}, it can be verified directly.
			
			\item By exploiting Lemma~\ref{lemma:WishartDistribution} and \eqref{eq: proof of lemma3 eq5}, we have
			\begin{equation*}
				\begin{aligned}
					&\textstyle \mathbb{E}[\tilde{\Upsilon}_{i,k}] = V_i^T E_i \mathbb{E}[\Sigma_{i,k}] E_i^TV_i= \tilde{X}_i,\\
					&\textstyle \mathbb{E}[\tilde{\Upsilon}_{i,k}^{\dagger}] = V_i^T E_i \mathbb{E}[\Sigma_{i,k}^{-1}] E_i^TV_i = \frac{k}{k -r_i -1}\tilde{X}_i^{\dagger}. 
				\end{aligned}
			\end{equation*}
			
			\item By exploiting Lemma~\ref{lemma:WishartDistribution}, denote $k_1:= k-r_i$,  we have that
		\begin{equation}
				\begin{aligned}
					\label{eq: proof of lemma 3 eq 3}
					&\mathbb{E}[\|\Sigma_{i,k} - \Sigma_i\|_F^2 ]\\
					=  &\mathbb{E}[\operatorname{Tr}[(\Sigma_{i,k} - \Sigma_i)^T(\Sigma_{i,k} - \Sigma_i)]] \\
					=  &\operatorname{Tr}(\mathbb{E}[\Sigma_{i,k}^2 ]) -  2 \operatorname{Tr}(\mathbb{E}[\Sigma_{i,k}] \Sigma_i) + \operatorname{Tr}(\Sigma_i^2 ) \\
					= &\textstyle \frac{1}{k} \operatorname{Tr}( \Sigma_i^2) + \frac{1}{k}[ \operatorname{Tr}( \Sigma_i)]^2,\\
					&\mathbb{E}[\|  \Sigma_{i,k}^{-1} - \Sigma_i^{-1} \|_F^2 ]	\\
					=  &\mathbb{E}[\operatorname{Tr}[(\Sigma_{i,k}^{-1} - \Sigma_i^{-1})^T(\Sigma_{i,k}^{-1} - \Sigma_i^{-1})]] \\
					= & \operatorname{Tr}(\mathbb{E}[\Sigma_{i,k}^{-2} ]) -  2\operatorname{Tr}(\mathbb{E}[\Sigma_{i,k}^{-1}] \Sigma_i^{-1})  + \operatorname{Tr}(\Sigma_i^{-2} ) \\
					=&\textstyle \frac{k^2 + k (r_i^2 + 2 r_i + 3 ) - (r_i^3 + 4 r_i^2 + 3 r_i)}{(k - r_i - 3) (k - r_i - 1) (k - r_i) } \operatorname{Tr}( \Sigma_i^{-2})\\
					&+ \textstyle \frac{k^2}{(k - r_i - 3) (k - r_i - 1) (k - r_i) } [\operatorname{Tr}(\Sigma_i^{-1})]^2.
				\end{aligned}
			\end{equation}

			Since the Frobenius norm is invariant under multiplication by an orthogonal matrix, we have
			\begin{equation}
				\begin{aligned}
					\mathbb{E}[\|\tilde{\Upsilon}_{i,k}- \tilde{X}_i\|_F^2 ] =& \mathbb{E}[\|V_i (\tilde{\Upsilon}_{i,k}- \tilde{X}_i)V_i^T\|_F^2 ]\\
					=&\mathbb{E}[\|E_i(\Sigma_{i,k} - \Sigma_i)E_i^T\|_F^2 ]\\
					=&\mathbb{E}[\|\Sigma_{i,k} - \Sigma_i\|_F^2 ],\\
				\mathbb{E}[\|  \tilde{\Upsilon}_{i,k}^{\dagger} -\tilde{X}_i^{\dagger} \|_F^2 ] =& \mathbb{E}[\|V_i (\tilde{\Upsilon}_{i,k}^\dagger-\tilde{X}_i^{\dagger})V_i^T\|_F^2 ]\\
					=&\mathbb{E}[\|  E_i(\Sigma_{i,k}^{-1} - \Sigma_i^{-1})E_i^T \|_F^2 ]\\
					=&\mathbb{E}[\|  \Sigma_{i,k}^{-1} - \Sigma_i^{-1} \|_F^2 ].
				\end{aligned}
			\end{equation}
			
			On the other hand, since $\operatorname{Tr}(DE) = \operatorname{Tr}(ED)$ for any matrices $D,E$ of appropriate sizes, it follows from \eqref{eq:proof of lemma 3 eq 1} that
			\begin{equation}
				\label{eq: proof of lemma 3 eq 4}
				\begin{aligned}
				&\operatorname{Tr}( \tilde{X}_i) =  \operatorname{Tr}( \Sigma_i ), &
				&\operatorname{Tr}( \tilde{X}_i^2)  =  \operatorname{Tr}( \Sigma_i^2 ),\\
				&\operatorname{Tr}(\tilde{X}_i^{\dagger}) =  \operatorname{Tr}( \Sigma_i^{-1}), &
				&\operatorname{Tr}( (\tilde{X}_i^{\dagger})^2)=  \operatorname{Tr}( \Sigma_i^{-2} ).
				\end{aligned}
			\end{equation}
			By Combining \eqref{eq: proof of lemma 3 eq 3}-\eqref{eq: proof of lemma 3 eq 4}, the proof is completed.
		\end{enumerate}
		\end{proof}	
		\subsection{Proof of Lemma~\ref{lemma: R dagger}}
		\begin{proof}[\textbf{Proof of Lemma~\ref{lemma: R dagger}}]
				In Algorithms~1-4,  denote 
			\begin{equation*}
				\bar{C} =\col_{j=1}^N(R_j^{-\frac{1}{2}}C_j), \breve{C}_i = \col_{j=1}^N(l_{ij}^{(\gamma)}R_j^{-\frac{1}{2}}C_j),
			\end{equation*}
			we have
			\begin{equation*}
				\tilde{C}_i = \bar{C}^T\breve{C}_i = \breve{C}_i^T \bar{C},\quad \tilde{R}_i = \breve{C}_i^T\breve{C}_i.
			\end{equation*}
			Thus
			\begin{equation*}
				\tilde{C}_i^T(\tilde{R}_i)^\dagger \tilde{C}_i = \bar{C}^T\breve{C}_i (\breve{C}_i^T\breve{C}_i)^\dagger  \breve{C}_i^T \bar{C}.
			\end{equation*}
			
			Denote the number of rows of  $\breve{C}_i$ as $\tilde{m}$ and $\text{rank}(\breve{C}_i ) $ as $ r_i$, we have the singular value decomposition 
			\begin{equation*}
				\breve{C}_i = U_i \begin{bmatrix}
					\tilde{\Sigma}_i & \boldsymbol{O}_{n-r_i}\\ \boldsymbol{O}_{\tilde{m}-r_i} & \boldsymbol{O}_{\tilde{m}-r_i \times n-r_i}
				\end{bmatrix}V_i^T := U_i \Sigma_i V_i^T ,
			\end{equation*}
			where $U_i$ and $V_i $ are  orthogonal matrices, and $\tilde{\Sigma}_i>0$ is a diagonal matrix. Utilizing this decomposition, we get
			\begin{equation}
				\label{eq:lemma R dagger eq1}
				\begin{aligned}
					\tilde{C}_i^T(\tilde{R}_i)^\dagger \tilde{C}_i 
					& =\bar{C}^TU_i \Sigma_i V_i^T ( V_i \Sigma_i^T \Sigma_i V_i^T)^\dagger   V_i \Sigma_i^T U_i^T \bar{C}\\
					& = \bar{C}^TU_i \Sigma_i ( \Sigma_i^T \Sigma_i )^\dagger \Sigma_i^T U_i^T \bar{C}\\
					& =  \bar{C}^TU_i \begin{bmatrix}
						I_{r_i} &\\& \boldsymbol{O}_{\tilde{m}-r_i}
					\end{bmatrix} U_i^T \bar{C}.
				\end{aligned}
			\end{equation}
			On the other hand, let 
			\begin{equation*}
				\breve{R}_i = V_i\begin{bmatrix}
					\tilde{\Sigma}_i^2 &\\& \boldsymbol{I}_{n-r_i}
				\end{bmatrix} V_i^T.
			\end{equation*}
			It is obvious that $\breve{R}_i$ is positive definite and satisfies
			\begin{equation}
				\label{eq:lemma R dagger eq2}
				\begin{aligned}
					&\tilde{C}_i^T(\breve{R}_i)^{-1} \tilde{C}_i \\
					= &\bar{C}^TU_i \Sigma_i V_i^T 
					(V_i\begin{bmatrix}
						\tilde{\Sigma}_i^2 &\\& \boldsymbol{I}_{n-r_i},
					\end{bmatrix} V_i^T)^{-1}   
					V_i \Sigma_i^T U_i^T \bar{C}\\
					=&  \bar{C}^TU_i \begin{bmatrix}
						I_{r_i} &\\& \boldsymbol{O}_{\tilde{m}-r_i}
					\end{bmatrix} U_i^T \bar{C}.
				\end{aligned}
			\end{equation}
			
			By comparing \eqref{eq:lemma R dagger eq1} and \eqref{eq:lemma R dagger eq2}, the proof is completed.
			\end{proof}
	
	\bibliographystyle{IEEEtran}
	\bibliography{F013W02V3}
	
	\begin{IEEEbiography}[{\includegraphics[width=1in,height=1.25in,clip,keepaspectratio]{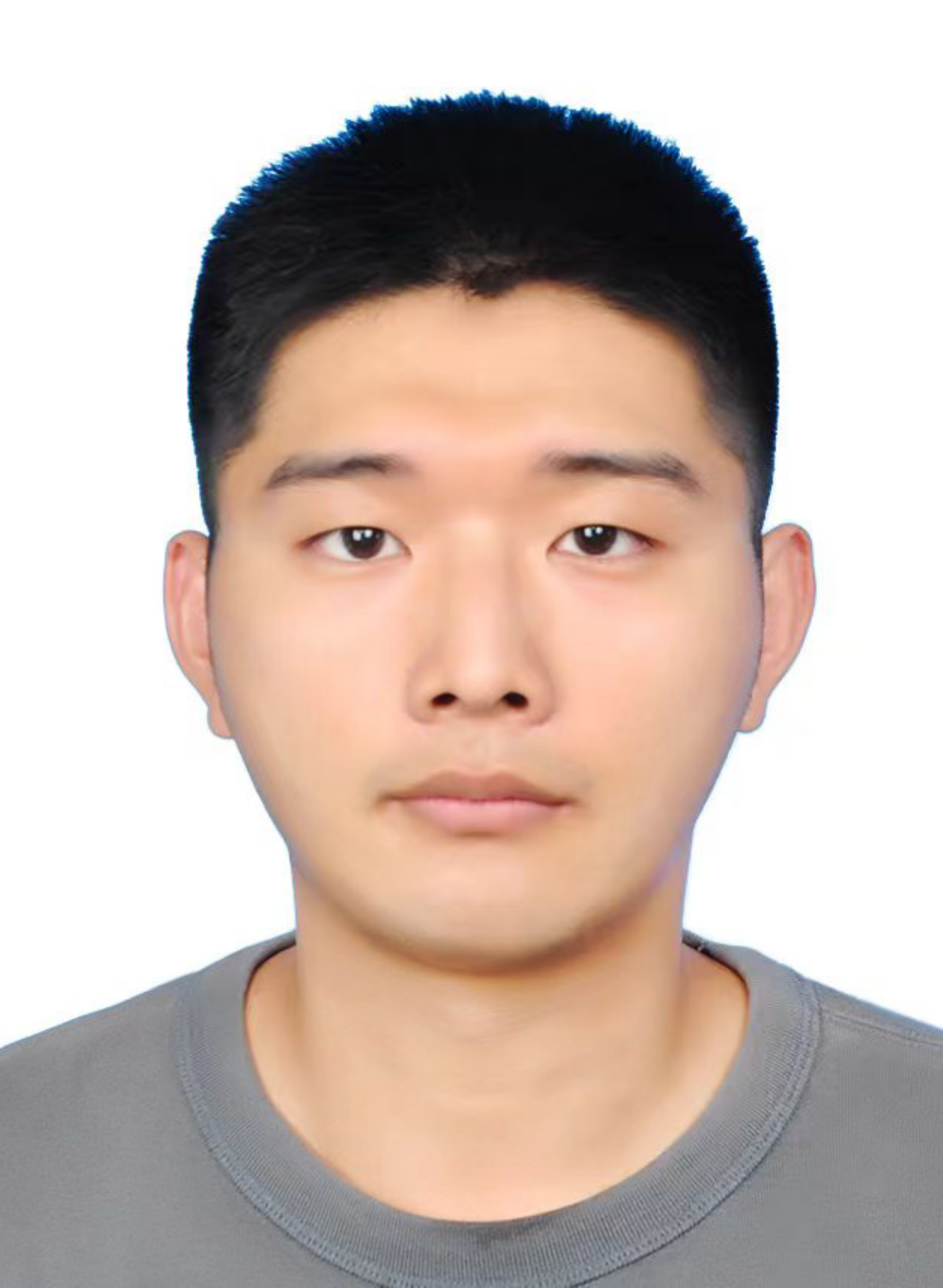}}]{Tuo Yang}
		received the B.E. degree from the College of Engineering,
		Peking University, Beijing, China, in 2021.
		Currently, he is working towards the Ph.D. degree
		at College of Engineering, Peking University.
		His research interests include distributed state estimation, moving-horizon estimation and cooperative control of multi-agent systems.
	\end{IEEEbiography}
	
	\begin{IEEEbiography}
		[{\includegraphics[width=1in,height=1.25in,clip,keepaspectratio]{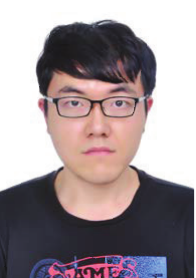}}]{Jiachen Qian}
		\textcolor{black}{received his B.S. degree in College of
			Engineering from Peking University, Beijing, China,
			in 2019. He is now a Ph.D. candidate in Department of Mechanics and Engineering Science, College
			of Engineering at Peking University. His research
			interests include distributed state estimation, event-based state estimation and cooperative control of
			networked systems.}
	\end{IEEEbiography}

	\begin{IEEEbiography}[{\includegraphics[width=1in,height=1.25in,clip,keepaspectratio]{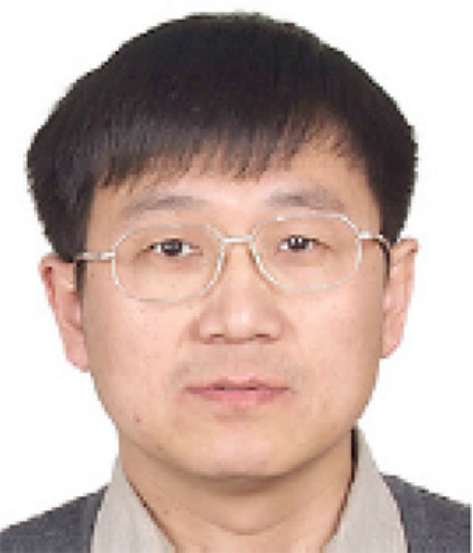}}]{Zhisheng Duan} (Senior Member, IEEE)
		received the M.S. degree in Mathematics from Inner Mongolia University, China and the Ph.D. degree in Control Theory from Peking University, China in 1997 and 2000, respectively.
		From 2000 to 2002, he worked as a Postdoctor in Peking University. Since 2008, he has been a full Professor with the Department of Mechanics and Engineering Science, College of Engineering, Peking University. His research interests include
		robust control, stability of interconnected systems, flight control, and analysis and control of complex dynamical networks. 
		
		Professor Duan received the Guan-Zhao Zhi Best Paper Award at the 2001 Chinese Control Conference and the 2011 first class Award in Natural Science from Chinese Ministry of Education. He obtained the outstanding young scholar award from the National Natural Science Foundation in China and he is currently a Cheung Kong Scholar in Peking University.
	\end{IEEEbiography}

	\begin{IEEEbiography}[{\includegraphics[width=1in,height=1.25in,clip,keepaspectratio]{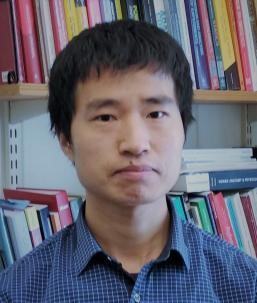}}]{Zhiyong Sun} (Member, IEEE) received the Ph.D. degree from The Australian National University (ANU) Canberra ACT, Australia, in February 2017. He was a Research Fellow/Lecturer with the Research School of Engineering, ANU, from 2017 to 2018. From June 2018 to January 2020, he worked as a postdoctoral researcher at the Department of Automatic Control, Lund University of Sweden. Since January 2020 he has joined Eindhoven University of Technology (TU/e) as an assistant professor. He has won the Springer Best PhD Thesis Award, and several best paper and student paper awards from IEEE CDC, AuCC, ICRA and CCTA. His research interests include multi-agent systems, control of autonomous formations, distributed control and optimization.  He has authored the book titled Cooperative Coordination and Formation Control for Multi-agent Systems (Springer, 2018).
	\end{IEEEbiography}

\end{document}